\def\bea{\begin{eqnarray}}
\def\eea{\end{eqnarray}}
\def\nn{\nonumber}
\newtheorem{definition}{Definition}[section]
\newtheorem{theorem}{Theorem}[section]
\newtheorem{corollary}{Corollary}[section]
\newcommand{\subalign}[1]{%
  \vcenter{%
    \Let@ \restore@math@cr \default@tag
    \baselineskip\fontdimen10 \scriptfont\tw@
    \advance\baselineskip\fontdimen12 \scriptfont\tw@
    \lineskip\thr@@\fontdimen8 \scriptfont\thr@@
    \lineskiplimit\lineskip
    \ialign{\hfil$\m@th\scriptstyle##$&$\m@th\scriptstyle{}##$\hfil\crcr
      #1\crcr
    }%
  }%
}
\def\l@subsection#1#2{}
\def\l@subsubsection#1#2{}
\tikzset{snake it/.style={decorate, decoration=snake}}
\DeclareSymbolFont{usualmathcal}{OMS}{cmsy}{m}{n}
\DeclareSymbolFontAlphabet{\mathcal}{usualmathcal}
\begin{document}

\pagestyle{SPstyle}

\begin{center}{\Large \textbf{\color{scipostdeepblue}{
Fractal decompositions and tensor network representations of Bethe wavefunctions\\
}}}\end{center}

\begin{center}\textbf{
Subhayan Sahu\textsuperscript{1$\star$} and
Guifre Vidal\textsuperscript{2} 
}\end{center}

\begin{center}
{\bf 1} Perimeter Institute for Theoretical Physics, Waterloo, Ontario N2L 2Y5, Canada
\\
{\bf 2} Google Quantum AI, Goleta, CA 93111, USA

$\star$ \href{mailto:ssahu@perimeterinstitute.ca}{\small ssahu@perimeterinstitute.ca}\,
\end{center}

\section*{\color{scipostdeepblue}{Abstract}}
\textbf{\boldmath{%
We investigate the entanglement structure of a generic $M$-particle Bethe wavefunction (not necessarily an eigenstate of an integrable model) on a 1d lattice by dividing the lattice into $L$ parts and decomposing the wavefunction into a sum of products of $L$ local wavefunctions. Using the fact that a Bethe wavefunction accepts a \textit{fractal} multipartite decomposition -- it can always be written as a linear combination of $L^M$ products of $L$ local wavefunctions, where each local wavefunction is in turn also a Bethe wavefunction -- we then build \textit{exact, analytical} tensor network representations with finite bond dimension $\chi=2^M$, for a generic planar tree tensor network (TTN), which includes a matrix product states (MPS) and a regular binary TTN as prominent particular cases. For a regular binary tree, the network has depth $\log_{2}(N/M)$ and can be transformed into an adaptive quantum circuit of the same depth, composed of unitary gates acting on $2^M$-dimensional qudits and mid-circuit measurements, that deterministically prepares the Bethe wavefunction. Finally, we put forward a much larger class of \textit{generalized} Bethe wavefunctions, for which the above decompositions, tensor network and quantum circuit representations are also possible.
}}

\vspace{\baselineskip}

\noindent\textcolor{white!90!black}{%
\fbox{\parbox{0.975\linewidth}{%
\textcolor{white!40!black}{\begin{tabular}{lr}%
  \begin{minipage}{0.6\textwidth}%
    {\small Copyright attribution to authors. \newline
    This work is a submission to SciPost Physics Core. \newline
    License information to appear upon publication. \newline
    Publication information to appear upon publication.}
  \end{minipage} & \begin{minipage}{0.4\textwidth}
    {\small Received Date \newline Accepted Date \newline Published Date}%
  \end{minipage}
\end{tabular}}
}}
}


\vspace{10pt}
\noindent\rule{\textwidth}{1pt}
\tableofcontents
\noindent\rule{\textwidth}{1pt}
\vspace{10pt}


\section{Introduction}
\label{sec:Intro}

\begin{figure}
    \centering
    \includegraphics[width = \columnwidth]{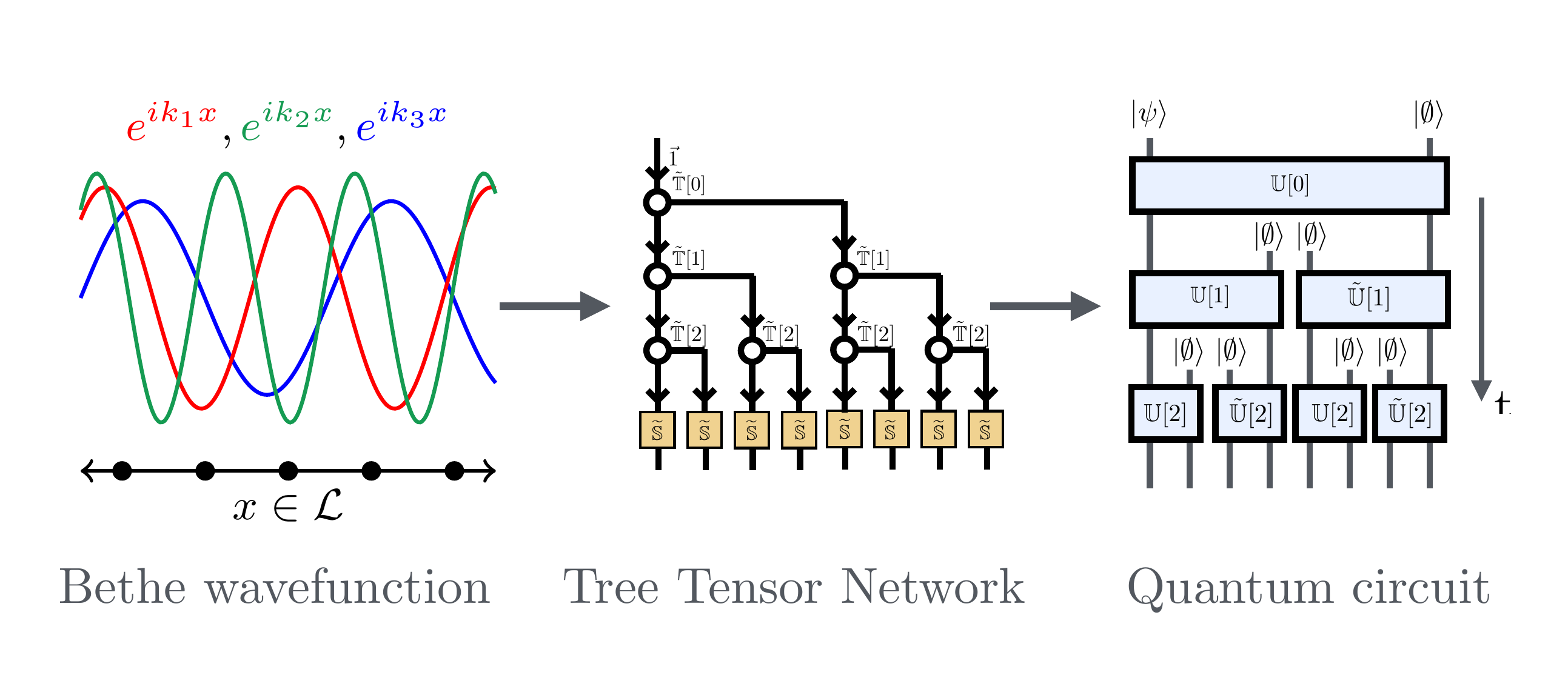}
    \caption{The kinematic constraints in a Bethe wavefunction lead to an exact tree tensor network representation with bond dimension $\chi=2^M$ and a quantum circuit with depth $\log_2(N/M)$ (made of unitary gates on $2^M$-dimensional qudits and mid-circuit measurements) for deterministically preparing such states using a quantum computer.}
    \label{fig:summary}
\end{figure}

\textbf{Motivation:}

Integrable many-body Hamiltonians in one spatial dimension are a special class of models characterized by an extensive number of locally conserved quantities. Using the Bethe Ansatz, introduced by Hans Bethe in 1931 for the spin half Heisenberg model~\cite{Bethe_1931}, we can solve a variety of integrable models with particle number conservation, including the XXZ model~\cite{Yang_Yang_1966}, one-dimensional Bose gas with delta function interactions~\cite{LiebLiniger_1963}, the Hubbard model~\cite{Lieb_Wu_1968} etc. The coordinate Bethe Ansatz represents eigenstates of a quantum-integrable Hamiltonian as linear superpositions of plane waves and provides explicit expressions for the corresponding eigenenergies, scalar products, correlation functions, etc. An alternative characterization is provided by the Algebraic Bethe Ansatz~\cite{Faddeev_aba, Korepin_Bogoliubov_Izergin_1993} (for recent reviews, see~\cite{ slavnov2019algebraicbetheansatz, Slavnov_2020} for Algebraic Bethe Ansatz, and~\cite{Caux_2011, wouters2015quench, Franchini_2017} for integrability).   
In spite of the solvable character of integrable models, the exact computation of their long-range and high order correlation functions and of dynamical properties remains a challenging task~\cite{kozlowski2015asymptoticanalysisquantumintegrable, Sato2011computation, Pozsgay_2017}. Consequently, it makes sense to attempt to compute such quantities using numerical and experimental approaches.

On the other hand, progress in our understanding of the structure of quantum entanglement has led to remarkable insights into the universal properties of quantum matter~\cite{Vidal_2003, Kitaev_2006, Levin_2006, Hastings_2007}, as well as informed the development of numerical and conceptual tools such as tensor networks. Gapped ground states in one dimension can be often well-approximated by one-dimensional tensor networks, namely, matrix product states (MPS)~\cite{fannes_finitely_1992} of modest bond dimension~\cite{latorre2004groundstateentanglementquantum, Verstraete_2006}. Generalizations of tensor networks for systems in higher dimensions such as projected entangled-pair states (PEPS)~\cite{verstraete2004renormalizationalgorithmsquantummanybody}, and with holographic geometries such as tree tensor networks (TTN)~\cite{Shi_2006} and multiscale entanglement renormalization ansatz (MERA)~\cite{Vidal_2007_mera, Vidal_2008_mera} have been used successfully to represent quantum many-body states of interest. Tensor network representations of many-body wavefunctions are not only relevant for their classical simulation, but they sometimes also provide a blueprint for preparing such states on a quantum computer~\cite{Schon_2005,kim2017robustentanglementrenormalizationnoisy, Wei2022, FossFeig_2021, Malz_2024}, some of which have also been explored in experiments~\cite{sewell2021preparingrenormalizationgroupfixed, Anand_2023, haghshenas2023probingcriticalstatesmatter}. 

In this work, we investigate the structure of Bethe wavefunctions from the perspective of quantum entanglement, by investigating how to decompose them as a linear combination of local wavefunctions when we partition the 1d lattice into two or more parts. In doing so, we highlight a surprising property of Bethe wavefunctions, namely that we can write their bipartite and multipartite decompositions in terms of local wavefunctions that are Bethe wavefuctions themselves~\cite{izergin1984quantum, izergin1985correlation, izergin1987correlation, Korepin_Bogoliubov_Izergin_1993}. We show that these \textit{fractal} decompositions of an $M-$particle Bethe wavefunction, which are also seen to only require $2^M$ local Bethe wavefunction on each part of the lattice, lead to simple analytical expressions for their (exact) tensor network representation as an MPS and as a TTN. From the latter, we then build a short depth quantum circuit that could be used to prepare the Bethe wavefunction on a quantum computer.


We highlight that our results are not specific to Bethe wavefunctions that are eigenstates of integrable Hamiltonians, namely Bethe wavefunctions satisfying the relevant integrability constraints, unlike the results indicated in the algebraic Bethe ansatz literature. Rather, our results arise from the built-in \textit{kinematic} constraints of the Bethe wavefunction itself, without the need to further require that the integrability conditions be satisfied.  This scenario has also been recently considered by several authors in Refs. \cite{Ruiz_2024, raveh2024deterministic, ruiz2024betheansatzquantumcircuits}. Finally, we also put forward a generalized Bethe wavefunction that retains all the key properties exploited in this work, including the fractal character of its bipartite and multipartite decompositions, which again leads to explicit MPS, TTN and quantum circuit representations, see Fig.~\ref{fig:summary_2}. 

It is highly remarkable that, almost a century after its proposal by Hans Bethe, we are still gaining new insights into the structure of Bethe wavefunctions, mostly thanks to the relatively new lens of quantum information and quantum computation. Indeed, closely related questions to the ones addressed in this paper have been previously studied by a number of authors over the last 20 years ~\cite{Alcaraz_2006, Katsura_2010, Murg_2012,nepomechie2021bethe, Van_Dyke_2021, Van_Dyke_2022, Sopena_2022, raveh2024deterministic, Ruiz_2024, ruiz2024betheansatzquantumcircuits}. Below we first provide a more exhaustive summary of our results, followed by a comparison between our work and previous literature.

\textbf{Results:}

Let the integers $M$ and $N$ denote the number of particles in a Bethe wavefunction and the number of sites in the 1d lattice $\mathcal{L}$ on which the wavefunction is defined. Each Bethe wavefunction is specified by a set of parameters, the Bethe data, namely $M$ \textit{quasi-momenta} and $M(M-1)/2$ \textit{scattering angles}. To investigate its entanglement structure, we decompose a Bethe wavefunction as a linear combination of products of local wavefunctions according to a partition of the lattice $\mathcal{L}$ into $L$ parts. We first find that Bethe wavefunctions have a very particular bipartite structure ($L=2$ parts). We then see that their multipartite structure (for $L>2$ parts) is also very special in a related manner. 

To explain the bipartite and multipartite structure of Bethe wavefunctions, we must introduce the concept of \textit{local} Bethe wavefunctions, which are defined analogously to a \textit{global} Bethe wavefunction but supported only on a \textit{part} of the 1d lattice $\mathcal{L}$, namely on a subset of adjacent sites. Given an $M$-particle Bethe wavefunction, we can build $2^M$ local Bethe wavefunctions compatible with the global Bethe data. Each such local Bethe wavefunction has a fixed number $m$ of particles, ranging from 0 to $M$, with $m$ corresponding quasi-momenta and $m(m-1)/2$ scattering angles. We find two key results. First, in any (contiguous) part of the lattice, a Bethe wavefunction is locally supported on a subspace of dimension at most $2^M$. Second, this subspace is spanned by the $2^M$ local Bethe wavefunctions compatible with the global Bethe data. These two results then have a number of consequences in how the Bethe wavefunction can be expanded as a linear combination of product states.

In particular, when we partition the lattice into two parts ($L=2$), we obtain a bipartite decomposition of the Bethe wavefunction as a linear combination of (at most) $2^M$ products of two local Bethe wavefunctions. This fractal decomposition automatically implies that the Schmidt rank, or minimal number of product terms in a bipartite decomposition, is upper bounded by $2^M$. This is remarkable, as this number is independent of the number of sites $N$ in the lattice or in any of its two parts, in sharp contrast with what we would encounter when considering generic wavefunctions with the same particle number, also when considering the subset of such wavefunctions that are the ground state of a particle-preserving local Hamiltonian. More generally, for $L$ parts, our fractal decomposition of a Bethe wavefunction consists of a linear combination of (at most) $L^M$ products of $L$ local Bethe wavefunctions. Again, this implies that the multipartite Schmidt rank (minimal number of product terms in an $L$-partite decomposition) is upper-bounded by $L^M$, independently of the size $N$ of the lattice (or of the size of any of its $L$ parts). 

Furthermore, a global Bethe wavefunction may be hierarchically decomposed, via sequential bipartite decompositions, according to a pattern of bipartitions organized as a planar tree graph, which leads to a planar TTN representation of the Bethe wavefunction. In each such bipartite decomposition, up to $2^M$ Bethe wavefunctions are simultaneously decomposed into (at most) $2^M$ products of two local Bethe wavefunctions. By capturing the corresponding bipartite decomposition coefficients in a three-index tensor, we arrive at an analytical expression for the tensors in the exact TTN representation of the Bethe wavefunction. This includes tensor networks such as an MPS and a regular binary TTN representations as prominent particular examples, see Fig.~\ref{fig:summary_2}b. The bond dimension $\chi$ in these exact tensor network representations is upper bounded by $\chi \leq 2^M$. Moreover, for an MPS and regular binary TTN, one can choose the tensor network to be \textit{homogeneous} in space (i.e. the same tensors are repeated throughout all, or parts of, the network), even when the underlying Bethe wavefunction is not translationally invariant.

In particular, a regular binary TTN of a Bethe wavefunctions of $M$ particles on $N$ sites consists of $\log_{2}(N/M)$ layers of tensors, each one with bond dimension $2^M$. In the homogeneous case, this provides a compact representation in terms of just $\log_{2}(N/M)$ distinct tensors and a fast method for computing the norm of a Bethe wavefunction as well as the overlap between two Bethe wavefunctions with computational time that scales as $O(\log(N))$ in the system size $N$.

Moreover, this regular binary TTN representation also provides, after we bring it into a canonical form, an efficient quantum circuit that can prepare the Bethe wavefunctions on a quantum computer. Specifically, to a local adaptive quantum circuit with depth $O(\log (N))$, that uses local measurements and local unitary gates conditioned on the measurement outcomes.
See Fig.~\ref{fig:summary} for a schematic summary of this application.

Finally, we introduce a larger class of wavefunctions that, obeying similar kinematic constraints as a Bethe wavefunction, have analogous properties in terms of accepting $L$-partite decompositions as a linear combination of $L^M$ terms, each being the product of $L$ \textit{local} wavefunctions in the same class, and exact MPS and regular binary TTN representation with bond dimension $2^M$. This generalized Bethe wavefunctions are specified by $M$ \textit{completely arbitrary} single-particle wavefunctions and $M(M-1)/2$ \textit{complex} scattering angles (instead of the $M$ \textit{plain wave} single-particle wavefunctions and $M(M-1)/2$ \textit{real} scattering angles of regular Bethe wavefunctions). 

\begin{figure}
    \centering
    \includegraphics[width = \columnwidth]{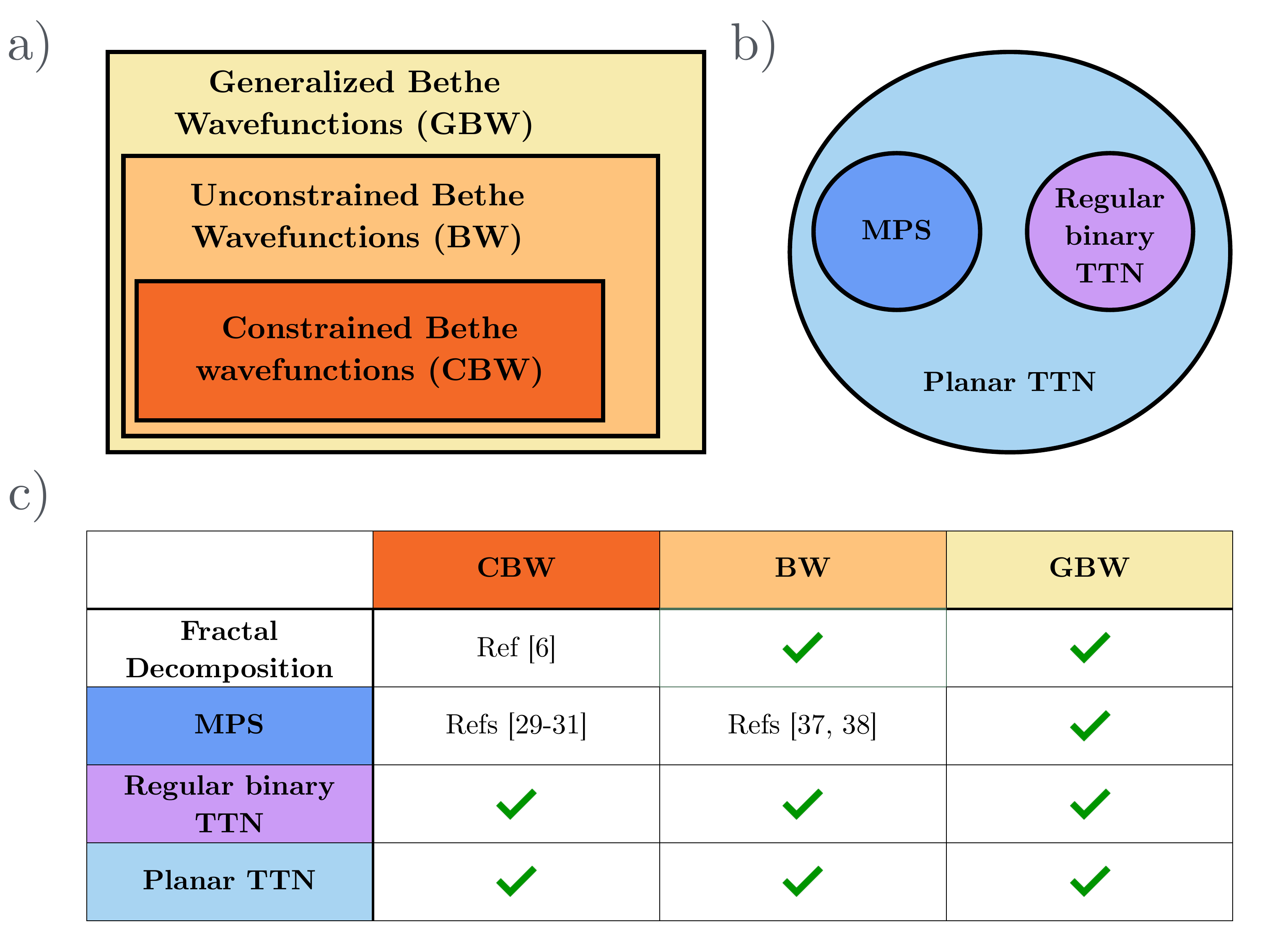}
    \caption{
    (a) Hierarchy of Bethe wavefunctions: \textit{constrained} (fulfilling integrability constraints), \textit{unconstrained} (or simply Bethe wavefunctions; see definition \ref{def:bw}; not fulfilling integrability constraints) and \textit{generalized} (see definition \ref{def:GBW}). 
    (b) Our explicit tensor network representations for Bethe wavefunctions cover any planar tree tensor network (planar TTN), which includes the matrix product states (MPS) and the regular binary tree tensor network (regular binaty TTN) as particular cases.
    (c) In this work we propose fractal decompositions, as well as planar TTN representations (including MPS and regular binary TTN representations) for Bethe wavefunctions, then also for generalized Bethe wavefunctions, thus covering the 12 entries of this table. Several important particular cases had been previously covered in the literature, most notably the MPS representation of constrained Bethe wavefunctions in Refs.~\cite{Alcaraz_2006,Katsura_2010, Murg_2012} and more recently of unconstrained Bethe wavefunctions in Refs.~\cite{Ruiz_2024, ruiz2024betheansatzquantumcircuits} and a version of the generalized Bethe wavefunction in Ref.~\cite{ruiz2024betheansatzquantumcircuits}.
    As explained in a note added at the end of the paper, the fractal bipartite decomposition for constrained Bethe wavefunctions (in the algebraic Bethe wavefunction formalism) had been previously proposed in Ref.~\cite{Korepin_Bogoliubov_Izergin_1993}.
    }
    \label{fig:summary_2}
\end{figure}

\textbf{Comparison to previous works:}

Our results build on top of, and overlap with, several lines of recent progress on this topic. From the perspective of our work, previous contributions can be organized in three phases. First, MPS representations of constrained Bethe wavefunctions were discovered in ~\cite{Alcaraz_2006,Katsura_2010, Murg_2012}, where the authors used algebraic Bethe ansatz techniques for specific models satisfying the integrability conditions. They reached the conclusion that the MPS has a bond dimension $\chi=2^M$. Second, recent works have considered the prospect of using quantum simulators to prepare Bethe wavefunctions~\cite{nepomechie2021bethe, Van_Dyke_2021, Van_Dyke_2022, Sopena_2022, raveh2024deterministic}, building on the earlier results on the MPS representations. Such state preparation can be useful, in the context of variational quantum eigensolver algorithms, to solve Bethe equations and extract correlation functions and other static and dynamic properties of Bethe wavefunctions using a quantum computer. These works provide a deterministic (unitary) quantum circuit of linear depth $O(N)$. Third, a set of recent works ~\cite{Ruiz_2024, ruiz2024betheansatzquantumcircuits} have provided MPS representation and quantum circuits for unconstrained Bethe wavefunctions, without requiring the integrability conditions to be satisfied.

In our work we produce an explicit tensor network representation for an unconstrained Bethe wavefunction for \textit{any} planar tree tensor network. For the particular case of an MPS (that is, when the tree graph is a 1d chain), our tensor network representation becomes equivalent to the MPS representation previously proposed by Ruiz,  Sopena,  Gordon,  Sierra and L\'opez in Ref.~\cite{Ruiz_2024}. Furthermore, the notion of a local Bethe wavefunction can be seen to also emerge (on the right side of a bipartition) in the MPS construction of Ref. \cite{Ruiz_2024} as well as in Ref.~\cite{ruiz2024betheansatzquantumcircuits} by Ruiz, Sopena, L\'opez, Sierra and Pozsgay, see Fig.~\ref{fig:summary_2}c. Finally, our generalized Bethe wavefunction, which depends on a number of parameters that scales with $N$ as $O(MN)$, includes as a particular case the inhomogeneous Bethe wavefunctions (which depend on $M + N$ parameters) recently and independently proposed in Ref.~\cite{ruiz2024betheansatzquantumcircuits}, and can be seen to actually equate that proposal if we omit the integrability constraints imposed in Ref.~\cite{ruiz2024betheansatzquantumcircuits}.

We also highlight that our TTN representation, with bond dimension $\chi=2^M$, is possible thanks to a special property of Bethe wavefunctions and as such it could not be anticipated from the MPS representation by applying the usual conversion procedure that maps an MPS into a TTN. Indeed, such conversion procedure generically transforms an MPS with bond dimension $\chi_{\mbox{\tiny MPS}}$ into a TTN with bond dimension $\chi_{\mbox{\tiny TTN}}=(\chi_{\mbox{\tiny MPS}})^2$, which for $\chi_{\mbox{\tiny MPS}}=2^M$ would have been expected to lead to $\chi_{\mbox{\tiny TTN}}=4^M$, much larger than the bond dimension $\chi_{\mbox{\tiny TTN}}=2^M$ in our TTN representation. One can ultimately trace the significant reduction in the bond dimension of the TTN representation to the two-particle factorization of scattering amplitudes in Bethe wavefunctions. The quantum circuit derived from our exact tree tensor network improves upon the depth of the existing deterministic quantum circuit to prepare such states exponentially for constant number of particles, from $O(N)$ to $O(\log N)$. This, however, comes at a cost of requiring mid-circuit measurements and adaptive unitary gate control.

\textbf{Outline:}

We provide an outline for the rest of the paper. In Sec.~\ref{sec:BW} we review the definition of Bethe wavefunctions. In Sec.~\ref{sec:LeftRightBipartition} we describe and derive a central result of our work: the fractal decomposition of a Bethe wavefunction as a sum of products of local Bethe wavefunctions, which we generalize to the multipartite case in Sec.~\ref{sec:LeftRightMultipartition}. In Sec.~\ref{sec:tensor_network} we use the fractal decomposition to construct exact MPS and TTN representations of Bethe wavefunctions with bond dimension $\chi=2^M$. We use the TTN representation to construct a compact quantum circuit to prepare such a state in Sec.~\ref{sec:bethe_circuit}. Finally, in Sec.~\ref{sec:GBW} we introduce the generalized Bethe wavefunctions, and we summarise the main results and conclude in Sec.~\ref{sec:conclusion}.


\section{Bethe wavefunctions}
\label{sec:BW}

In this work we study Bethe wavefunctions, which are trial states for Bethe ansatz integrability in one dimensional (1d) quantum systems with a $U(1)$ symmetry. This symmetry often corresponds to the conservation of the number of particles (in fermionic/bosonic models) or the number of magnetic excitations or magnons (in spin models). From now on we will simply use the term \textit{particle} to refer to all these cases.

Below we introduce a Bethe wavefunction with $M$ particles, characterized by a set of \textit{quasi-momenta} and 2-particle \textit{scattering angles}, which we call \textit{Bethe data}. For specific values of these parameters, a Bethe wavefunction can be an eigenstate of some integrable model Hamiltonian~\cite{ Bethe_1931, mossel2012quantum, Franchini_2017}. However, all the results presented in this work, including the bipartite and multipartite decompositions of Bethe wavefunctions in Secs.~\ref{sec:LeftRightBipartition} and \ref{sec:LeftRightMultipartition}, the exact tensor network representations in Sec.~\ref{sec:tensor_network} and the quantum circuit representation in Sec.~\ref{sec:bethe_circuit}, apply to generic values of the Bethe data - that is, they do not assume that the Bethe wavefunction relates to an integrable model. 

We start by considering generic wavefunctions of $M$ indistinguishable particles.

\subsection{Generic M-particle wavefunctions}

Consider a 1d lattice $\mathcal{L}$ made of $N$ sites, where the sites are labelled by an integer $x$, with $1 \leq x \leq N$. Each lattice site hosts a 2-dimensional vector space $\mathbb{C}_2$ (or qubit) spanned by vectors $\ket{\sigma} \in\{\ket{0}, \ket{1}\}$ corresponding to the absence or presence of a particle. We refer to this basis as the occupation basis throughout the paper.

The Hilbert space $\mathcal{H}^{\mathcal{L}} \cong (\mathbb{C}_2)^{\otimes N}$ decomposes into a direct sum of subspaces $\mathcal{H}^{\mathcal{L}}_{M}$ with well-defined particle number $M$,
\begin{equation}\label{eq:bethe_hilbert_space}
    \mathcal{H}^{\mathcal{L}} = \bigoplus_{M} \mathcal{H}^{\mathcal{L}}_{M},~~d^{\mathcal{L}}_{M} = {N\choose M} = \frac{N!}{M!(N-M)!},
\end{equation}
where $d^{\mathcal{L}}_{M}$ denotes the dimension of $\mathcal{H}^{\mathcal{L}}_{M}$. 
For instance, in the $M=0$ particle sector, with $d^{\mathcal{L}}_{M=0}=1$, we find the unique vacuum state $\ket{\emptyset} \equiv \ket{\sigma_1 \sigma_2 \cdots \sigma_N}$ with $\sigma_a=0$ for $a=1,2,\cdots,N$. In the $M=1$ particle sector, with $d^{\mathcal{L}}_{M=0} = N$, a basis $\{\ket{x}\}$ of single particle states is labeled by the location $x$ of the particle, namely $\ket{x} \equiv \ket{\sigma_1 \sigma_2 \cdots \sigma_N}$, where $\sigma_a=1$ only if $a=x$ and the rest of $\sigma_a$ vanish, that is $\sigma_a = \delta_{x,a}$. More generally, for $M$ (indistinguishable) particles, we can specify their locations by a vector $\vec{x} = (x_1, x_2, \cdots, x_M)$ with $M$ ordered integers $\{x_j\}$, where $1\leq x_{1} <  x_{2} < \cdots < x_{M}\leq N$. We call the set of such vectors the domain $\mathcal{D}_{M}^{\mathcal{L}}$. Then a basis $\{\ket{\vec{x}}\}$ of $M$ particle states is labeled by the $M$ ordered positions $\vec{x} \in \mathcal{D}_{M}^{\mathcal{L}}$ of the $M$ particles, namely $\ket{\vec{x}}\equiv \ket{\sigma_1 \sigma_2 \cdots \sigma_N}$ where only $\sigma_a = 1$ if $a \in \vec{x}$, that is $\sigma_a = \sum_j \delta_{x_j,a}$. Notice that there are ${N \choose M}$ inequivalent ways of placing $M$ indistinguishable particles in $N$ sites, as reflected in the dimension $d^{\mathcal{L}}_M=|\mathcal{D}^{\mathcal{L}}_M|$ in Eq. \eqref{eq:bethe_hilbert_space}.  For fixed particle number $M$ and large number of sites $N$ we have, to leading order in $N$,
\begin{equation} \label{eq:DML}
d^{\mathcal{L}}_{M} = \frac{N^{M}}{M!} - \frac{(M+1) N^{M-1}}{2(M-1)!} + \cdots \sim \frac{N^{M}}{M!}.
\end{equation}

\begin{definition}
We denote a generic wavefunction of $M$ particles on lattice made of $N$ sites by $\ket{\Psi_M}$, and refer to it as a \textbf{generic ($M,N$) wavefunction}. It can be expressed in the basis $\ket{\vec{x}}$ as
\begin{align}\label{eq:ML_wavefunction}
    \ket{\Psi_{M}} = \sum_{\vec{x} \in \mathcal{D}_{M}^{\mathcal{L}}}\Psi_{M}\left(\vec{x}\right)\ket{\vec{x}},
\end{align}
where the number of complex amplitudes $\Psi_{M}\left(\vec{x}\right)$ is $\left|\mathcal{D}_{M}^{\mathcal{L}}\right|= d_M^{\mathcal{L}} = {N \choose M}$. Therefore, for fixed particle number $M$, the wavefunction is specified by a number of complex amplitudes that scales with the number of sites as $d_M^{\mathcal{L}} \sim N^M/M!$, see Eq. \eqref{eq:DML}.
\end{definition}

\subsection{M-particle Bethe wavefunction}

A Bethe wavefunction $\ket{\Phi_M}$ is a restricted class of $M$-particle wavefunction $\ket{\Psi_M}\in \mathcal{H}_M^{\mathcal{L}}$ that can be intuitively thought of as some form of a multi-particle 1d plane-wave wavefunction with generalized scattering angles that describe interacting particles. 

A note on notation: Throughout this section and also Sec.~\ref{sec:LeftRightBipartition} we use $\Phi$ to denote a Bethe wavefunction, and thus differentiate it from a generic wavefunction, denoted by $\Psi$. This may include a number of additional subscripts and superscripts, such as in $\ket{\Phi_{M}^{\vec{k},\boldsymbol{\theta}}}$ or $\ket{\Phi_{m}^{\vec{c}}}$, which we use to specify the particle number and the Bethe data. [Starting in Sec.~\ref{sec:LeftRightMultipartition}, we will use a more compact notation, such as $\ket{\vec{1}}$, to denote a Bethe wavefunction.]

Although a non-trivial Bethe wavefunction requires $M \geq 2$ particles, it is useful to include the cases $M=0,1$ in the same formalism. They correspond to the vacuum and to a single-particle 1d plane wave with quasi-momentum $k$,
\begin{equation} \label{eq:vacuum_and_planewave}
    \ket{\Phi_{M=0}} = \ket{\emptyset},~~~~ \ket{\Phi_{M=1}^{(k)}} \equiv \sum_{x\in \mathcal{L}}e^{ikx}\ket{x}.
\end{equation} 
Note that $\braket{\Phi_{M=1}^{(k)}} = N$, that is $\ket{\Phi_{M=1}^{(k)}}$ is not normalized to 1. In most of this work we will consider unnormalized wavefunctions such as $\ket{\Phi_{M=1}^{(k)}}$. [In Sec.~\ref{sec:tensor_network} we will explain how to efficiently compute the norm of a Bethe wavefunction (more generally, the overlap between two Bethe wavefunctions) using an exact TTN representation.]

For $M=2$, the first non-trivial case, a two-particle Bethe wavefunction is given in terms of two quasi-momenta $k_1$ and $k_2$ and a scattering angle $\theta_{21}$ as
\begin{eqnarray} \label{eq:BW_M2}
    && \sum_{\substack{(x_{1}, x_2) \in \mathcal{D}_{2}^{\mathcal{L}}}}\left(e^{ik_1 x_1}e^{ik_2 x_2} - e^{i\theta_{21}}e^{ik_2 x_1}e^{ik_1 x_2}\right)\ket{x_1x_2}.~~~ 
\end{eqnarray}
We denote such a $2-$particle Bethe wavefunction as $\ket{\Phi_{M = 2}^{\vec{k}, \boldsymbol{\theta}}}$, where $\vec{k} = (k_{1}, k_2)$ and $\boldsymbol{\theta} \equiv \{\theta_{21}\}$. 
It is instructive to note that for the choices of scattering angle $\theta_{21} = 0$ and $\theta_{21} = \pi$, namely for
\begin{align}
    \sum_{\substack{(x_{1}, x_2) \in \mathcal{D}_{2}^{\mathcal{L}}}}\left(e^{ik_1 x_1}e^{ik_2 x_2} \mp e^{ik_2 x_1}e^{ik_1 x_2}\right)\ket{x_1x_2},
\end{align}
we recover a plane wave state of two fermions (minus sign) or two bosons (plus sign), respectively.  

More generally, a Bethe wavefunction $\ket{\Phi_{M}^{\vec{k}, \boldsymbol{\theta}}}$ for an arbitrary number $M$ of particles is characterized by the Bethe data $(\vec{k}, \boldsymbol{\theta})$, consisting of $M$ quasi-momenta $\vec{k}=(k_1, k_2, \cdots, k_M)$ and $M(M-1)/2$ scattering angles $\boldsymbol{\theta}= \{\theta_{j_2 j_1}\}$ for $1 \leq j_1 < j_2 \leq M$. In order to build the wavefunction from its data $(\vec{k}, \boldsymbol{\theta})$, first we introduce the set $\mathcal{S}_M$ of permutations of $M$ symbols, namely the \textit{symmetric group}. The symmetric group $\mathcal{S}_M$ contains $|\mathcal{S}_M| = M!$ permutations, each represented by a vector $P=(P_1,P_2,\cdots, P_M)$ whose components $P_j$ are $M$ distinct integers, with $1 \leq P_j \leq M$. For $M = 3$, the $3!=6$ permutations are described in Fig.~\ref{fig:permutations}.


Given the vector $\vec{k} = (k_1,k_2,\cdots, k_M)$ of quasi-momenta and a permutation $P$, we denote by $\vec{k}^P \equiv (k_{P_1},k_{P_2},\cdots, k_{P_M})$ the vector of quasi-momenta obtained from $\vec{k}$ by permuting its $M$ components according to permutation $P$. In particular, if $e^{i \vec{k}\cdot\vec{x}} = \prod_{j=1}^M e^{ik_jx_j}$ is the amplitude of an $M$-particle plane wave where the particle in position $x_j$ has quasi-momentum $k_j$, then $e^{i \vec{k}^P\cdot\vec{x}} = \prod_{j=1}^M e^{ik_{P_j}x_j}$ corresponds to the amplitude where the particle in position $x_j$ now has quasi-momentum $k_{P_j}$ instead.
\begin{definition}\label{def:bw}
A \textbf{Bethe wavefunction} of $M$ particles on a 1d lattice $\mathcal{L}$ made of $N$ sites and Bethe data $(\vec{k}, \boldsymbol{\theta})$ is an $(M,N)$ wavefunction of the form
\begin{align}\label{eq:BW_def}
    \ket{\Phi_{M}^{\vec{k}, \boldsymbol{\theta}}} \equiv  \sum_{\vec{x}\in \mathcal{D}_{M}^{\mathcal{L}}} \sum_{P \in S_M}  \Theta[P] 
    e^{i \vec{k}^P \cdot \vec{x}}\ket{\vec{x}},
\end{align}
where the scattering amplitude $\Theta[P]$ factorises into a product of a subset of the two-particle scattering phases  $-e^{i\theta_{j_2j_1}}$, namely
\begin{align}
    \Theta[P] &= \prod_{j_1=1}^{M-1}\prod_{j_2=j_1+1}^M f_{P}(j_1,j_2), ~\text{where} \nn
    \\
    \label{eq:BW_amp}
    f_P(j_1,j_2) &= 
    \begin{cases}
		1, & \text{if }P_{j_1} < P_{j_2}\\
        -e^{i\theta_{P_{j_1}P_{j_2}}}, & \text{if }P_{j_1} > P_{j_2}
	 \end{cases}.
\end{align}
\end{definition}

The above amplitude $\Theta[P]$ for permutation $P$ can be understood in terms of the crossings of two quasi-momenta during the permutation: starting with the originally ordered quasi-momenta $k_1, k_2,\cdots, k_M$, a relative phase $-e^{i\theta_{j_2j_1}}$ for $j_2>j_1$ is included as a factor in $\Theta[P]$ if the permutation $P$ reverses the order of $k_{j_2}$ and $k_{j_1}$ (that is, if $k_{j_1}$, which is to the left of $k_{j_2}$ in $\vec{k}$, appears to the right of $k_{j_2}$ in the permuted vector $\vec{k}^P$).

For instance, for $M=2$ particles, the trivial permutation $P=(1,2)$ produces a trivial amplitude $\Theta[P=(1,2)] = 1$, whereas the permutation $P=(2,1)$ leads to the amplitude $\Theta[P=(2,1)] = f_{P=(2,1)}(1,2) = -e^{i\theta_{21}}$, and we recover Eq. \eqref{eq:BW_M2}.

\begin{figure}
\resizebox{\columnwidth}{!}{
\def\nstrands{3}
\def\shiftp{0.45cm}
\def\shiftm{-0.45cm}
\def\shiftmm{-1.8cm}
\def\shiftmmm{-1.9cm}
\begin{tikzpicture}
\Huge

\pic at (0,0)[
braid/.cd,
number of strands=3,
every strand/.style={ultra thick},
strand 1/.style={red},
strand 2/.style={ForestGreen},
strand 3/.style={blue},
name prefix=braid
] {braid={1 1 1}};
\node at (braid-rev-1-e)[xshift = \shiftm, yshift = \shiftm] {(};
\node at (braid-rev-\nstrands-e)[xshift = \shiftp, yshift = \shiftm] {)};


\foreach \n in {1,...,\nstrands}{
    \node at (braid-\n-s)[yshift = \shiftp] {\n};
}

\node at (braid-rev-1-e)[xshift = -1.4cm, yshift = \shiftm] {$P = $};


\node at (braid-rev-1-e)[xshift = -1.6cm, yshift = -1.7cm] {$\Theta_{P} = $};

\foreach \n in {1,...,\nstrands}{
    \node at (braid-\n-e)[yshift = \shiftm] {\n};
} 
\node at (braid-rev-2-e)[xshift = 0cm, yshift = -1.7cm] {$1$};

\pic at (4,0)[
braid/.cd,
number of strands=3,
every strand/.style={ultra thick},
strand 1/.style={red},
strand 2/.style={ForestGreen},
strand 3/.style={blue},
name prefix=braid
] {braid={s_2 1 1}};

\node at (braid-rev-1-e)[xshift = \shiftm, yshift = \shiftm] {(};
\node at (braid-rev-\nstrands-e)[xshift = \shiftp, yshift = \shiftm] {)};
\foreach \n in {1,...,\nstrands}{
    \node at (braid-\n-s)[yshift = \shiftp] {\n};
} 
\foreach \n in {1,...,\nstrands}{
    \node at (braid-\n-e)[yshift = \shiftm] {\n};
} 
\node at (braid-rev-2-e)[xshift = 0cm, yshift = -1.7cm] {$-e^{i\theta_{32}}$};

\pic at (8,0)[
braid/.cd,
number of strands=3,
every strand/.style={ultra thick},
strand 1/.style={red},
strand 2/.style={ForestGreen},
strand 3/.style={blue},
name prefix=braid
] {braid={s_1 1 1}};

\node at (braid-rev-1-e)[xshift = \shiftm, yshift = \shiftm] {(};
\node at (braid-rev-\nstrands-e)[xshift = \shiftp, yshift = \shiftm] {)};
\foreach \n in {1,...,\nstrands}{
    \node at (braid-\n-s)[yshift = \shiftp] {\n};
} 
\foreach \n in {1,...,\nstrands}{
    \node at (braid-\n-e)[yshift = \shiftm] {\n};
} 
\node at (braid-rev-2-e)[xshift = 0cm, yshift = -1.7cm] {$-e^{i\theta_{21}}$};

\pic at (12,0)[
braid/.cd,
number of strands=3,
every strand/.style={ultra thick},
number of strands=3,
strand 1/.style={red},
strand 2/.style={ForestGreen},
strand 3/.style={blue},
name prefix=braid
] {braid={s_1 s_2 1}};

\node at (braid-rev-1-e)[xshift = \shiftm, yshift = \shiftm] {(};
\node at (braid-rev-\nstrands-e)[xshift = \shiftp, yshift = \shiftm] {)};
\foreach \n in {1,...,\nstrands}{
    \node at (braid-\n-s)[yshift = \shiftp] {\n};
} 
\foreach \n in {1,...,\nstrands}{
    \node at (braid-\n-e)[yshift = \shiftm] {\n};
} 
\node at (braid-rev-2-e)[xshift = 0cm, yshift = -1.7cm] {$e^{i\theta_{21}}e^{i\theta_{31}}$};

\pic at (16,0)[
braid/.cd,
number of strands=3,
every strand/.style={ultra thick},
strand 1/.style={red},
strand 2/.style={ForestGreen},
strand 3/.style={blue},
name prefix=braid
] {braid={s_2 s_1 1}};

\node at (braid-rev-1-e)[xshift = \shiftm, yshift = \shiftm] {(};
\node at (braid-rev-\nstrands-e)[xshift = \shiftp, yshift = \shiftm] {)};
\foreach \n in {1,...,\nstrands}{
    \node at (braid-\n-s)[yshift = \shiftp] {\n};
} 
\foreach \n in {1,...,\nstrands}{
    \node at (braid-\n-e)[yshift = \shiftm] {\n};
} 
\node at (braid-rev-2-e)[xshift = 0cm, yshift = -1.7cm] {$e^{i\theta_{32}}e^{i\theta_{31}}$};

\pic at (20.7,0)[
braid/.cd,
every strand/.style={ultra thick},
strand 1/.style={red},
strand 2/.style={ForestGreen},
strand 3/.style={blue},
name prefix=braid
] {braid={s_2 s_1 s_2}};

\node at (braid-rev-1-e)[xshift = \shiftm, yshift = \shiftm] {(};
\node at (braid-rev-\nstrands-e)[xshift = \shiftp, yshift = \shiftm] {)};
\foreach \n in {1,...,\nstrands}{
    \node at (braid-\n-s)[yshift = \shiftp] {\n};
} 
\foreach \n in {1,...,\nstrands}{
    \node at (braid-\n-e)[yshift = \shiftm] {\n};
} 
\node at (braid-rev-2-e)[xshift = 0cm, yshift = -1.7cm] {$-e^{i\theta_{32}}e^{i\theta_{31}}e^{i\theta_{21}}$};

\end{tikzpicture}}
\caption{Example of permutation diagrams. All the permutations for $M =3$ symbols are described using strands and their crossings, and their corresponding scattering amplitudes are noted below.}\label{fig:permutations}
\end{figure}
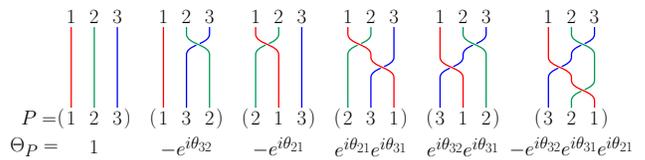

As an additional example, for $M=3$ particles, the amplitudes $\Theta[P]$ for the $3!=6$ permutations $P \in S_3$ are depicted in Fig.~\ref{fig:permutations}. Accordingly, for $M=3$ particles, the Bethe wavefunction in Eq. \eqref{eq:BW_def} reads
\begin{eqnarray} \label{eq:BW_M3}
    \ket{\Phi_{M=3}^{(k1,k2,k3), \{\theta_{21}, \theta_{32}, \theta_{31}\}}} = \sum_{(x_1,x_2,x_3) \in \mathcal{D}^{\mathcal{L}}_3} \biggl( && \nn \\
    e^{i\left(k_1x_1 + k_2x_2  + k_3x_3 \right)} && \nn \\
     -e^{i\theta_{32}}~e^{i\left(k_1x_1 + k_3x_2  + k_2x_3 \right)}&&\nn \\
     -e^{i\theta_{21}}~e^{i\left(k_2x_1 + k_1x_2  + k_3x_3 \right)}&& \nn \\
     +e^{i\theta_{21}}e^{i\theta_{31}}~e^{i\left(k_2x_1 + k_3x_2  + k_1x_3 \right)}&&\nn \\
     +e^{i\theta_{32}}e^{i\theta_{31}} ~e^{i\left(k_3x_1 + k_1x_2  + k_2x_3 \right)}&&\nn \\
     -e^{i\theta_{32}}e^{i\theta_{31}}e^{i\theta_{21}} ~e^{i\left(k_3x_1 + k_2x_2  + k_1x_3 \right)}
    &&\biggr) \ket{x_1x_2x_3}.~~~
\end{eqnarray}

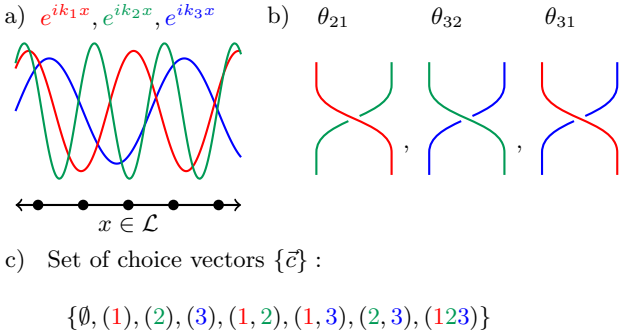
\begin{figure}
\begin{tikzpicture}
    \node at (0,1.25)[] {a)};
    
    \node at (1.5,1.25)[] {$\textcolor{red}{e^{ik_1x}},\textcolor{ForestGreen}{e^{ik_2x}},\textcolor{blue}{e^{ik_3x}}$};
    
    \draw[blue, thick, smooth] plot[domain=0:3, samples=100] 
        (\x, {0.7*sin(3.5*\x r)});
    
    \draw[red, thick, smooth] plot[domain=0:3, samples=100] 
        (\x, {0.8*sin(4.5*\x r + 45)});
        
    \draw[ForestGreen, thick, smooth] plot[domain=0:3, samples=100] 
        (\x, {0.9*sin(6.74*\x r  + 45)});
        
    \coordinate (A) at (0,-1.25);
    \coordinate (B) at (3,-1.25);
    
    \draw[<->, thick] (A) -- (B);
    \foreach \i in {0.1,0.3,0.5,0.7,0.9} {
        \fill ($(A)!\i!(B)$) circle (2pt);
    }
    
    \node at (1.5,-1.5)[] {$x \in \mathcal{L}$};
    
    \node at (3.5,1.25)[] {b)};
    \node at (5.2,1.25)[] {$~~~~~~~~~~\theta_{21}~~~~~~~~~~\theta_{32}~~~~~~~~~~\theta_{31}$};
    
    \pic at (4,0.65) [draw,braid/strand 1/.style={thick, red}, braid/strand 2/.style={thick, ForestGreen}] (coords) {braid={s_1}};
    
    \node at (5.2,-0.5)[] {,};
    
    \pic at (5.5,0.65) [draw,braid/strand 1/.style={thick, ForestGreen}, braid/strand 2/.style={thick, blue}] (coords) {braid={s_1}};
    
    \node at (6.7,-0.5)[] {,};
    
    \pic at (7,0.65) [draw,braid/strand 1/.style={thick, red}, braid/strand 2/.style={thick, blue}] (coords) {braid={s_1}};
    
    \node (c) at (0,-2)[] {c)};
    
    \node at (2.25,-2) {Set of choice vectors $\{\vec{c}\} : $};
    
    \node at (3.5,-2.75) {$ \{\emptyset, (\textcolor{red}{1}), (\textcolor{ForestGreen}{2}), (\textcolor{blue}{3}), (\textcolor{red}{1},\textcolor{ForestGreen}{2}), (\textcolor{red}{1},\textcolor{blue}{3}), (\textcolor{ForestGreen}{2},\textcolor{blue}{3}), (\textcolor{red}{1}\textcolor{ForestGreen}{2}\textcolor{blue}{3})\}$};
    
    

\end{tikzpicture}
\caption{A Bethe wavefunction for $M$ particles on lattice $\mathcal{L}$ (here for $M=3$) is characterized by (a) M quasi-momenta $\vec{k}= (k_1, k_2, \cdots, k_M)$ and  (b) $M(M-1)/2$ scattering angles $\theta_{j_2j_1}$. From each quasi-momentum $k_j$ we generate a single-particle plain wave with amplitude $e^{ik_jx}$, and to each pair of plain waves we associate a scattering angle $\theta_{j_2j_1}$. (c) Set of choice vectors for an $M=3$ particle bethe wavefunction, as described in Sec.~\ref{sec:choice}.} \label{fig:BW}
\end{figure}

Note that the data $(\vec{k},\boldsymbol{\theta})$ for an $M$ particle Bethe wavefunction $\ket{\Phi_M}$ consists of just $M + M(M-1)/2$ real parameters (a number that is independent of $N$) compared to ${N \choose M} \sim N^M/M!$ parameters for a generic $(M,N)$ wavefunction $\ket{\Psi_M}$. Thus, for large $N$, Bethe wavefunctions $\ket{\Phi_M}$ are a relatively small subset, depending on $O(N^0)$ parameters, within the set of $(M,N)$ wavefunctions $\ket{\Psi_M}$, which depends on $O(N^M)$ parameters.

\subsection{Bethe data and choice vectors}
\label{sec:choice}

In preparation for the next section, we explain how the data $(\vec{k},\boldsymbol{\theta})$ for a Bethe wavefunction $\ket{\Phi_M^{\vec{k}, \boldsymbol{\theta}}}$ of $M$ particles can also be used to define other Bethe wavefunctions with $m$ particles, for $m<M$. Given a vector $\vec{c}=(c_1,c_2, \cdots, c_m)$ with $m$ \textit{distinct} positive integers $c_j$ such that $1 \leq c_j \leq M$, where we also enforce that the components are increasingly ordered, namely $1 \leq c_1 < c_2 < \cdots < c_m \leq M$, we define \textit{reduced} Bethe data $(\vec{k}^{\vec{c}}, \boldsymbol{\theta}^{\vec{c}})$ for an $m$ particle wavefunction by
\begin{eqnarray} \label{eq:choice_k}
\vec{k}^{\vec{c}} &\equiv&  (k_{c_1}, k_{c_2}, \cdots, k_{c_m}),\\
\boldsymbol{\theta}^{\vec{c}} &\equiv& \{\theta_{c_{j_2}c_{j_1}}\} \mbox{~for~} 1\leq j_1 < j_2 \leq m.\label{eq:choice_theta}
\end{eqnarray}
That is, we collect the $m$ quasi-momenta $\{k_{c_j}\}$ that are found in positions $\{c_j\}$ of $\vec{k}$, as well as the $m(m-1)/2$ scattering angles $\{\theta_{c_{j_2}c_{j_1}}\}$ in positions $\{c_{j_2},c_{j_1}\}$ of $\boldsymbol{\theta}$. We call the vector $\vec{c}$ with a choice of $m$ symbols (out of the $M$ symbols $1,2,\cdots,M$) a \textit{choice} vector, or simply a \textit{choice}. 

For instance, for $M=4$ particles and $m=2$, the choice $\vec{c}=(1,3)$ leads to the reduced data $\vec{k}^{\vec{c}}=(k_1,k_3)$ and $\boldsymbol{\theta}^{\vec{c}} = \{\theta_{31}\}$, corresponding to the $m=2$ particle Bethe wavefunction $\ket{\Phi_{m=2}^{(k_1,k_3), \{\theta_{31}\}}}$.

Given $M$ particles, there are ${M \choose m}$ possible choice vectors $\vec{c}$ with $m$ particles.We denote the set of choices with $m$ particles $\mathcal{C}^{M}_m$, and write $\vec{c} \in \mathcal{C}^{M}_m$. 

Thus, by considering all possible choices $\vec{c}$ or, equivalently, all possible reduced Bethe data $(\vec{k}^{\vec{c}}, \boldsymbol{\theta}^{\vec{c}})$ for different number $m=0,1,\cdots,M$ of particles, we can generate $\sum_{m=0}^{M} |\mathcal{C}^{M}_m| = \sum_{m=0}^{M} {M \choose m} = 2^M$ different Bethe wavefunctions from the original Bethe data $(\vec{k},\boldsymbol{\theta})$. Notice that for $m=0$ particles we always have the trivial, zero-component choice vector $\vec{c}=\emptyset$, corresponding to the vacuum $\ket{\emptyset}$, whereas for $m=M$ particles we have the unique $M$-component choice vector $\vec{c}= \vec{1}$, where we define $\vec{1} \equiv (1,2,\cdots,M)$, corresponding to the original Bethe wavefunction with full data $(\vec{k},\boldsymbol{\theta})$. 

Let us exemplify the above for the cases of $M=0,1,2,3$ particles. First, for $M=0$ particles, the only choice is the zero-component vector $\vec{c}= \emptyset$, corresponding to $\ket{\emptyset}$. Next, for $M=1$ particles, where the Bethe data consists of a quasi-momenta $(k)$, we find that $m=0$ and $m=1$ correspond to the choice vectors $\vec{c}=\emptyset$ for the vacuum $\ket{\emptyset}$ and $\vec{c}= \vec{1}=(1)$ for the original Bethe wavefunction $\ket{\Phi_{M=1}^{(k)}}$, respectively. 

The first non-trivial case is $M=2$ particles, with Bethe data $((k_1,k_2),\{\theta_{21}\})$. For $m=0,1,2$ we find 1,2,1 choices, respectively, namely
\begin{eqnarray}
m=0,~~~~~~&&\vec{c} = \emptyset,~~~~~~~ \ket{\Phi_{m=0}} =\ket{\emptyset} \nn\\
m=1,~~~~~~&&\vec{c} = (1),~~~~~\ket{\Phi^{(k_1)}_{m=1}} \nn\\
~~~~~~~&&\vec{c} = (2),~~~~~\ket{\Phi^{(k_2)}_{m=1}} \nn\\
m=2,~\vec{c} = \,&&\vec{1} = (1,2), ~~\ket{\Phi^{ (k_1,k_2),\{\theta_{21}\}}_{m=2}}
\end{eqnarray}

Finally, for $M=3$ particles, with Bethe data $((k_1,k_2,k_3), \{\theta_{21}, \theta_{31}, \theta_{32}\})$, $m=0,1,2,3$ leads to $1,3,3,1$ choices, respectively. They are:
\begin{eqnarray}
m=0,~~~~~~&&\vec{c} = \emptyset,~~~~~~~~~ \ket{\Phi_{m=0}} =\ket{\emptyset} \nn\\
m=1,~~~~~~&&\vec{c} = (1),~~~~~~\,\ket{\Phi^{(k_1)}_{m=1}} \nn\\
~~~&&\vec{c} = (2),~~~~\,~~\ket{\Phi^{(k_2)}_{m=1}} \nn\\
~~~&&\vec{c} = (3),~~~~~\,~\ket{\Phi^{(k_3)}_{m=1}} \nn\\
m=2,~~~~~~&&\vec{c} = (1,2), ~~~~\ket{\Phi^{(k_1,k_2), \{\theta_{21}\}}_{m=2}}\nn\\
~~~~~~~~&&\vec{c} = (1,3),~~~~\ket{\Phi^{(k_1,k_3),\{\theta_{31}\}}_{m=2}} \nn\\
~~~~~&&\vec{c} = (2,3),~~~~\ket{\Phi^{(k_2,k_3),\{\theta_{32}\}}_{m=2}} \nn\\
m=3,~\vec{c} = &&~\vec{1}=(1,2,3),\, \ket{\Phi^{(k_1,k_2,k_3), \{\theta_{21}, \theta_{31}, \theta_{32}\}}_{m=3}}.~~~
\end{eqnarray}

\section{Fractal bipartite decomposition}
\label{sec:LeftRightBipartition}

The first result of this paper, Theorem \ref{thm:LeftRight}, is a bipartite decomposition of a Bethe wavefunction as a sum of $2^M$ terms, where each term is a product of two local Bethe wavefunctions.   

Before we present this decomposition, we review the structure of a Hilbert space of a bipartition made of parts $A$ and $B$ in the presence of particle conservation and the bipartite decomposition of a generic $(M,N)$ wavefunction. We also introduce a decomposition of a permutation $P=R~S~Q^{\vec{a},\vec{b}}$ in terms of local permutations $R$ and $S$ for parts $A$ and $B$ and a global permutation $Q^{\vec{a},\vec{b}}$ that depends on local choices $\vec{a}$ and $\vec{b}$. We then explain how the Bethe wavefunction amplitudes factorize into products of amplitudes for parts $A$ and $B$ and a global scattering amplitude $\Theta[Q^{\vec{a},\vec{b}}]$, and introduce local Bethe wavefunctions. We then present the bipartite decomposition in Theorem \ref{thm:LeftRight}, followed by a few simple examples.

\subsection{Bipartition with particle number conservation}

Consider a partition of the 1d lattice $\mathcal{L}$, made of $N$ sites, into two regions $A$ and $B$ comprising $N_A$ and $N_B$ sites, respectively, with $N_A + N_B = N$. When the $N_A$ sites of part $A$ are to the left of the $N_B$ sites of part $B$, we call the bipartition a \textit{left-right} bipartition (see Fig.~\ref{fig:partition}). 

\begin{figure}
    \centering
    \includegraphics[width = \columnwidth]{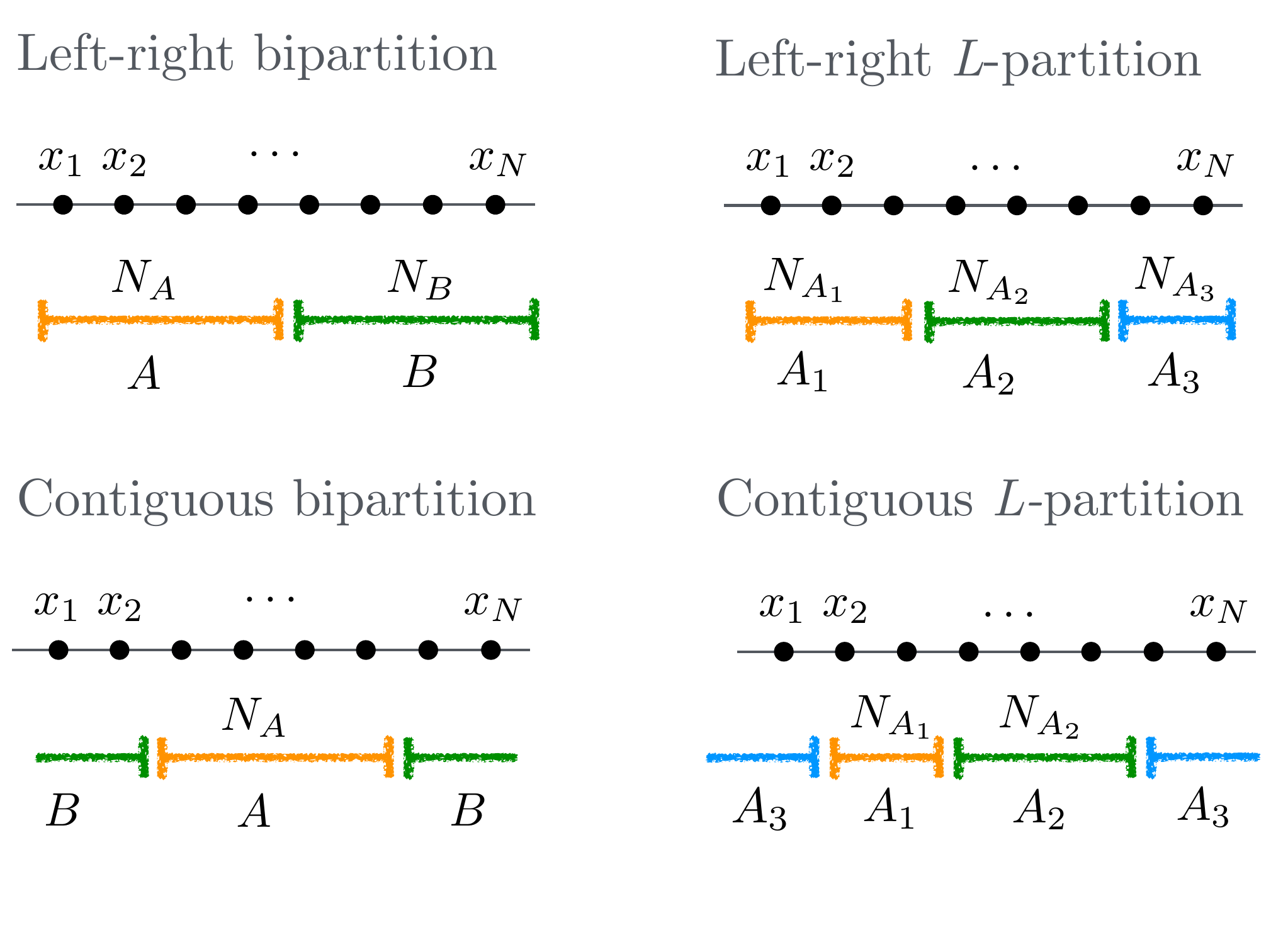}
    \caption{Types of multi-partitions we consider in this paper.}
    \label{fig:partition}
\end{figure}

Given a choice of left-right bipartition, we can express the position vector $\vec{x}$ of $M$ particles as a direct sum of two position vectors $\vec{x}_A$ and $\vec{x}_B$, where $\vec{x}_A$ contains the position of the $M_A$ particles in part $A$ and $\vec{x}_B$ contains the position of the $M_B$ particles in part $B$,
\begin{align}
\vec{x} = \{\underbrace{x_1,\cdots,x_{M_A}}_{\equiv~\vec{x}_{A}},\underbrace{x_{M_A+1}\cdots,x_{M}}_{\equiv~\vec{x}_{B}}\} = \vec{x}_{A} \oplus \vec{x}_{B}.
\end{align}
We can then replace the basis vector $\ket{\vec{x}}$ with the tensor product of two basis vectors $|\vec{x}_A\rangle|\vec{x}_B\rangle$ on parts $A$ and $B$ and the sum $\sum_{\vec{x}}$ with the three sums 
\begin{equation} \label{eq:space_factorization}
    \sum_{\vec{x}\in \mathcal{D}_{M}^{\mathcal{L}}} = \sum_{M_A=0}^{M} ~~\sum_{\vec{x}_A \in \mathcal{D}_{M_A}^A}~~\sum_{\vec{x}_B \in \mathcal{D}_{M_B}^{B}},
\end{equation}
where $\mathcal{D}_{M_A}^A$ is the domain of $M_A$ ordered positions in part $A$ and $\mathcal{D}_{M_B}^B$ is the domain of $M_B$ ordered positions in part $B$, with $M_B=M-M_A$. The cardinalities of the reduced domains are $|\mathcal{D}_{M_A}^A| = {N_A \choose M_A}$ and $|\mathcal{D}_{M_B}^B| = {N_B \choose M_B}$, respectively. In other words, the Hilbert space $\mathcal{H}_{M} ^{\mathcal{L}}$ for $M$ particle states in lattice $\mathcal{L}$ decomposes as a direct sum of tensor products of local Hilbert spaces with $M_A = 0,1,\cdots,M$ and $M_B \equiv M - M_A$ particles in subregions $A$ and $B$ respectively, namely
\begin{equation}
   \mathcal{H}_{M}^{\mathcal{L}} \cong \bigoplus_{M_A = 0}^{M}\mathcal{H}_{M_A}^{A}\otimes \mathcal{H}_{M_B}^{B}, 
\end{equation}
with dimensions $d^{A}_{M_A} = |\mathcal{D}_{M_A}^A| =  {N_A \choose M_A}$ and $d^{B}_{M_B} =|\mathcal{D}_{M_B}^B| =  {N_B \choose M_B}$, respectively.


A generic $(M,N)$ wavefunction can be decomposed as
\begin{eqnarray}
\ket{\Psi_{M}}_{AB} &=& \sum_{M_A=0}^M \ket{\Psi_{(M_A,M_B)}}_{AB} \nn\\
&=& \sum_{M_A=0}^M \sum_{\alpha=1}^{\chi_{M_A}} \lambda_{M_A,\alpha} \ket{\Psi_{M_A}^{\alpha}}_{A}\ket{\Psi_{M_B}^{\alpha}}_B,
\end{eqnarray}
where the wavefunction $\ket{\Psi_{M_A,M_B}} \in \mathcal{H}_{M_A}^{A}\otimes \mathcal{H}_{M_B}^{B}$ has $M_A$ and $M_B = M - M_A$ particles on parts $A$ and $B$, respectively, and in the second line we introduced its Schmidt decomposition $\sum_{\alpha=1}^{\chi_{M_A}} \lambda_{M_A,\alpha} \ket{\Psi_{M_A}^{\alpha}}_A\ket{\Psi_{M_B}^{\alpha}}_B$. The (partial) Schmidt rank $\chi_{M_A}$ is generically given by $\min \left(d^A_{M_A}, d^B_{M_B}\right)$ of the dimensions $d^A_{M_A}$ and $d^B_{M_B}$ of $\mathcal{H}_{M_A}^{A}$ and $\mathcal{H}_{M_B}^{B}$. For fixed $M$ and large $N \gg M$ (and assuming for simplicity below that $N$ and $M$ are even), these dimensions are maximized for $N_A=N_B=N/2$, where $d^A_{M_A} \sim (N/2)^{M_A}/(M_A!)$ and $d^B_{M_B} \sim (N/2)^{M_B}/(M_B!)$. We thus see that for large $N$, $d^A_{M_A} \leq d^{B}_{M-M_A}$ if $M_A \leq M/2$, so that $\chi_{M_A} \sim (N/2)^{m}/(m!)$ where $m \equiv \min (M_A,M-M_A)$. We conclude that, for constant $M$ and large $N$, the Schmidt rank $\chi = \sum_{M_A=0}^{M} \chi_{M_A}$ for a bipartition with $N_A=N_B=N/2$ is dominated by $\chi_{M_A=M/2} \sim (N/2)^{M_A}/M_A!$, which results in 
\begin{equation} \label{eq:chi_generic}
    \chi \sim \frac{\left(\frac{N}{2}\right)^{\frac{M}{2}}}{\left(\frac{M}{2}\right)!}
\end{equation}
to leading order in $N$. This upper bound for the Schmidt rank is seen (e.g. numerically) to be saturated for a generic $(M,N)$ wavefunction. Next, we study the bipartite decomposition of Bethe wavefunctions, which will be seen to contain a number of terms independent of $N$, in sharp contrast with Eq. \eqref{eq:chi_generic}.

\subsection{Factorization of Bethe wavefunction amplitudes}

To study the bipartite decomposition of Bethe wavefunctions, we need to understand how the scattering amplitude $\Theta[P]$ in Eq.~\eqref{eq:BW_def} factorizes. We do this in multiple steps; first we discuss the decomposition of permutation $P$.

\subsubsection*{Decomposition of permutations}

Assuming there are $M_A$ and $M_B$ particles in parts $A$ and $B$, we introduce a canonical decomposition of a permutation $P \in \mathcal{S}_M$ of $M$ symbols with components $P=(P_1, P_2, \cdots, P_{M_A}, P_{M_A+1}, \cdots, P_{M})$ as the product 
\begin{equation} \label{eq:P_decomposition}
    P = R~S~Q^{\vec{a},\vec{b}}
\end{equation}
of three permutations $R$, $S$ and $Q^{\vec{a},\vec{b}}$. Here $R \in \mathcal{S}_{M_A}$ and $S\in \mathcal{S}_{M_B}$ are \textit{local} permutations of $M_A$ and $M_B$ symbols (with $M=M_A+M_B$), whereas $Q^{\vec{a},\vec{b}} \in \mathcal{S}_M$ is a \textit{global} permutation of the $M$ symbols chosen to have components 
\begin{equation} \label{eq:globalQ}
    Q^{\vec{a},\vec{b}} \equiv (a_1, a_2, \cdots, a_{M_A}, b_{1}, b_2, \cdots, b_{M_B}) \equiv \vec{a}  \oplus \vec{b}.
\end{equation}
Specifically, vector $\vec{a} = (a_1, a_2, \cdots, a_{M_A})$ results from increasingly ordering the first $M_A$ components $P_1, P_2, \cdots, P_{M_A}$ of permutation $P$ and vector $\vec{b} = (b_1, b_2, \cdots, b_{M_B})$ is obtained by increasingly ordering the last $M_B$ components $P_{M_A+1}, \cdots, P_{M}$ of $P$. Notice that $\vec{a}$ is a choice of $M_A$ symbols out of the $M$ symbols (in the language introduced in Sec. \ref{sec:choice}) and $\vec{b}$ is a another choice with the remaining $M_B$ symbols.  To summarize in words, in the canonical decomposition \eqref{eq:P_decomposition} of a permutation $P$ of $M$ symbols in the presence of two parts $A$ and $B$ containing $M_A$ and $M_B$ particles, the global permutation $Q^{\vec{a},\vec{b}}$ determines which subset of the symbols belongs to each part, and then the local permutations $R$ and $S$ further permute the selected symbols within each part.

A simple example of such factorization is shown in Fig.~\ref{fig:permutation_factorization}, for the $M= 4$ permutation $ P = (4321)$, and the decomposition corresponding to $M_A = 2$ and $M_B = M - M_A = 2$, with $\vec{a}=(3,4)$ and $\vec{b}=(1,2)$, and local permutations $R = (4,3)$ and $S = (2,1)$.

\begin{figure}
    \centering

\resizebox{0.8\columnwidth}{!}{
\begin{tikzpicture}[baseline={([yshift=-.8ex]current bounding box.center)}]
\Large
\pic
[
line width=1.2pt,
braid/strand 1/.style={red},
braid/strand 2/.style={blue},
braid/strand 3/.style={ForestGreen},
braid/strand 4/.style={brown},
braid/add floor={1,0,3,2.9,a},
braid/add floor={1,3.1,1,0.9,b},
braid/add floor={3,4.1,1,0.9,c},
braid/floor a/.style={fill=Goldenrod},
braid/floor b/.style={fill=Apricot},
braid/floor c/.style={fill=SpringGreen},
name=coordinates,
] {braid={s_2 s_1-s_3 s_2 | s_1 | s_3}};
\node at (-3,-3) {$\Theta[P = (4321)] = $};
\node at (4.2,-3) {$\subalign{&\Theta[Q^{\vec{a}, \vec{b}}], \\&Q = (3 4 1 2)}$};
\node at (-1,-4) {$\subalign{\Theta^{\vec{a}}[R],& \\R = (4 3)&}$};
\node at (4,-5) {$\subalign{&\Theta^{\vec{b}}[S], \\&S = (2 1)}$};
\coordinate (A) at (-0.25,-5.5);
\coordinate (B) at (1.25,-5.5);
\draw[decoration={brace,mirror,raise=5pt},decorate] (A) -- (B)
    node[midway,below=15pt] {$\substack{M_A = 2}$};
    
\coordinate (A) at (1.75,-5.5);
\coordinate (B) at (3.25,-5.5);
\draw[decoration={brace,mirror,raise=5pt},decorate] (A) -- (B)
    node[midway,below=15pt] {$\substack{M_B = 2}$};
\end{tikzpicture}}
    \caption{Example of factorization of global permutation into local permutations, as well as the corresponding factorization of the scattering amplitudes}
    \label{fig:permutation_factorization}
\end{figure}
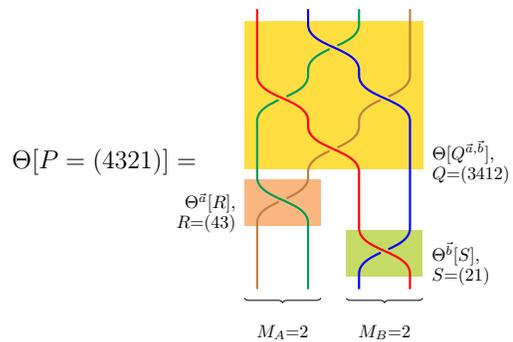

Following decomposition \eqref{eq:P_decomposition}, the sum $\sum_{P}$ over permutations $P\in \mathcal{S}_M$ in Eq.~\eqref{eq:BW_def} can be organized as three sums
\begin{equation}\label{eq:perm_factorization}
    \sum_{P\in S_M} = \sum_{\vec{a} \in \mathcal{C}_{M_A}^{M}}~~~ \sum_{R \in S_{M_A}} ~~\sum_{S \in S_{M_{B}}},
\end{equation}
where we assumed that part $A$ had $M_A$ particles and the sum over choices $\vec{a}$ is a sum over corresponding global permutations $Q^{\vec{a},\vec{b}}$ (where choice $\vec{b}$ is fully determined by choice $\vec{a}$, see below).  

\subsubsection*{Choices, choices}

Let us introduce some more notation. Given two choices $\vec{a}$ and $\vec{b}$, we define their \textit{union}, denoted $\vec{a}\cup \vec{b}$, as the vector resulting from sorting, in increasing order, the components of the direct sum $\vec{a}\oplus \vec{b}$ of the two choices $\vec{a}$ and $\vec{b}$,
\begin{equation} \label{eq:union}
    \vec{a} \cup \vec{b} \equiv \mbox{sort}_{\uparrow}\left(\vec{a}\oplus\vec{b}\right)
\end{equation}
Notice, for instance, that $\vec{a}\cup \emptyset = \vec{a}$, and that if two choices $\vec{a}$ and $\vec{b}$ have components in common, then $\vec{a}\cup \vec{b}$ is not a choice, since it will have repeated components (recall that the components of a choice $\vec{c}$ are \textit{distinct} positive integers between 1 and $M$, by definition). We then say that two choices $\vec{a}$ and $\vec{b}$ are \textit{complementary}, 
if their union is $(1,2,\cdots,M)$, that is,
\begin{equation}
    \vec{a} \cup \vec{b} = \vec{1}.
\end{equation}
As an example, the choices $\vec{a}$ and $\vec{b}$ characterizing the global permutaion $Q^{\vec{a},\vec{b}}$ in Eq. \eqref{eq:globalQ} are complementary, and therefore $\vec{b}$ is fully determined by $\vec{a}$ -- which is why we only need to sum over $\vec{a}$ in Eq. \eqref{eq:perm_factorization}.

To illustrate what terms go into the sum over global permutations $Q^{\vec{a},\vec{b}}$ in Eq. \eqref{eq:perm_factorization}, that is, in the sum over choices $\vec{a}$, we consider $M=0,1,2,3$ particles and show the possible choices $\vec{a} \in \mathcal{C}^{M}_{M_A}$ and their complementary $\vec{b}$ are, as a function of the number $M_A$ of particles in part $A$. For $M=0$ particles, or vacuum state, we have that $\vec{1}=\emptyset$ and the only option is $M_A=0$ with $\vec{a}=\vec{b}=\emptyset$.  For a Bethe wavefunction with $M=1$ particles, that is for a single-particle plane wave, we have
\begin{eqnarray}
M_A=0,~~~~&&\vec{a} =  \emptyset,~~~~~~~~~~~\vec{b} = \vec{1} = (1),~~~~~ \nn\\
M_A=1,~~~~&&\vec{a} = \vec{1} = (1),~~~\vec{b} = \emptyset.
\end{eqnarray}
For a Bethe wavefunction with $M=2$ particles we find: 
\begin{eqnarray}
M_A=0,~~~~&&\vec{a} = \emptyset,~~~~~~~~~~~~~\,\vec{b} = \vec{1}= (1,2), ~~ \nn\\
M_A=1,~~~~&&\vec{a} = (2),~~~~~~~~~~~\vec{b} = (1),~~~~~ \nn\\
~~~~~&&\vec{a} = (1),~~~~~~~~~~~\vec{b} = (2), ~~~~~  \nn\\
M_A=2,~~~~&&\vec{a} = \vec{1} = (1,2),~~\,\vec{b} = \emptyset.
\end{eqnarray}

Finally, for $M=3$ particles, we have:
\begin{eqnarray}
M_A=3,~~~~&&\vec{a} = \emptyset,~~~~~~~~~~~~~~~\,\vec{b} = \vec{1}= (1,2,3), \nn\\
M_A=2,~~~~&&\vec{a} = (1),~~~~~~~~~~~~~\vec{b} = (2,3),~~~\nn\\
~~~~&&\vec{a} = (2),~~~~~~~~~~~~~\vec{b} = (1,3),~~~\nn\\
~~~~&&\vec{a} = (3),~~~~~~~~~~~~~\vec{b} = (1,2), ~~~\nn\\
M_A=1,~~~~~&&\vec{a} = (1,2),~~~~~~~~~~\vec{b} = (3),~~~\nn\\
~~~~~&&\vec{a} = (1,3),~~~~~~~~~~\vec{b} = (2),~~~\nn\\
~~~~&&\vec{a} = (2,3),~~~~~~~~~~\vec{b} = (1),~~~~ \nn\\
M_A=0,~~~~&&\vec{a} = \vec{1}= (1,2,3),~\,\vec{b} = \emptyset.~~~~~~~
\end{eqnarray}

\subsubsection*{Factorization of amplitudes}

For fixed values of $M_A$ and $M_B=M-M_A$, the quasi-momenta vector $\vec{k}$ can also be decomposed as a direct sum of two quasi-momenta vectors $\vec{k}_A$ and $\vec{k}_B$,
\begin{equation}
    \vec{k} = \{\underbrace{k_1,\cdots,k_{M_A}}_{\equiv~\vec{k}_{A}},\underbrace{k_{M_A+1}\cdots,k_{M}}_{\equiv~\vec{k}_{B}}\} = \vec{k}_{A} \oplus \vec{k}_{B},
\end{equation}
and the plane-wave amplitude $e^{i\vec{k}\cdot\vec{x}}$ factorizes as
\begin{equation}
    e^{i\vec{k}\cdot\vec{x}} = e^{i\vec{k}_A\cdot\vec{x}_A} e^{i\vec{k}_B\cdot\vec{x}_B}.
\end{equation}
Similarly, under a permutation $P \in \mathcal{S}_M$, the permuted quasi-momenta vector $\vec{k}^{P}$ can be decomposed as
\begin{equation}
    \vec{k}^P = \{\underbrace{k_{P_1},\cdots,k_{P_{M_A}}}_{\equiv~(\vec{k}^{P})_{A}},\underbrace{k_{P_{M_A+1}}\cdots,k_{P_M}}_{\equiv~(\vec{k}^P)_{B}}\} = (\vec{k}^P)_{A} \oplus (\vec{k}^P)_{B}.
\end{equation}
Using the decomposition $P = R\,S\,Q^{\vec{a},\vec{b}}$ in Eq. \eqref{eq:P_decomposition}, we first notice that $\vec{k}^{P} = (\vec{k}^{Q^{\vec{a},\vec{b}}})^{RS}$ (first apply the global permutation $Q^{\vec{a},\vec{b}}$ on $\vec{k}$, then apply the local permutations $R$ and $S$) and that we can write $\vec{k}^{Q^{\vec{a} , \vec{b}}}$ as $\vec{k}^{\vec{a}}\oplus \vec{k}^{\vec{b}}$, that is, as the direct sum of reduced momenta vectors $\vec{k}^{\vec{a}}$ and $\vec{k}^{\vec{b}}$ by the complementary choices $\vec{a}$ and $\vec{b}$. Next, since $R$ and $S$ only permute the first $M_A$ and last $M_B$ components of the quasi-momenta vector $\vec{k}^{Q^{\vec{a},\vec{b}}}$, respectively, which are contained in $\vec{k}^{\vec{a}}$ and $\vec{k}^{\vec{b}}$, we can write
\begin{eqnarray}
\vec{k}^{P} = (\vec{k}^{\vec{a}})^R \oplus (\vec{k}^{\vec{b}})^S.
\end{eqnarray}
Accordingly, plane-wave amplitude $e^{i\vec{k}^P\cdot\vec{x}}$ factorizes as
\begin{equation} \label{eq:exp_factorization}
    e^{i\vec{k}^P\cdot\vec{x}} = e^{i(\vec{k}^{\vec{a}})^{R}\cdot\vec{x}_A} e^{i(\vec{k}^{\vec{b}})^S\cdot\vec{x}_B}.
\end{equation}

Finally, the amplitude $\Theta[P]$ can be seen to factorize as
\begin{equation} \label{eq:Theta_factorization}
    \Theta[P] = \Theta[Q^{\vec{a},\vec{b}}]~ \Theta^{\vec{a}}[R]~ \Theta^{\vec{b}}[S],
\end{equation}
where we define $\Theta^{\vec{a}}[R]$ and $\Theta^{\vec{b}}[S]$ as the scattering amplitude assigned to local permutations $R$ and $S$ for the reduced data $(\vec{k}^{\vec{a}}, \boldsymbol{\theta}^{\vec{a}})$ in part $A$ and $(\vec{k}^{\vec{b}}, \boldsymbol{\theta}^{\vec{b}})$ in part $B$, respectively. For instance, in the example in Fig.~\ref{fig:permutation_factorization} for $P=(4,3,2,1)$, where $Q=(3,4,1,2)$ (that is $\vec{a}=(3,4)$ and $\vec{b}=(1,2)$) and $R=(4,3)$ and $S=(2,1)$, the scattering amplitudes read
\begin{eqnarray}
&&\Theta[P] = (-e^{i\theta_{43}})(-e^{i\theta_{42}})(-e^{i\theta_{41}}) (-e^{i\theta_{32}}) (-e^{i\theta_{31}})(-e^{i\theta_{21}}),\nn\\
&&\Theta[Q^{\vec{a},\vec{b}}] = (-e^{i\theta_{42}})(-e^{i\theta_{41}}) (-e^{i\theta_{32}}) (-e^{i\theta_{31}}),\nn\\
&&\Theta^{\vec{a}}[R] = (-e^{i\theta_{43}}),~~~\Theta^{\vec{b}}[S] = (-e^{i\theta_{21}}),~~~
\end{eqnarray}
which indeed agrees with Eq. \eqref{eq:Theta_factorization}.

\subsubsection*{Local Bethe wavefunctions}

Using the notation introduced above, we are ready to announce a key property of a Bethe wavefunction for $M$ particles with data $(\vec{k}, \boldsymbol{\theta})$. Given a left-right bipartition of the lattice into parts $A$ and $B$, with fixed particle numbers $M_A$ and $M_B$ and a choice vector $\vec{a}$ for the $M_A$ symbols in part $A$ (which uniquely determines the complementary choice vector $\vec{b}$ of $M_B$ symbols in part $B$), we can partially sum the contribution $\Theta[P] e^{i\vec{k}^P\cdot\vec{x}} \ket{\vec{x}}$ over all position vectors $\vec{x}$ and permutations $P$ compatible with the above constraints (namely the particle numbers in parts $A$ and $B$ must be $M_A$, $M_B$, and permutations must decompose as $P = R\,S\,Q^{\vec{a},\vec{b}}$ for fixed choice $\vec{a}$) to find a product of two \textit{local} Bethe wavefunctions on parts $A$ and $B$ with data $(\vec{k}^{\vec{a}}, \boldsymbol{\theta }^{\vec{a}})$ and $(\vec{k}^{\vec{b}}, \boldsymbol{\theta}^{\vec{b}})$, respectively:
\begin{eqnarray}
&&\sum_{\vec{x}_A \in \mathcal{D}^A_{M_A}} \sum_{\vec{x}_B \in \mathcal{D}^B_{M_B}} \sum_{R \in \mathcal{S}_A}\sum_{S \in \mathcal{S}_B} \Theta[P] ~e^{i\vec{k}^P\cdot\vec{x}} \ket{\vec{x}}\nn\\
&=& \Theta\!\left[Q^{\vec{a},\vec{b}}\right] \sum_{\vec{x}_A \in \mathcal{D}^A_{M_A}} \sum_{R \in \mathcal{S}_A} \Theta^{\vec{a}}[R]~ e^{i(\vec{k}^{\vec{a}})^{R}\cdot\vec{x}_A} \ket{\vec{x}_A}\times \nn\\
&&\sum_{\vec{x}_B \in \mathcal{D}^B_{M_B}}\sum_{S \in \mathcal{S}_B}    \Theta^{\vec{b}}[S]~ e^{i(\vec{k}^{\vec{b}})^S\cdot\vec{x}_B}\ket{\vec{x}_B} \nn\\
&=& \Theta\!\left[Q^{\vec{a},\vec{b}}\right] \ket{\Phi_{M_A}^{\vec{k}^{\vec{a}}\!,\boldsymbol{\theta}^{ \vec{a}}}}_{\!\!A} \ket{\Phi_{M_B}^{\vec{k}^{\vec{b}}\!,\boldsymbol{\theta}^{\vec{b}}}}_{\!\!B},
\end{eqnarray}
where the local Bethe wavefunction in part $A$ is
\begin{equation} \label{eq:local_BW_A}
\ket{\Phi_{M_A}^{\vec{k}^{\vec{a}}\!,\boldsymbol{\theta}^{ \vec{a}}}} \equiv  \sum_{\vec{x}_A \in \mathcal{D}^A_{M_A}} \sum_{R \in \mathcal{S}_A} \Theta^{\vec{a}}[R]~ e^{i(\vec{k}^{\vec{a}})^{\!R}\cdot\vec{x}_A} \ket{\vec{x}_A}
\end{equation}
and the local Bethe wavefunction in part $B$ is
\begin{equation} \label{eq:local_BW_B}
\ket{\Phi_{M_B}^{\vec{k}^{\vec{b}}\!,\boldsymbol{\theta}^{\vec{b}}}} \equiv \sum_{\vec{x}_B \in \mathcal{D}^B_{M_B}}\sum_{S \in \mathcal{S}_B}   \Theta^{\vec{b}}[S]~ e^{i(\vec{k}^{\vec{b}})^{\!S}\cdot\vec{x}_B}\ket{\vec{x}_B}. 
\end{equation}
Notice that these Bethe wavefunctions are consistent with Eq. \eqref{eq:BW_def} if we think of part $A$ (respectively, part $B$) of $\mathcal{L}$ as defining a 1d lattice on its own. 

\subsection{Fractal bipartite decomposition}

We can finally announce our main result~\footnote{We reiterate that this result has been known for special examples, such as the XXZ chain, using Algebraic Bethe ansatz method~\cite{Korepin_Bogoliubov_Izergin_1993}. Our result concerns the most general Bethe Ansatz wave function, as defined in the paper.}.

\begin{theorem}[\textbf{(Fractal bipartite decomposition of a Bethe wavefunction}]\label{thm:LeftRight}
Given a bipartition of the 1d lattice $\mathcal{L}$ into left and right parts $A$ and $B$, the $M$-particle Bethe wavefunction $\ket{\Phi_{M}}$ in Eq. \eqref{eq:BW_def}, with data $(\vec{k}, \boldsymbol{\theta})$, can be decomposed into a sum of products of two local Bethe wavefunctions $\ket{\Phi_{m}^{\vec{k}^{\vec{a}}\!,\boldsymbol{\theta}^{\vec{a}}}}$ and $\ket{\Phi_{m'}^{\vec{k}^{\vec{b}}\!,\boldsymbol{\theta}^{\vec{b}}}}$ on parts $A$ and $B$, with data $(\vec{k}^{\vec{a}}, \boldsymbol{\theta}^{\vec{a}})$ and $(\vec{k}^{\vec{b}}, \boldsymbol{\theta}^{\vec{b}})$ as described in Eqs. \eqref{eq:choice_k}-\eqref{eq:choice_theta}. Concretely, 

\begin{enumerate}
    \item The Bethe wavefunction can be decomposed as,
\begin{equation} \label{eq:BW_decomposition}
\ket{\Phi_{M}^{\vec{k}, \boldsymbol{\theta}}}_{\!\!AB} \!\!=\! \sum_{m=0}^{M} \sum_{\vec{a}\in \mathcal{C}^{M}_{m}} \Theta\!\left[Q^{\vec{a},\vec{b}}\right] \ket{\Phi_{m}^{\vec{k}^{\vec{a}}\!,\boldsymbol{\theta}^{\vec{a}}}}_{\!\!A}
\ket{\Phi_{m'}^{\vec{k}^{\vec{b}}\!,\boldsymbol{\theta}^{\vec{b}}}}_{\!\!B},
\end{equation}
where the first sum is over particle number $m$ in part $A$ (with corresponding  $m' = M-m$ particles in part $B$), the second sum is over the choice $\vec{a}$ of $m$ symbols on part $A$ (with complementary choice $\vec{a}$ of $m'$ symbols on part $A$) and the scattering phase $\Theta\!\left[Q^{\vec{a},\vec{b}}\right]$ is given in, and described after, Eq. \eqref{eq:Theta_factorization}. 
\item The number of terms in the bipartite decomposition is at most $2^M$.
\end{enumerate}
\end{theorem}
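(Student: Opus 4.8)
The plan is to prove part 1 by starting from the definition \eqref{eq:BW_def} and applying, in order, the three factorizations already established in this section — of the position-basis sum, of the permutation sum, and of the amplitudes — and then recognizing the resulting inner sums over $A$ and $B$ as the local Bethe wavefunctions of \eqref{eq:local_BW_A}–\eqref{eq:local_BW_B}. Concretely, I first split the sum $\sum_{\vec x\in\mathcal D^{\mathcal L}_M}$ in \eqref{eq:BW_def} according to the number $m$ of particles that land in $A$ (with $m'=M-m$ in $B$) using \eqref{eq:space_factorization}, and write $\ket{\vec x}=\ket{\vec x_A}\ket{\vec x_B}$; since the occupation basis carries no reordering sign, this is unambiguous. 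Then, for each fixed $m$, I reorganize $\sum_{P\in\mathcal S_M}$ via the canonical decomposition $P=R\,S\,Q^{\vec a,\vec b}$ of \eqref{eq:P_decomposition}, i.e. as $\sum_{\vec a\in\mathcal C^M_m}\sum_{R\in\mathcal S_m}\sum_{S\in\mathcal S_{m'}}$ as in \eqref{eq:perm_factorization}, with $\vec b$ the complement of $\vec a$. Substituting the plane-wave factorization \eqref{eq:exp_factorization} and the scattering-amplitude factorization \eqref{eq:Theta_factorization}, the summand splits into a factor depending only on $(\vec x_A,R)$, a factor depending only on $(\vec x_B,S)$, and an overall phase $\Theta[Q^{\vec a,\vec b}]$ depending only on $\vec a$. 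Collecting terms, the $\sum_{\vec x_A}\sum_R$ piece is by definition $\ket{\Phi_m^{\vec k^{\vec a},\boldsymbol\theta^{\vec a}}}_A$ and the $\sum_{\vec x_B}\sum_S$ piece is $\ket{\Phi_{m'}^{\vec k^{\vec b},\boldsymbol\theta^{\vec b}}}_B$ — each being of the form \eqref{eq:BW_def} with the relevant part regarded as a 1d lattice in its own right, and with reduced data read off via \eqref{eq:choice_k}–\eqref{eq:choice_theta} — which gives exactly \eqref{eq:BW_decomposition}. This is the computation already performed in the paragraph preceding the theorem, now assembled over all $m$ and all $\vec a\in\mathcal C^M_m$.

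For part 2, the total number of terms appearing in \eqref{eq:BW_decomposition} is $\sum_{m=0}^M|\mathcal C^M_m|=\sum_{m=0}^M\binom{M}{m}=2^M$. This is only an upper bound on the number of nonzero terms: if $m>N_A$ then $\mathcal D^A_m=\emptyset$ and $\ket{\Phi_m^{\vec k^{\vec a},\boldsymbol\theta^{\vec a}}}_A=0$ (similarly on $B$ when $m'>N_B$), and distinct choices $\vec a$ may occasionally yield collinear products; hence \emph{at most} $2^M$ terms. In particular this bounds the Schmidt rank of $\ket{\Phi_M^{\vec k,\boldsymbol\theta}}$ across the bipartition by $2^M$, independently of $N$, $N_A$, $N_B$ — the statement to be contrasted with \eqref{eq:chi_generic}.

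The only genuinely delicate point is the permutation bookkeeping behind \eqref{eq:perm_factorization} and \eqref{eq:Theta_factorization}. One must check that $(\vec a,R,S)\mapsto P=R\,S\,Q^{\vec a,\vec b}$ is a bijection $\mathcal C^M_m\times\mathcal S_m\times\mathcal S_{m'}\to\mathcal S_M$, which holds because the decomposition is \emph{canonical} — the blocks $\vec a,\vec b$ are recovered from $P$ by sorting its first $m$ and last $m'$ entries, and $R,S$ are then uniquely fixed — together with the count $\binom{M}{m}\,m!\,m'!=M!$, so no permutation is double-counted or missed. And one must use the factorization of $\Theta[P]$, which is where the two-particle structure of the Bethe amplitude does the essential work: writing $\Theta[P]$ as the product of the two-particle phases over the inversions of $P$, the inversions internal to the first $m$ slots reproduce $\Theta^{\vec a}[R]$, those internal to the last $m'$ slots reproduce $\Theta^{\vec b}[S]$, and the inversions \emph{crossing} the two blocks are insensitive to $R$ and $S$ — as a set of value-pairs they are always $\{(a_i,b_\ell)\}$ — so they contribute precisely $\Theta[Q^{\vec a,\vec b}]$. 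Since both facts are established earlier in the excerpt, the proof of the theorem reduces to collecting them, and I would not expect a further obstacle.
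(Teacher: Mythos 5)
Your proof is correct and follows essentially the same route as the paper's: split the position sum via \eqref{eq:space_factorization}, the permutation sum via the canonical decomposition $P=R\,S\,Q^{\vec a,\vec b}$ of \eqref{eq:perm_factorization}, apply the amplitude factorizations \eqref{eq:exp_factorization} and \eqref{eq:Theta_factorization}, identify the local Bethe wavefunctions, and count $\sum_m\binom{M}{m}=2^M$. Your extra care about the bijectivity of $(\vec a,R,S)\mapsto P$ and the insensitivity of the cross-block inversions to $R,S$ makes explicit two points the paper leaves implicit, but the argument is the same.
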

\begin{proof} To prove the theorem, in Eq. \eqref{eq:BW_def} we simply split the sum $\sum_{\vec{x}}$ into $\sum_m \sum_{\vec{x}_A} \sum_{\vec{x}_B}$ according to Eq. \eqref{eq:space_factorization} and the sum $\sum_{P}$ over permutations into $\sum_{\vec{b}} \sum_{R}\sum_S$ according to Eq. \eqref{eq:perm_factorization}, while at the same time using the factorization of amplitudes $e^{i\vec{k}^P\!\cdot \vec{x}}$ and $\Theta[P]$ in Eqs. \eqref{eq:exp_factorization} and \eqref{eq:Theta_factorization}, namely $e^{i\vec{k}^P\!\cdot \vec{x}}= e^{i(\vec{k}^{\vec{a}})^{R}\cdot\vec{x}_A} e^{i(\vec{k}^{\vec{b}})^S\cdot\vec{x}_B}$ and $\Theta[P] = \Theta[Q^{\vec{a},\vec{b}}]\, \Theta^{\vec{a}}[R]\, \Theta^{\vec{b}}[S]$, which yields 
\begin{eqnarray}
    &&\ket{\Phi_{M}^{\vec{k}, \theta}}_{AB} = \sum_{\vec{x}\in \mathcal{D}^{\mathcal{L}}_{M}}\sum_{P\in \mathcal{S}_M} \Theta[P] ~e^{i\vec{k}^P\!\cdot \vec{x}}\ket{\vec{x}}\nn\\
    &&=\sum_{m = 0}^{M} ~~ \sum_{\vec{a} \in \mathcal{C}^M_m}\Theta[Q^{\vec{a},\vec{b}}]~~~ \times \nn\\
    &&
    \left(\sum_{\vec{x}_A \in \mathcal{D}^A_{M_A}} \sum_{R \in \mathcal{S}_A} \Theta^{\vec{a}}[R]~ e^{i(\vec{k}^{\vec{a}})^{\!R}\cdot\vec{x}_A} \ket{\vec{x}_A}  \right)\times\nn\\
    &&
    \left(\sum_{\vec{x}_B \in \mathcal{D}^B_{M_B}}\sum_{S \in \mathcal{S}_B}   \Theta^{\vec{b}}[S]~ e^{i(\vec{k}^{\vec{b}})^{\!S}\cdot\vec{x}_B}\ket{\vec{x}_B}  \right) \nn\\
    &&= \sum_{m=0}^{M} \sum_{\vec{b}\in \mathcal{C}^{M}_{m}} \Theta\!\left[Q^{\vec{a},\vec{b}}\right] \ket{\Phi_{m}^{\vec{k}^{\vec{a}}\!,\boldsymbol{\theta}^{\vec{a}}}}_{\!\!A}\!
\ket{\Phi_{m'}^{\vec{k}^{\vec{a}}\!,\boldsymbol{\theta}^{\vec{b}}}}_{\!\!B}, \label{eq:proof_BW_decomposition}
\end{eqnarray}
where in the last line we used the expressions \eqref{eq:local_BW_A} and \eqref{eq:local_BW_B} defining the local Bethe wavefunctions.

Recall that the cardinality $|\mathcal{C}_m^M| = {M \choose m}$ of $\mathcal{C}_m^M$ corresponds to the number of possible choices of $m$ symbols out of $M$. Adding contributions for different values of $m$, we find that decomposition \eqref{eq:BW_decomposition} has $\sum_{m=0}^M {M \choose m} = 2^M$ product terms.

Above we assumed that each part $A$ and $B$ could host up to $M$ particles, that is, $N_A,N_B \geq M$. We note that if the size of the parts $A$ and $B$ are such that one of them can't host all $M$ particles, then the sum over particle number $m$ on part $B$ runs from $m_{\min}\equiv\max(0,M-N_B)$ to $m_{\max}\equiv\min(M, N_A)$ in Eq.~\eqref{eq:BW_decomposition}, and subsequently the number of terms in the decomposition will be lower than $2^M$.
\end{proof}

This theorem provides an upper bound on the rank $\chi$ of the Schmidt decomposition according to the left-right bipartition (since the Schmidt decomposition has the minimum possible number of terms over all possible bipartite decompositions as a sum of product states).

\begin{corollary}
The Schmidt rank $\chi$ for a left-right bipartite decomposition of an $M$-particle Bethe state on a lattice $\mathcal{L}$ made of $N$ sites is upper-bounded by $2^M$,
\begin{equation}
    \chi \leq 2^M.
\end{equation}
This upper bound is independent of the number $N$ of sites in lattice $\mathcal{L}$ or the number $N_A$ and $N_B$ of sites in parts $A$ and $B$ of the bipartition.
\end{corollary}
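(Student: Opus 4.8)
The plan is to obtain the corollary as an essentially immediate consequence of Theorem~\ref{thm:LeftRight}, using only the elementary fact that the Schmidt rank of a bipartite pure state equals the rank of either reduced density matrix, which in turn is at most the number of terms in \emph{any} decomposition of the state as a sum of products of local vectors.

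First I would recall why an arbitrary product-state decomposition bounds the Schmidt rank. If $\ket{\Psi}_{AB}=\sum_{i=1}^{r}\ket{\alpha_i}_A\ket{\beta_i}_B$ for some local vectors $\ket{\alpha_i}\in\mathcal{H}^A$, $\ket{\beta_i}\in\mathcal{H}^B$ that need not be orthogonal or even linearly independent, then $\ket{\Psi}_{AB}$ lies in $V_A\otimes\mathcal{H}^B$ where $V_A\equiv\mathrm{span}\{\ket{\alpha_i}\}_{i=1}^{r}$ has $\dim V_A\le r$. Hence the reduced density matrix $\rho_A=\mathrm{Tr}_B\ket{\Psi}\bra{\Psi}$ is supported on $V_A$, so $\chi=\mathrm{rank}(\rho_A)\le r$; equivalently, Gram--Schmidt applied to the $\ket{\alpha_i}$ and re-expansion of $\ket{\Psi}_{AB}$ produces at most $r$ orthonormal terms on side $A$.

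Next I would invoke Theorem~\ref{thm:LeftRight}: the decomposition~\eqref{eq:BW_decomposition} writes $\ket{\Phi_M^{\vec{k},\boldsymbol{\theta}}}_{AB}$ as a sum of products of the local Bethe wavefunctions $\ket{\Phi_m^{\vec{k}^{\vec{a}},\boldsymbol{\theta}^{\vec{a}}}}_A$ and $\ket{\Phi_{m'}^{\vec{k}^{\vec{b}},\boldsymbol{\theta}^{\vec{b}}}}_B$, with the number of terms equal to $\sum_{m=0}^{M}\binom{M}{m}=2^M$ (and, as noted at the end of the proof of Theorem~\ref{thm:LeftRight}, strictly fewer whenever $N_A<M$ or $N_B<M$). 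Combining this with the bound of the previous paragraph, taking $r\le 2^M$, yields $\chi\le 2^M$. Because the only quantity entering this count is the particle number $M$ — the sizes $N$, $N_A$, $N_B$ never appear, and can only lower the count — the bound is manifestly independent of the lattice sizes, which establishes the second assertion of the corollary.

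I do not expect a genuine obstacle. The one point worth a sentence of care is that the local Bethe wavefunctions in~\eqref{eq:BW_decomposition} are in general neither mutually orthogonal nor linearly independent, so~\eqref{eq:BW_decomposition} is not literally a Schmidt decomposition; this is harmless here, since any product-state decomposition only upper-bounds $\chi$. Pinning down the \emph{exact} Schmidt rank would instead require counting how many of the $2^M$ local Bethe wavefunctions on part $A$ are linearly independent, i.e.\ computing $\dim V_A$, but that refinement is not needed for the stated inequality.
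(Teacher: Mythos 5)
Your proposal is correct and follows the same route as the paper: the corollary is an immediate consequence of Theorem~\ref{thm:LeftRight}, since the Schmidt rank is bounded above by the number of product terms in any bipartite decomposition, and the fractal decomposition has at most $2^M$ terms independently of $N$, $N_A$, $N_B$. Your closing remark about the local Bethe wavefunctions not being orthogonal or linearly independent mirrors the paper's own observation that the bound is saturated precisely when they are linearly independent.
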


If the local Bethe wavefunctions in decomposition~ \eqref{eq:BW_decomposition} are all linearly independent, then the Schmidt rank saturates the upper bound, that is $\chi = 2^M$. We emphasize once more that this constant-in-$N$ (upper bound for the) Schmidt rank is a remarkable property of Bethe wavefunctions, in sharp contrast with the Schmidt rank of a generic M-particle wavefunction, which scales with $N$ as $N^M$, see Eq. \eqref{eq:chi_generic}.


\subsubsection*{Examples}

Let us discuss some simple examples of the left-right bipartite decomposition \eqref{eq:BW_decomposition}. We start with the trivial cases of $M=0,1$ particles before considering $M=2,3$.

For $M=0$ particles, we have the vacuum state $\ket{\Phi_{M=0}} = \ket{\emptyset}$ representing $N$ empty sites. This is an unentangled state between parts $A$ and $B$, namely the product of two local vacuum states 
\begin{equation}
    \ket{\Phi_{M=0}}_{AB} = \ket{\emptyset}_A\ket{\emptyset}_B = \ket{\Phi_{m=0}}_A\ket{\Phi_{m'=0}}_B.
\end{equation} 
For $M=1$, the one-particle plane wave is decomposed into two product terms, corresponding to having $m=0$ particles or $m=1$ particles in part $A$, namely
\begin{eqnarray}
&&\ket{\Phi_{M=1}^{(k)}}_{AB} = \sum_{x=1}^N e^{ikx}\ket{x}\nn\\
&=&   \ket{\emptyset}_A\!\left(\sum_{x\in B} e^{ikx}\ket{x}\right) + \left(\sum_{x\in A} e^{ikx}\ket{x}\right)\!\ket{\emptyset}_B~~~~\nn\\
&=&\ket{\Phi_{m=0}}_A\ket{\Phi_{m'=1}^{(k)}}_B + \ket{\Phi_{m=1}^{(k)}}_A \ket{\Phi_{m'=0}}_B  .~~~
\end{eqnarray}
Notice that each term consists of the product of a local vacuum state and a local plane wave state.
For $M=2$, direct re-organization of Eq. \eqref{eq:BW_M2} produces the following linear combination of terms, with each term being indeed the product of two local Bethe states,
\begin{eqnarray}
&&\sum_{\substack{(x_{1}, x_2) \in \mathcal{D}_{2}^{\mathcal{L}}}}\left(e^{ik_1 x_1}e^{ik_2 x_2} - e^{i\theta_{21}}e^{ik_2 x_1}e^{ik_1 x_2}\right)\ket{x_1x_2} \nn\\
&=& \ket{\emptyset}_A\!\!\left(\sum_{\substack{(x_{1}, x_2) \\\in \mathcal{D}_{2}^{B}}}\left(e^{ik_1 x_1}e^{ik_2 x_2} - e^{i\theta_{21}}e^{ik_2 x_1}e^{ik_1 x_2}\right)\ket{x_1x_2}\right) \nn \\
&&~~+ \left(\sum_{x_1\in A} e^{ik_1x_1}\ket{x_1}\right)\left(\sum_{x_2\in B} e^{ik_2x_2}\ket{x_2}\right)\nn \\
&-&  e^{i\theta_{21}} \left(\sum_{x_1\in A} e^{ik_2x_1}\ket{x_1}\right)\left(\sum_{x_2\in B} e^{ik_1x_2}\ket{x_2}\right)\nn \\
&+& \!\left(\!\sum_{\substack{(x_{1}, x_2) \\\in \mathcal{D}_{2}^{A}}}\left(e^{ik_1 x_1}e^{ik_2 x_2} - e^{i\theta_{21}} e^{ik_2 x_1}e^{ik_1 x_2}\right)\ket{x_1x_2}\right)\!\ket{\emptyset}_B \nn
\end{eqnarray}
or, equivalently,
\begin{eqnarray}
&&\ket{\Phi_{M=2}^{(k_1,k_2), \{\theta_{21}\}}}_{AB} 
= \ket{\Phi_{m=0}}_A\ket{\Phi_{m'=2}^{(k_1,k_2),\{\theta_{21}\} }}_B~~~~~~~~ \nn\\
&&~~~~~~~~~~~~~~~~~\left.\begin{array}{r} 
\ket{\Phi_{m=1}^{(k_1)}}_A \ket{\Phi_{m'=1}^{(k_2)}}_B\\
-e^{i\theta_{21}}\ket{\Phi_{m=1}^{(k_2)}}_A \ket{\Phi_{m'=1}^{(k_1)}}_B
\end{array} ~~\right\} m=1\nn\\
&&~~~~~~~~~~~~~~~~~~~~~~~~+ \ket{\Phi_{m=2}^{(k_1,k_2),\{\theta_{21}\} }}_A \ket{\Phi_{m'=0}}_B,
\end{eqnarray}
where we have organized the 4 product terms in the decomposition according to growing number $m=0,1,2$ of particles in part $A$, which contributed 1,1,1 product terms to the decomposition, respectively.

For $M=3$, a tedious but otherwise straightforward re-organization of terms in Eq. \eqref{eq:BW_M3} according to Fig. \ref{fig:M3} leads to
\begin{eqnarray}
&&\ket{\Phi_{M=3}^{(k1,k2,k3), \{\theta_{21}, \theta_{32}, \theta_{31}\}}}_{AB} \nn\\
&&~~~~~~~~~~~= \ket{\Phi_{m=0}^{}}_A\ket{\Phi_{m'=3}^{(k1,k2,k3),\{\theta_{21}, \theta_{32}, \theta_{31}\}}}_B
 \nn\\
 &&\left.\begin{array}{r}
+\ket{\Phi_{m=1}^{(k_1)}}_A \ket{\Phi_{m'=2}^{(k_2,k_3),\{\theta_{32}\} }}_B   \\
-e^{i\theta_{21}}\ket{\Phi_{m=1}^{(k_2)}}_A \ket{\Phi_{m'=2}^{(k_1,k_3),\{\theta_{31}\} }}_B \\
+e^{i\theta_{32}}e^{i\theta_{31}}\ket{\Phi_{m=1}^{(k_3)}}_A \ket{\Phi_{m'=2}^{(k_1,k_2), \{\theta_{21}\} }}_B \end{array} \right\} m=1\nn\\
&&\left.\begin{array}{r}
+\ket{\Phi_{m=2}^{(k_1,k_2),\{\theta_{21}\}}}_A \ket{\Phi_{m'=1}^{(k_3)}}_B\\
-e^{i\theta_{32}}\ket{\Phi_{m=2}^{(k_1,k_3),\{\theta_{21}\}}}_A \ket{\Phi_{m'=1}^{(k_2)}}_B\\
+e^{i\theta_{21}}e^{i\theta_{31}}\ket{\Phi_{m=2}^{(k_2,k_3),\{\theta_{21}\}}}_A \ket{\Phi_{m'=1}^{(k_1)}}_B\end{array} \right\} m=2\nn\\
&&~~~~~~~~~~~~+ \ket{\Phi_{m=3}^{(k1,k2,k3), \{\theta_{21}, \theta_{32}, \theta_{31}\}}}_A \ket{\Phi_{m'=0}^{}}_B,~~~
\label{eq:BW_decomposition_M3}
\end{eqnarray}
where we have again organized the 8 product terms in the decomposition according to growing number $m=0,1,2,3$ of particle in part $A$, which contributed 1,3,3,1 product terms to the decomposition, respectively.

\begin{figure}
\includegraphics[width = 0.9 \columnwidth]{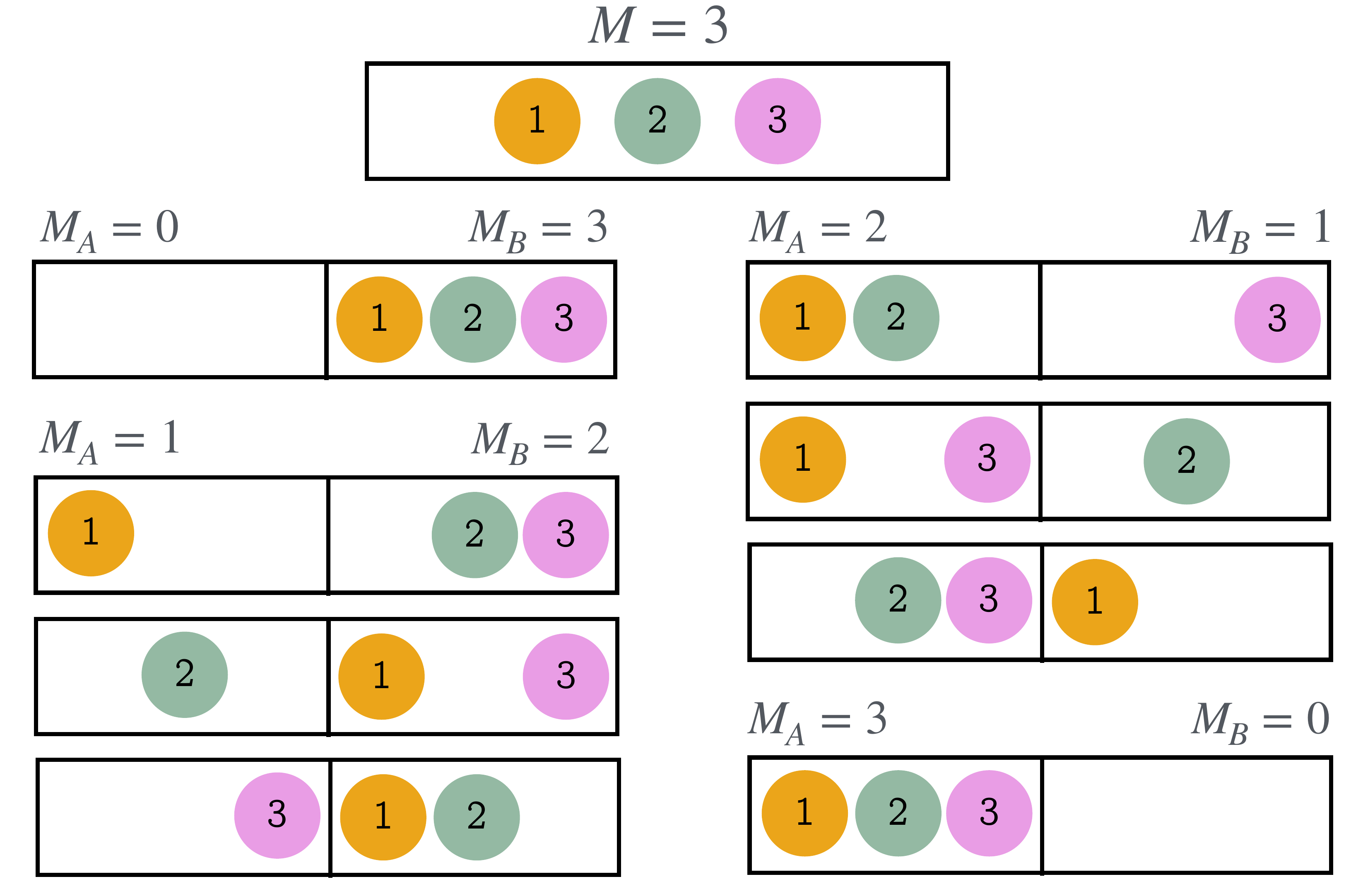}
\caption{The terms for the $M = 3$ Bethe state left-right decomposition according to Thm.~\ref{thm:LeftRight}, which represents the organization of terms in Eq.~\eqref{eq:BW_decomposition_M3}. } \label{fig:M3}
\end{figure}

\section{Fractal multipartite decomposition}
\label{sec:LeftRightMultipartition}

In this section we generalize the bipartite decomposition \eqref{eq:BW_decomposition} of a Bethe wavefunction to the multipartite case, in which the 1d lattice $\mathcal{L}$ is divided into more than just two parts $A$ and $B$. We consider the $L-$partite case with an arbitrary number $L$ of parts $\{A_i\} \equiv \{A_1, A_2, \cdots, A_L\}$, such that each partition is made of contiguous sites and ordered such that $A_i$ is to the left of $A_{i+1}$. We call this partition $\{A_i\}$ a left-right $L-$partition (see Fig.~\ref{fig:partition}).

We will see that it is again possible to decompose a Bethe wavefunction for $M$ particles as a sum of products of local Bethe wavefunctions on the $L$ parts, with a total of $L^M$ product terms, and the result follows from repeated applications of the bipartite decomposition. 


\subsection{Revamped notation for bipartite decomposition}

In order to proceed, it is important to both simplify the current notation and introduce a new one. We start by rewriting the bipartite decomposition \eqref{eq:BW_decomposition} of Theorem \ref{thm:LeftRight} in the previous section.

Recall that the Bethe data $(\vec{k}, \boldsymbol{\theta})$, identified with the unity choice $\vec{1} = (1,2,\cdots,M)$, characterizes a global Bethe wavefunction $\ket{\Phi_M^{\vec{k}, \boldsymbol{\theta}}}$, whereas all possible reduced data $(\vec{k}^{\vec{c}}, \boldsymbol{\theta}^{\vec{c}})$, identified with choice vectors $\vec{c}$ containing an increasingly ordered subset of the components of $\vec{1}$, characterize all possible local Bethe wavefunctions $\ket{\Phi_m^{\vec{k}^{\vec{c}}, \boldsymbol{\theta}^{\vec{c}}}}$ that appear in the bipartite decomposition \eqref{eq:BW_decomposition}. It is then clear that the choice $\vec{c}$ contains enough information to identify the corresponding local Bethe wavefunction $\ket{\Phi_m^{\vec{k}^{\vec{c}}, \boldsymbol{\theta}^{\vec{c}}}}$ (with the particle number $m$ being the number of components in choice $\vec{c}=(c_1,c_2,\cdots,c_m)$, that is $m=|\vec{c}|$). We can therefore denote the above local Bethe wavefunction simply as $\ket{\vec{c}}$, and make the replacement
\begin{equation}
\ket{\Phi_{m}^{\vec{k}^{\vec{a}}\!,\boldsymbol{\theta}^{\vec{a}}}}_{\!\!A}
\ket{\Phi_{m'}^{\vec{k}^{\vec{b}}\!,\boldsymbol{\theta}^{\vec{b}}}}_{\!\!B} \longrightarrow \ket{\vec{a},\vec{b}}.
\end{equation}

We say that a choice $\vec{c}$ is contained in another choice $\vec{d}$, denoted $\vec{c} \leq \vec{d}$, if all the components of $\vec{c}$ are also components of $\vec{d}$. Notice that by construction, $\vec{c} \leq \vec{1}$. We then say that the local Bethe wavefunctions are characterized by choice vectors $\vec{c}$ such that $\emptyset \leq \vec{c} \leq \vec{1}$, and replace the sums over particle number $m$ and choices having that particle number with a single sum over choices,
\begin{equation}
\sum_{m=0}^{M} \sum_{\vec{a}\in \mathcal{C}^{M}_{m}} \longrightarrow \sum_{\vec{a}=\emptyset}^{\vec{1}} \longrightarrow \sum_{\vec{a}},
\end{equation}
where in the last step we further simplified the above notation by not indicating the summation domain, which is assumed. We can now also make explicit the dependence between the choices $\vec{a}$ and $\vec{b}$ for the local wavefunctions on parts $A$ and $B$, namely $\vec{a}\cup \vec{b} = \vec{1}$, which we can impose through the delta function, $\delta_{\vec{a}\cup\vec{b},\vec{1}}$, so that
\begin{equation}
    \sum_{\vec{a}} \rightarrow \sum_{\vec{a},\vec{b}} \delta_{\vec{a}\cup \vec{b},\vec{1}}.
\end{equation}

We also simplify the notation regarding the scattering amplitude 
\begin{equation}
    \Theta[Q^{\vec{a},\vec{b}}] \rightarrow \Theta\!\left[\vec{a},\vec{b}\right]
\end{equation}
for which we rewrite the definition here,
\begin{eqnarray}\label{eq:Theta_def}
    \Theta[\vec{a},\vec{b}] &=& \prod_{\alpha=1}^{m} \prod_{\beta=1}^{m'} f_{\vec{a},\vec{b}}(\alpha,\beta)\nn \\
    f_{\vec{a},\vec{b}}(\alpha,\beta) &=& \left\{\begin{array}{rl}
        1, & \mbox{if}~ a_{\alpha} < b_{\beta}, \\
        -e^{i\theta_{a_{\alpha}b_{\beta}}} & \mbox{if}~ a_{\alpha} > b_{\beta},
    \end{array} \right.
\end{eqnarray}
and remind the reader that it is a product of scattering phases $-e^{i\theta_{a_{\alpha}b_{\beta}}}$ for each pair of symbols ($a_{\alpha},b_{\beta}$) in $\vec{a}$ and $\vec{b}$ such that $a_{\alpha} > b_{\beta}$, implying that the order of these two symbols in $\vec{a}\oplus\vec{b}$ is the opposite to their original order (namely $a_{\alpha} < b_{\beta}$) in $\vec{1}$. 
Putting all these changes together, the bipartite decomposition \eqref{eq:BW_decomposition} reads
\begin{equation} \label{eq:BW_decomposition_new}
|\vec{1}\rangle= \sum_{ \vec{a},\vec{b} }  \delta_{\vec{a}\cup \vec{b},\vec{1}} ~ \Theta\!\left[\vec{a},\vec{b}\right] \ket{\vec{a},\vec{b}}.
\end{equation}

A crucial observation is that such an analogous bipartite decomposition also applies when we decompose a local Bethe wavefunction $\ket{\vec{\mu}}$ defined by a choice vector $\vec{\mu} \leq \vec{1}$,  as long as the decomposition involves choices $\vec{a}$ and $\vec{b}$ that are consistent with the delta function constraint $\delta_{\vec{a}\cup \vec{b}, \vec{\mu}}$, in which case we write
\begin{equation} \label{eq:BW_general_bipartite}
|\vec{\mu}\rangle= \sum_{ \vec{a},\vec{b} }  \delta_{\vec{a}\cup \vec{b},\vec{\mu}} ~ \Theta\!\left[\vec{a},\vec{b}\right] \ket{\vec{a},\vec{b}}.
\end{equation}

Moreover, anticipating the multipartite case, where the use of a diagrammatic notation helps significantly in guiding the discussion, we represent the delta function $\delta_{\vec{a}\cup \vec{b}, \vec{\mu}}$ constraining choices $\vec{a}$ and $\vec{b}$ so that their union $\vec{a}\cup \vec{b}$ equates $\vec{\mu}$, as
\begin{equation} 
    \includegraphics[width = 0.52\columnwidth]{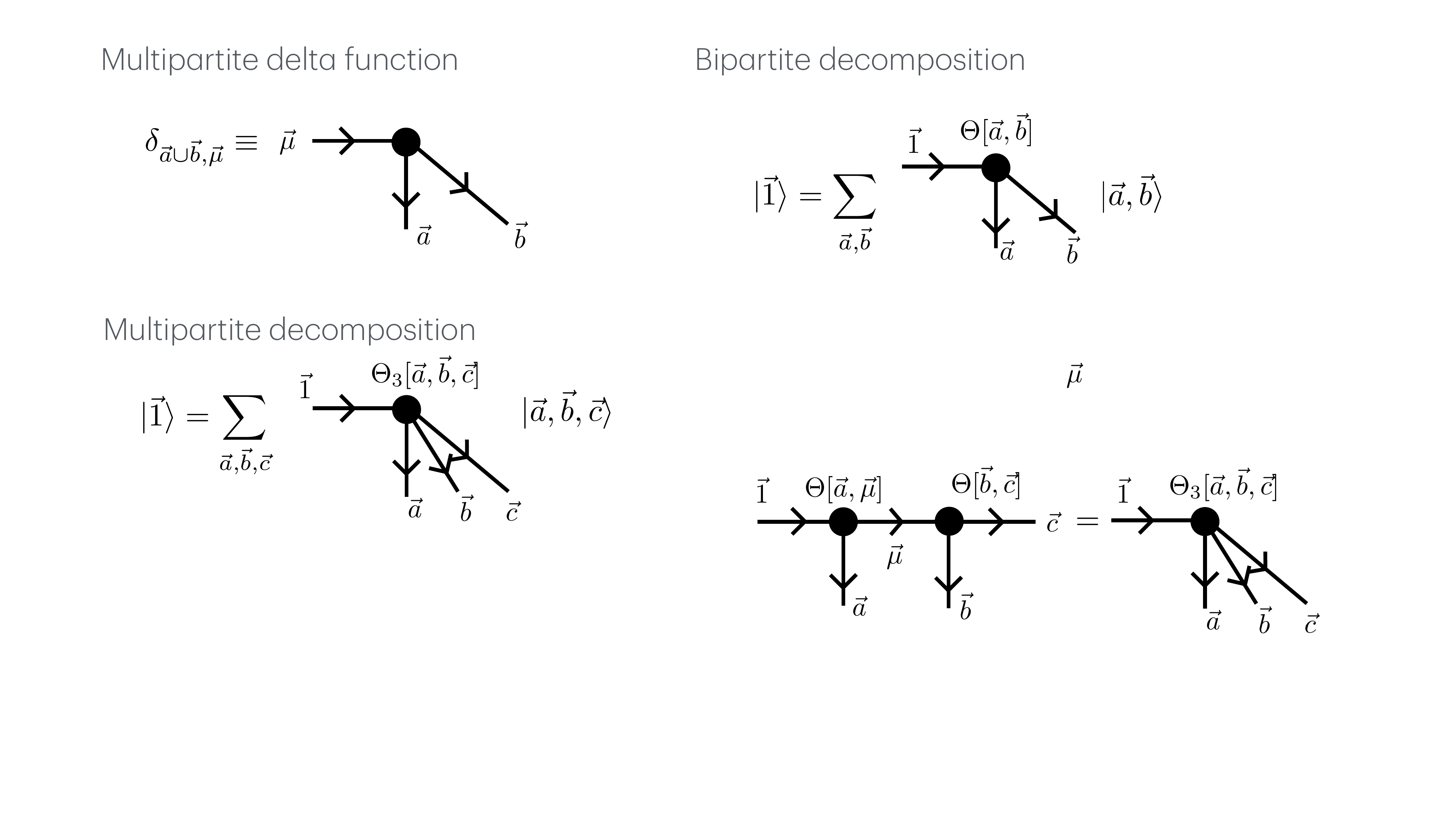}.
\end{equation}
where outgoing arrows are used for the choices $\vec{a}$ and $\vec{b}$ and an incoming arrow is used for $\vec{\mu}$.

Using this notation we can succinctly write the bipartite decomposition \eqref{eq:BW_decomposition} of a Bethe wavefunction as,
\begin{equation}
    \includegraphics[width = 0.55\columnwidth]{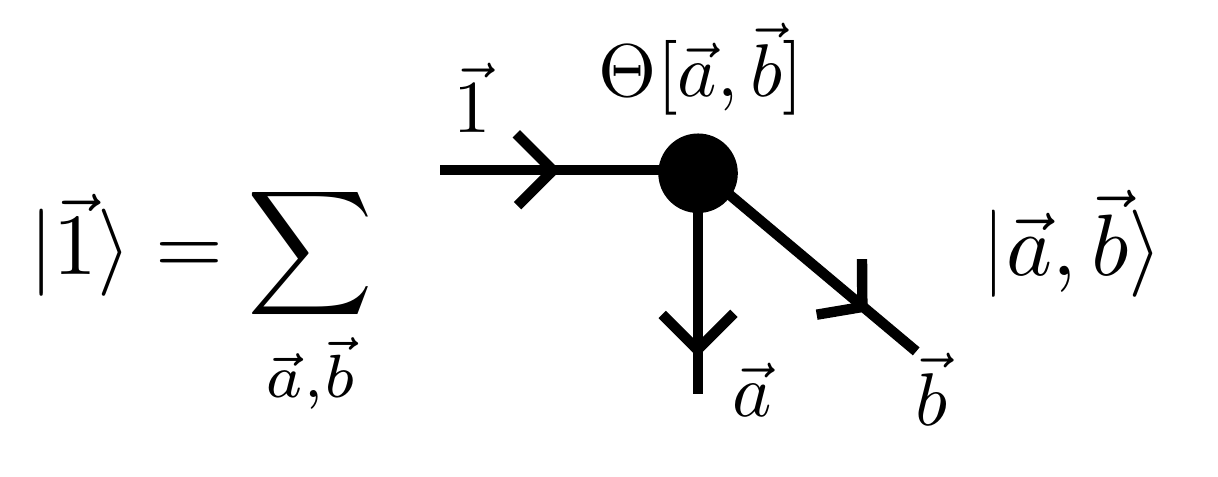},
\end{equation}
where the scattering amplitude $\Theta[\vec{a}, \vec{b}]$ decorates the delta function.



\subsection{Fractal tripartite decomposition}
In order to prepare for the derivation of the multipartite decomposition for an arbitrary number $L$ of parts below, as well as to introduce the key steps in a simplified context, we first describe how to extend the bipartite decomposition of $\ket{\vec{1}}$ for a left-right bipartition $\mathcal{L} = A\cup A'$ of the 1d lattice $\mathcal{L}$ into parts $A$ and $A'$ to a tripartite decomposition of $\ket{\vec{1}}$ for a left-right tripartition $\mathcal{L} =A\cup B\cup C$ of the same 1d lattice, where the region $A'$ has further been left-right partitioned into parts $B$ and $C$, that is $A' = B\cup C$. Note that indeed this tripartition is left-right; $A$ is to the left of $B$, which is to the left of $C$.

We start from the bipartite decomposition in \eqref{eq:BW_decomposition_new} for parts $A$ and $A'$, which we rewrite as
\begin{equation} \label{eq:bipartite1}
    |\vec{1}\rangle = \sum_{\vec{a},\vec{\mu}} \delta_{\vec{a}\cup \vec{\mu}, \vec{1}} ~\Theta[\vec{a}, \vec{\mu}] \ket{\vec{a},\vec{\mu}},
\end{equation}
where $\ket{\vec{a}}$ and $\ket{\vec{\mu}}$ are local Bethe wavefunctions on parts $A$ and $A'$, and proceed to further decompose the local Bethe wavefunction $|\vec{\mu}\rangle$ on $A'$ into local Bethe wavefunctions $\ket{\vec{b}}$ and $\ket{\vec{c}}$ on parts $B$ and $C$. 
\begin{equation} \label{eq:bipartite2}
    |\vec{\mu}\rangle = \sum_{\vec{b},\vec{c}} \delta_{\vec{b}\cup \vec{c}, \vec{\mu}}~ \Theta[\vec{b}, \vec{c}]  \ket{\vec{b},\vec{c}}.
\end{equation}
Combining these expressions we obtain
\begin{equation} \label{eq:amubc}
    |\vec{1}\rangle = \sum_{\vec{a},\vec{\mu}} \delta_{\vec{a}\cup \vec{\mu}, \vec{1}} ~ \Theta[\vec{a}, \vec{\mu}] ~\sum_{\vec{b},\vec{c}} \delta_{\vec{b}\cup \vec{c}, \vec{\mu}} ~ \Theta[\vec{b}, \vec{c}]  \ket{\vec{a},\vec{b}, \vec{c}},
\end{equation}
which is a linear combination of products $\ket{\vec{a},\vec{b},\vec{c}}$ of local Bethe wavefunctions.
In our final tripartite decomposition, we re-express the global Bethe wavefunction as
\begin{align} \label{eq:BW_tripartite}
    |\vec{1}\rangle = \sum_{\vec{a}, \vec{b}, \vec{c}}  \delta_{\vec{a}\cup \vec{b}\cup \vec{c}, \vec{1}} ~ \Theta_3[\vec{a}, \vec{b}, \vec{c}] ~\ket{\vec{a}, \vec{b}, \vec{c}},
\end{align}
namely in terms of (i) a tripartite delta $\delta_{\vec{a}\cup \vec{b}\cup \vec{c}, \vec{1}}$ that ensures that the product of the three local Bethe wavefunctions $\ket{\vec{a},\vec{b},\vec{c}}$ contains all the $M$ symbols in $\vec{1}$ without repeating any, and (ii) a tripartite scattering amplitude $\Theta_3[\vec{a}, \vec{b}, \vec{c}]$, which will next be seen to factorize as a product of bipartite scattering amplitudes. Indeed, comparing Eqs. \eqref{eq:amubc} and \eqref{eq:BW_tripartite} we have,
\begin{align}\label{eq:tripartite_amplitude_diag}
    \delta_{\vec{a}\cup \vec{b}\cup \vec{c}, \vec{1}} ~\Theta_3[\vec{a}, \vec{b}, \vec{c}] &= \left(\sum_{\vec{\mu}}\delta_{\vec{a}\cup \vec{\mu}, \vec{1}} ~\Theta[\vec{a},\vec{\mu}]  ~\delta_{\vec{\mu}, \vec{b}\cup \vec{c}} \right)\Theta[\vec{b}, \vec{c}] \nn \\
    &= \delta_{\vec{a}\cup \vec{b}\cup \vec{c}, \vec{1}} ~\Theta[\vec{a},\vec{b}\cup\vec{c}] ~\Theta[\vec{b}, \vec{c}].
\end{align}
Then, using the important bipartite property  
\begin{align}\label{eq:Theta_subfactorization}
    \Theta[\vec{a}, \vec{b}\cup \vec{c}] = \Theta[\vec{a}, \vec{b}] \Theta[\vec{a}, \vec{c}],
\end{align}
which follows from the definition of $\Theta$ in \eqref{eq:Theta_def} and states that when a choice $\vec{\mu}$ can be expressed as the union of two choices, $\vec{\mu} = \vec{b}\cup \vec{c}$, then the scattering amplitude $\Theta[\vec{a},\vec{\mu}]$ factorizes as the product of two scattering amplitudes $\Theta[\vec{a},\vec{b}]~\Theta[\vec{a},\vec{c}]$, we arrive at
\begin{align}\label{eq:tripartite_amplitude}
    \Theta_3[\vec{a}, \vec{b}, \vec{c}] &= \Theta[\vec{a},\vec{b}] ~ \Theta[\vec{a},\vec{c}] ~ \Theta[\vec{b}, \vec{c}],
\end{align}
(for eligible choices $\vec{a}, \vec{b}, \vec{c}$, that is, for choices fulfilling $\vec{a}\cup \vec{b}\cup \vec{c} = \vec{1}$) which expresses the tripartite scattering amplitude $\Theta_3[\vec{a}, \vec{b}, \vec{c}]$ as a product of three bipartite scattering amplitudes.

In summary, we have obtained our tripartite decomposition \eqref{eq:BW_tripartite}, which we can diagramatically represent as,
\begin{align}
    \includegraphics[width = 0.6\columnwidth]{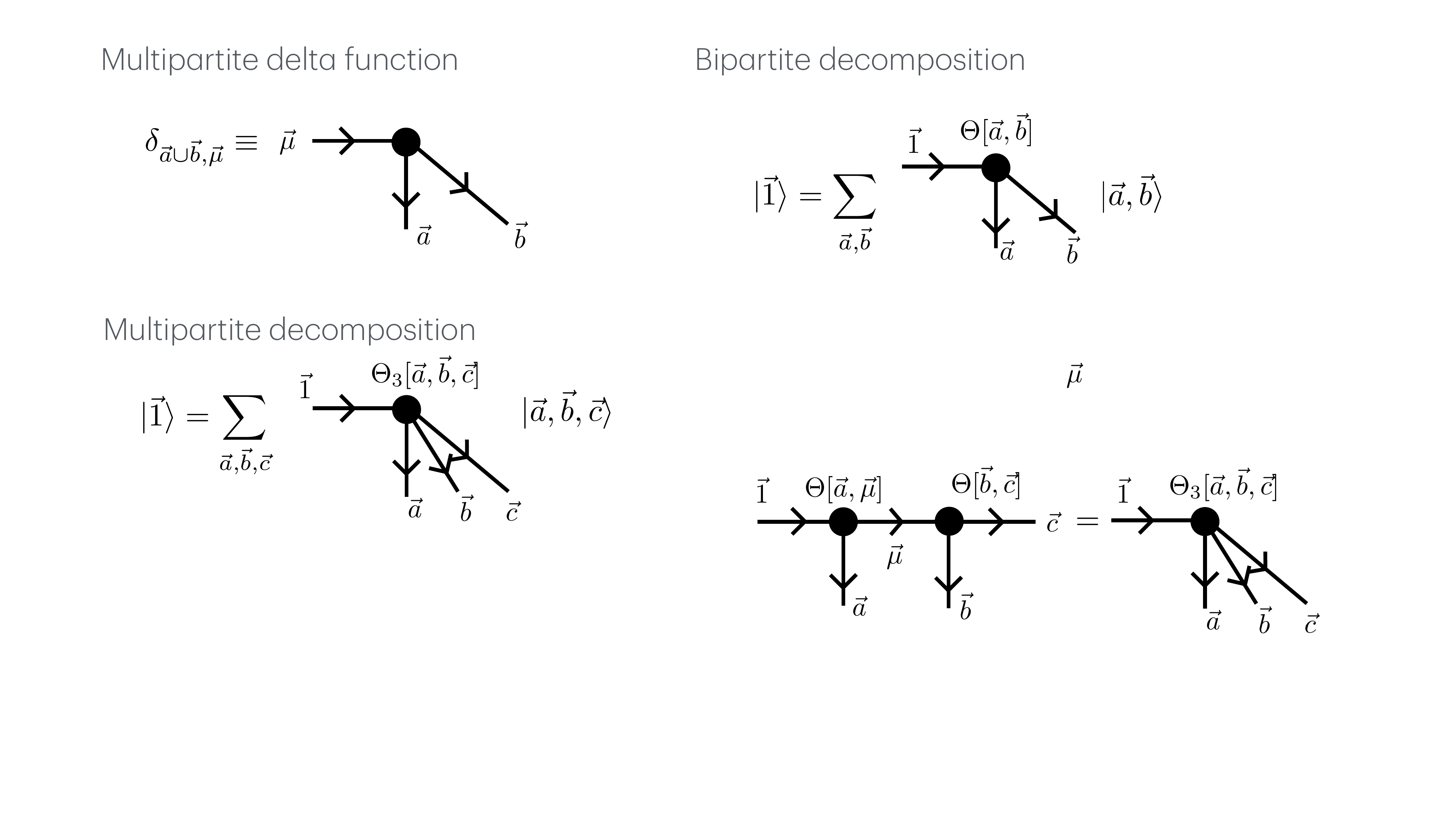}
\end{align}
by combining the bipartite decompositions \eqref{eq:bipartite1} and \eqref{eq:bipartite2}, a process that we can diagramatically represent as \begin{align}\label{eq:tripartite_deriv}
    \includegraphics[width = 0.7\columnwidth]{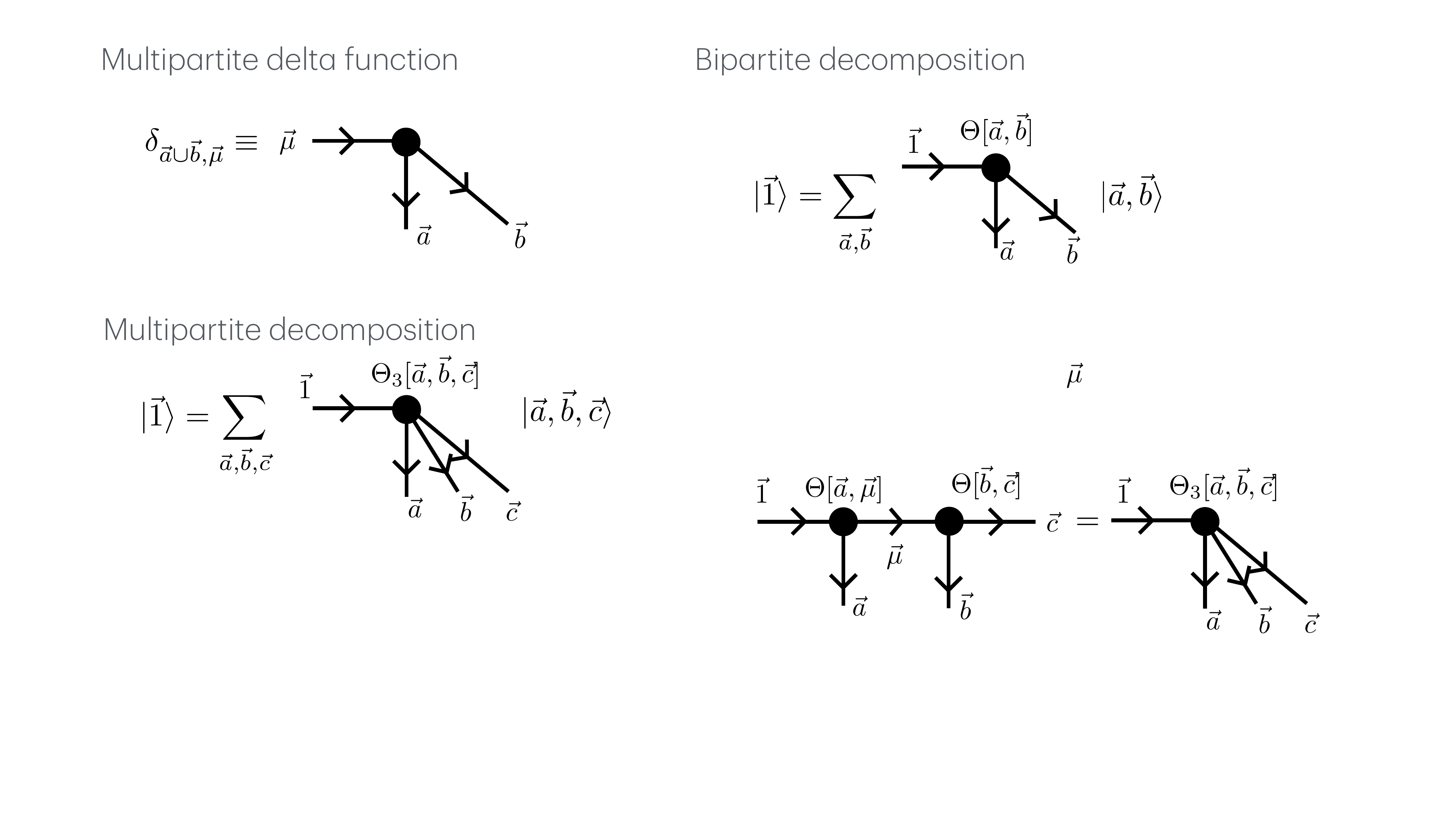}
\end{align}
which is equivalent to Eq. \eqref{eq:tripartite_amplitude_diag}. Here we use the convention, standard in the tensor network diagrammatic notation, that there is a sum over the bond index $\vec{\mu}$ connecting two nodes.

Let us analyze the number of independent terms in the tripartite decomposition above. Assume there are totally $M$ particles, and each subregion $A, B, C$ is such that it can host $M$ particles itself (i.e. $N_A, N_B, N_C \geq M$). Since the number of choices for $M$ symbols is $2^M$, naively it would seem that the number of terms can be as large as $\left(2^{M}\right)^3 = 2^{3M} = 8^M$.  However, the tripartite delta function $\delta_{\vec{a}\cup \vec{b}\cup \vec{c}, \vec{1}}$ imposes the global constraint $\vec{a}\cup \vec{b} \cup \vec{c} = \vec{1}$, which reduces the number of allowed terms. More concretely, consider an allocation of $M_A$, $M_B$ and $M_C$ particles on $A, B$ and $C$ respectively. Due to constraint that $M_A + M_B + M_C = M$, the number of independent choice vectors $\vec{a}, \vec{b}, \vec{c}$ is $\frac{M!}{M_A!M_B!M_C!}$. Therefore, if each of the parts can host up to $M$ 
particles, then there are $\sum_{M_{A},M_{B},M_C}\frac{M!}{M_A!M_B!M_C!} = 3^{M}$ (where $M_A + M_B + M_C = M$) terms in the tripartite decomposition, a number that is significantly lower than $8^M$.

\subsection{Fractal multipartite decomposition}

We can extend the above discussion to a left-right partition made of any number $L$ of parts $\{A_l\}$ so as to decompose a Bethe wavefunction into a sum of tensor products $\ket{\vec{a}_1, \vec{a}_2, \cdots \vec{a}_L}$ of local Bethe wavefunctions $\{\ket{\vec{a}_l}\}$ supported on each part. Such decomposition can be diagrammatically represented as,
\begin{equation}
    \includegraphics[width = 0.75\columnwidth]{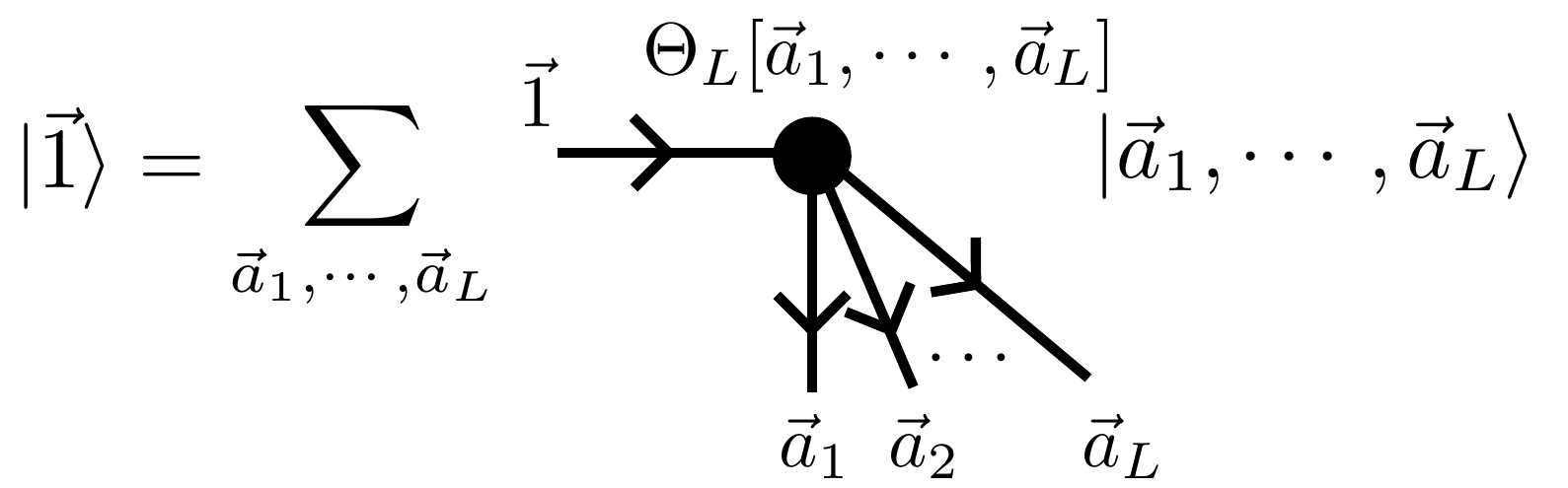},
\end{equation}
in terms of a delta function $\delta_{\vec{a}_1 \cup \vec{a}_2 \cup \cdots \cup \vec{a}_L, \vec{1}}$ establishing that all the symbols in $\vec{1}$ appear once and just once in the product $\ket{\vec{a}_1, \vec{a_2}, \cdots \vec{a}_L}$, and an appropriately defined $L$-partite amplitude $\Theta_{L}[\vec{a}_1, \vec{a}_2, \cdots, \vec{a}_L]$. We state the result in the form of a theorem,

\begin{theorem}[\textbf{Fractal multipartite decomposition of a Bethe wavefunction}]\label{thm:LeftRightMultipartite}
Given a left-right partition of the 1d lattice $\mathcal{L}$ into $L$ parts $\{A_l\}$, $l = 1, \cdots, L$, an $M$-particle Bethe wavefunction $\ket{\vec{1}}$ on $\mathcal{L}$ can be expressed as a sum of tensor products $\ket{\vec{a}_1,  \cdots, \vec{a}_L}$ of local Bethe wavefunctions $\ket{\vec{a}_l}$ supported on each part. More concretely,

\begin{enumerate}
    \item The multipartite decomposition is,
\begin{align} \label{eq:multipartite_decomp}
    |\vec{1}\rangle = \!\!\sum_{\vec{a}_1,\cdots,\vec{a}_L} \!\! \delta_{\vec{a}_1 \cup \cdots \cup \vec{a}_L, \vec{1}} ~ \Theta_L[\vec{a}_1, \cdots, \vec{a}_L]\ket{\vec{a}_1,  \cdots, \vec{a}_L}.
\end{align}
where the amplitude $\Theta_L[\vec{a}_1, \cdots, \vec{a}_L]$ is given by,
\begin{equation}\label{eq:multipartite_amplitude}
\Theta_{L}[\vec{a}_1,\vec{a}_2, \cdots, \vec{a}_L]  =  \prod_{l_1=1}^{L-1}~\prod_{l_2=l_1+1}^{L} \Theta[\vec{a}_{l_1},\vec{a}_{l_2}].
\end{equation}
\item The number of terms in the multipartite decomposition is at most $L^M$, which is independent of the number $N$ of sites in the lattice $\mathcal{L}$ or the number $\{N_l\}$ of sites in each of its parts $\{A_l\}$.
\end{enumerate}
\end{theorem}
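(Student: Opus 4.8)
The plan is to prove the statement by induction on the number $L$ of parts, using the fractal bipartite decomposition of Theorem~\ref{thm:LeftRight} — in its revamped form \eqref{eq:BW_general_bipartite}, which holds for \emph{any} local Bethe wavefunction $\ket{\vec{\mu}}$ — both as the base case ($L=2$) and as the single tool invoked at each inductive step. To make the induction close, I would phrase the inductive hypothesis slightly more generally than the theorem statement: for every choice vector $\vec{\mu}\leq \vec{1}$ and every left-right partition of a lattice into $L$ parts, the local Bethe wavefunction $\ket{\vec{\mu}}$ admits the decomposition \eqref{eq:multipartite_decomp}–\eqref{eq:multipartite_amplitude} with $\vec{1}$ replaced by $\vec{\mu}$ throughout; the claimed theorem is then the case $\vec{\mu}=\vec{1}$. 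This generalization is legitimate because, as noted after \eqref{eq:local_BW_B}, each local Bethe wavefunction is itself a genuine Bethe wavefunction on the sublattice it lives on, so Theorem~\ref{thm:LeftRight} and all the amplitude-factorization identities apply to it verbatim.

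For the inductive step, given a left-right $L$-partition $\{A_1,\dots,A_L\}$, I would first peel off the leftmost part $A_1$ by applying \eqref{eq:BW_general_bipartite} to the bipartition $\mathcal{L}=A_1\cup(A_2\cup\cdots\cup A_L)$, obtaining $\ket{\vec{\mu}}=\sum_{\vec{a}_1,\vec{\nu}}\delta_{\vec{a}_1\cup\vec{\nu},\vec{\mu}}\,\Theta[\vec{a}_1,\vec{\nu}]\,\ket{\vec{a}_1,\vec{\nu}}$, where $\ket{\vec{\nu}}$ is a local Bethe wavefunction supported on $A_2\cup\cdots\cup A_L$. I then apply the $(L-1)$-partite inductive hypothesis to each $\ket{\vec{\nu}}$ and substitute. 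Summing over the intermediate choice $\vec{\nu}$, the delta $\delta_{\vec{a}_2\cup\cdots\cup\vec{a}_L,\vec{\nu}}$ collapses that sum by fixing $\vec{\nu}=\vec{a}_2\cup\cdots\cup\vec{a}_L$ (which is a valid choice only if the $\vec{a}_l$ are pairwise disjoint), turning $\delta_{\vec{a}_1\cup\vec{\nu},\vec{\mu}}$ into the global $\delta_{\vec{a}_1\cup\cdots\cup\vec{a}_L,\vec{\mu}}$ and leaving the amplitude $\Theta[\vec{a}_1,\vec{a}_2\cup\cdots\cup\vec{a}_L]\,\Theta_{L-1}[\vec{a}_2,\dots,\vec{a}_L]$. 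Iterating the sub-factorization property \eqref{eq:Theta_subfactorization} gives $\Theta[\vec{a}_1,\vec{a}_2\cup\cdots\cup\vec{a}_L]=\prod_{l=2}^{L}\Theta[\vec{a}_1,\vec{a}_l]$, and combining this with the inductive form of $\Theta_{L-1}$ reproduces exactly $\Theta_L[\vec{a}_1,\dots,\vec{a}_L]=\prod_{1\le l_1<l_2\le L}\Theta[\vec{a}_{l_1},\vec{a}_{l_2}]$, completing the step. This is precisely the computation carried out for $L=3$ around \eqref{eq:tripartite_amplitude_diag}–\eqref{eq:tripartite_amplitude}, now run as a recursion.

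For the counting statement, I would note that $\delta_{\vec{a}_1\cup\cdots\cup\vec{a}_L,\vec{1}}$ forces the $\vec{a}_l$ to be pairwise disjoint with union $\{1,\dots,M\}$, i.e. an ordered set partition of the $M$ symbols into $L$ (possibly empty) blocks. When every part can host up to $M$ particles ($N_l\ge M$), all such allocations occur, and for fixed block sizes $M_1,\dots,M_L$ with $\sum_l M_l=M$ there are $M!/(M_1!\cdots M_L!)$ of them; summing over allocations and using the multinomial theorem, $\sum_{M_1+\cdots+M_L=M}\frac{M!}{M_1!\cdots M_L!}=L^M$, equivalently the statement that each of the $M$ symbols independently goes to one of the $L$ parts. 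If some part is too small to host a given number of particles those terms simply drop out, so the count is at most $L^M$; either way it is manifestly independent of $N$ and of the individual $N_l$.

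The main obstacle is bookkeeping rather than anything deep: one must set up the inductive hypothesis at the right level of generality (for arbitrary local Bethe wavefunctions $\ket{\vec{\mu}}$, not only the global one) so that the $(L-1)$-partite hypothesis can legitimately be invoked on the $\ket{\vec{\nu}}$ created by peeling off $A_1$, and one must manipulate the nested delta functions and the many-factor extension of \eqref{eq:Theta_subfactorization} carefully. Beyond that, the argument is a direct generalization of the tripartite derivation already spelled out in Section~\ref{sec:LeftRightMultipartition}.
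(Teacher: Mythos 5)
Your proof is correct and follows essentially the same route as the paper: induction on the number of parts, driven by the bipartite decomposition \eqref{eq:BW_general_bipartite} applied to local Bethe wavefunctions, the collapse of the intermediate delta, the iterated sub-factorization \eqref{eq:Theta_subfactorization}, and the multinomial count $\sum M!/(M_1!\cdots M_L!)=L^M$. The only cosmetic difference is that you peel off the leftmost part and recurse on the right remainder (strengthening the inductive hypothesis to arbitrary $\ket{\vec{\mu}}$), whereas the paper iteratively splits the rightmost remainder of a $p$-partite decomposition of $\ket{\vec{1}}$; the algebra is the same up to relabeling.
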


\begin{proof}
We prove this by induction, by constructing a $(p+1)$-partite decomposition from a $p$-partite decomposition. [For pedagogical reasons, the $p = 2$ case (constructing the tripartite decomposition \eqref{eq:BW_tripartite} from the bipartite decomposition \eqref{eq:BW_decomposition_new}) was already demonstrated prior to stating the theorem. Notice that the postulated form of the multipartite amplitude $\Theta_L$ in Eq. Eq.~\eqref{eq:multipartite_amplitude} is consistent with the tripartite amplitude $\Theta_3$ in Eq.~\eqref{eq:tripartite_amplitude}.]

Consider a $p$-partite decomposition of a Bethe wavefunction $\ket{\vec{1}}$ according to a left-right partition of the lattice $\mathcal{L} = A_1\cup \cdots \cup A_{p-1}\cup M_{p-1}$ into parts $A_1, \cdots, A_{p-1}, M_{p-1}$, namely
\begin{eqnarray} \label{eq:multipartite_p}
&&|\vec{1}\rangle = \!\!\!\sum_{\vec{a}_1,\cdots, \vec{a}_{p-1}, \vec{\mu}_{p-1}} \!\!\! \delta_{\vec{a}_1\cup \cdots\cup\vec{a}_{p-1} \cup\vec{\mu}_{p-1},\vec{1}} ~\nn\\
&&~~~~~\Theta_p[\vec{a}_1,\cdots,\vec{a}_{p-1}, \vec{\mu}_{p-1}] \ket{\vec{a}_1,\cdots,\vec{a}_{p-1}, \vec{\mu}_{p-1}},~~~~~
\end{eqnarray}
where the $p$-partite scattering amplitude $\Theta_p$ is given by Eq.~\eqref{eq:multipartite_amplitude}.
Consider now dividing part $M_{p-1}$ further into two parts $A_p$ and $M_{p}$, and the bipartite decomposition of the local bethe wavefunction $\ket{\vec{\mu}_{p-1}}$, 
\begin{equation} \label{eq:bipartite_in_multipartite}
\ket{\vec{\mu}_{p-1}} = \sum_{\vec{a}_{p},\vec{\mu}_{p}} \delta_{\vec{a}_{p}\cup\vec{\mu}_{p},\vec{\mu}_{p-1}}~ \Theta[\vec{a}_{p},\vec{\mu}_{p}]  \ket{\vec{a}_{p},\vec{\mu}_p}.
\end{equation}

Replacing Eq.~\eqref{eq:bipartite_in_multipartite} in Eq. ~\eqref{eq:multipartite_p} we obtain,
\begin{eqnarray} \label{eq:multipartite_pplus1}
&&|\vec{1}\rangle = \!\!\!\!\!\!\!\!\sum_{\vec{a}_1,\cdots, \vec{a}_{p-1}, \vec{\mu}_{p-1}} \!\!\!\!\!\!\!\! \delta_{\vec{a}_1\cup\cdots\cup\vec{a}_{p-1}\cup\vec{\mu}_{p-1},\vec{1}} ~\Theta_p[\vec{a}_1,\cdots,\vec{a}_{p-1}, \vec{\mu}_{p-1}]~  \nn\\
&&\sum_{\vec{a}_{p},\vec{\mu}_{p}} \delta_{\vec{a}_{p}\cup\vec{\mu}_{p},\vec{\mu}_{p-1}} ~ \Theta[\vec{a}_{p},\vec{\mu}_{p}]  \ket{\vec{a}_1,\cdots,\vec{a}_{p-1}, \vec{a}_{p},\vec{\mu}_{p}},~~~~
\end{eqnarray}
which is already a decomposition in terms of products $\ket{\vec{a}_1,\cdots,\vec{a}_{p-1}, \vec{a}_{p},\vec{\mu}_{p}}$ of $p+1$ local Bethe wavefunctions. To prove the theorem, we would like to re-express the above decomposition as
\begin{eqnarray} \label{eq:running_out_of_names}
&&|\vec{1}\rangle=  \!\!\sum_{\vec{a}_1,\cdots, \vec{a}_{p},\vec{\mu}_{p}}  \!\! \delta_{\vec{a}_1\cup\cdots\cup\vec{a}_{p-1}\cup\vec{a}_{p}\cup\vec{\mu}_{p},\vec{1}} \\
&&~~~~~\Theta_{p+1}[\vec{a}_1,\cdots,\vec{a}_{p-1},\vec{a}_{p}, \vec{\mu}_{p}]~ \ket{\vec{a}_1,\cdots,\vec{a}_{p-1},\vec{a}_{p}, \vec{\mu}_{p}}, \nn
\end{eqnarray}
in terms of the product of a global delta function and $(p+1)$-partite scattering amplitudes $\Theta_{p+1}$ that in turn decompose as the product of bipartite scattering amplitudes in Eq. \eqref{eq:multipartite_amplitude}. Comparing Eqs.~\eqref{eq:multipartite_pplus1} and \eqref{eq:running_out_of_names}, we find
\begin{eqnarray}\label{eq:theta_overline}
 &&\delta_{\vec{a}_1\cup\cdots\cup\vec{a}_{p}\cup\vec{\mu}_{p},\vec{1}}  ~ \Theta_{p+1}[ \vec{a}_1, \cdots,\vec{a}_{p}, \vec{\mu}_{p}] 
 \nn \\
 &&=\biggl( \sum_{\vec{\mu}_{p-1}} \delta_{\vec{a}_{p}\cup\vec{\mu}_{p},\vec{\mu}_{p-1}}  ~\Theta_p[\vec{a}_1,\cdots,\vec{a}_{p-1}, \vec{\mu}_{p-1}] \nn\\ 
 &&~~~~~~~~~~~~~~~~~~ \times \delta_{\vec{a}_{p}\cup\vec{\mu}_{p},\vec{\mu}_{p-1}}\! \biggr) \Theta[\vec{a}_{p},\vec{\mu}_{p}] \nn \\
 &&=  \delta_{\vec{a}_1\cup\cdots\cup\vec{a}_{p}\cup\vec{\mu}_{p},\vec{1}}  ~  \Theta[\vec{a}_1,\cdots,\vec{a}_{p-1}, \vec{a}_{p}\cup \vec{\mu}_{p}] ~ \Theta[\vec{a}_{p},\vec{\mu}_{p}]. ~~~~~\label{eq:Lpartite_amplitude_diag}
\end{eqnarray}
In our last step, we explicitly expand 
\begin{eqnarray}
 &&\Theta[\vec{a}_1,\cdots,\vec{a}_{p-1}, \vec{a}_{p}\cup \vec{\mu}_{p}] ~ \Theta[\vec{a}_{p},\vec{\mu}_{p}] \nn\\
 && = \!\left(\prod_{i = 1}^{p-2}\prod_{j = i+1}^{p-1}\Theta[\vec{a}_i, \vec{a}_j]\!\right) \!\!\! \left(\prod_{k = 1}^{p-1} \Theta[\vec{a}_k, \vec{a}_p\cup \vec{\mu}_p]\!\right)\!\Theta[\vec{a}_p,\vec{\mu}_p] \nn\\
 && = \!\left( \prod_{i = 1}^{p-2}\prod_{j = i+1}^{p-1}\Theta[\vec{a}_i, \vec{a}_j]\!\right) \!\!\! \left( \prod_{k = 1}^{p-1} \Theta[\vec{a}_k, \vec{a}_p] ~ \Theta[\vec{a}_k, \vec{\mu}_p]\! \right)\! \Theta[\vec{a}_p,\vec{\mu}_p]\nn\\
 &&= \left( \prod_{i = 1}^{p-1}\prod_{j = i+1}^{p}\Theta[\vec{a}_i, \vec{a}_j] \right)\prod_{k = 1}^{p}\Theta[\vec{a}_k, \vec{\mu}_p], \label{eq:pproof}
\end{eqnarray}
where we used Eq. \eqref{eq:Theta_subfactorization} to make the replacement  $\Theta[\vec{a}_k, \vec{a}_p\cup \vec{\mu}_p] = \Theta[\vec{a}_k, \vec{a}_p] ~ \Theta[\vec{a}_k, \vec{\mu}_p]$. Notice that expression \eqref{eq:pproof} is equivalent to Eq. \eqref{eq:multipartite_amplitude} if we replace $\vec{\mu}_p$ with $\vec{a}_{L}$, then set $p=L-1$, and re-organize the resulting products $\left(\prod_{i = 1}^{L-2}\prod_{j = i+1}^{L-1} \Theta[\vec{a}_i, \vec{a}_j]\right) \prod_{k = 1}^{L-1}\Theta[\vec{a}_k, \vec{a}_{L}] $ into $\prod_{i = 1}^{L-1} \prod_{j = i+1}^{L}\Theta[\vec{a}_i, \vec{a}_j]$.

The crucial observation of the relation between the amplitudes $\Theta_{p}$ and $\Theta_{p+1}$ is encapsulated in the following diagram,
\begin{equation}
    \includegraphics[width = 0.7\columnwidth]{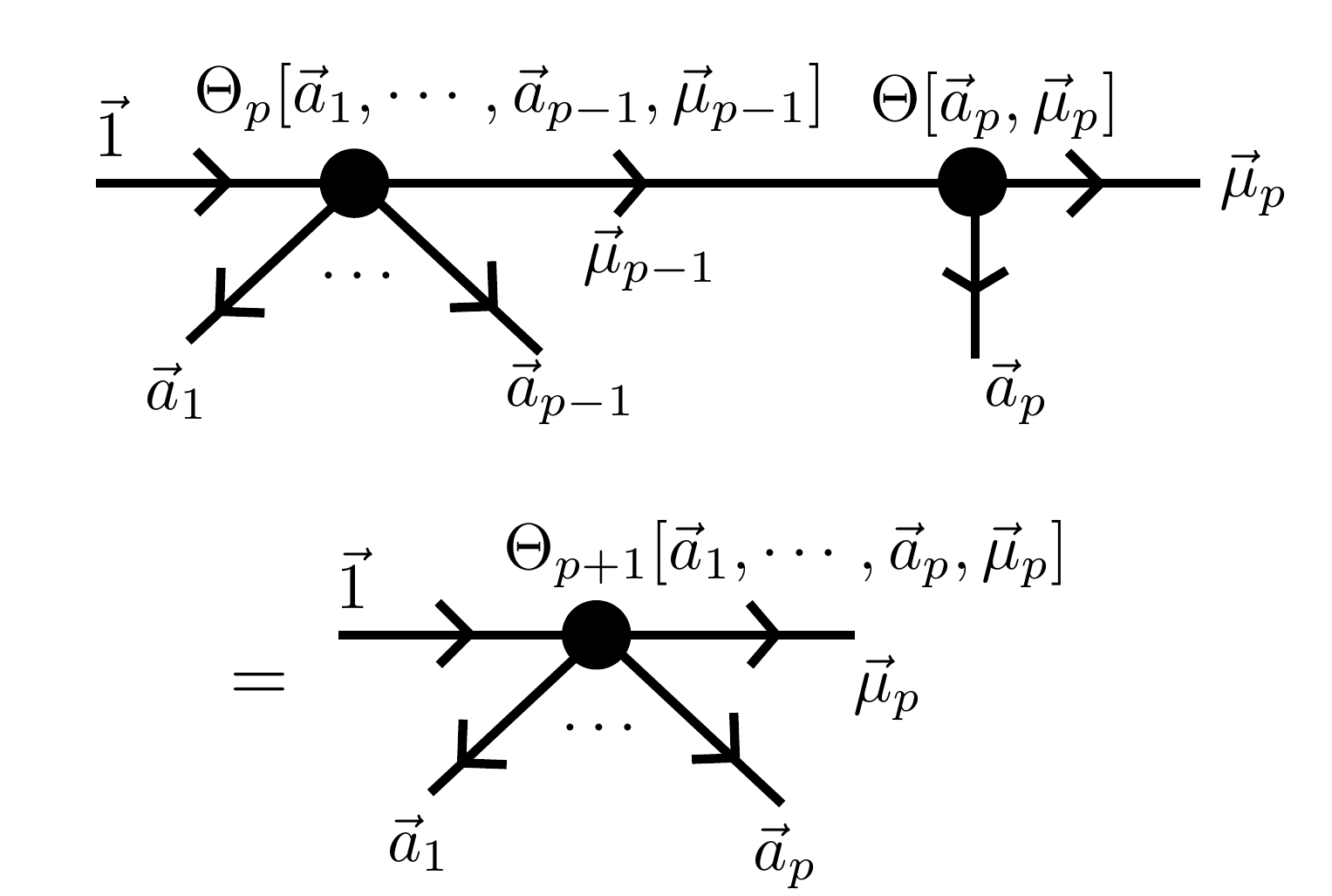}
\end{equation}
which is equivalent to Eq. \eqref{eq:Lpartite_amplitude_diag} and is the $L$-partite generalization of the Eq.~\eqref{eq:tripartite_deriv}.

Now we analyze the number of terms in the decomposition, Eq.~\ref{eq:multipartite_decomp}. As for the tripartite case, the multipartite decomposition imposes the global constraint $\cup_l \vec{a}_l = \vec{1}$. Given the allocation of $M_l$ symbols on part $A_l$, there are $\frac{M!}{M_{1}!M_{2}!\cdots M_{k}!}$ independent sets of $l$ choices $\vec{a}_1, \vec{a}_2, \cdots, \vec{a}_L$ such that $\cup_l \vec{a}_l = \vec{1}$. Therefore, if each of the parts can host up to $M$ 
particles, meaning the number of sites $N_l$ is at least $M$, then there are $\sum_{M_{1}\cdots M_{k}}\frac{M!}{M_{1}!M_{2}!\cdots M_{L}!} = L^{M}$ (where $\sum_{i}M_{i} = M$). If some parts are not capable of hosting $M$ symbols, the decomposition contains less than $L^M$ terms, hence the upper-bound on the number of terms.

\end{proof}
 
We also have a direct corollary bounding the multipartite Schmidt rank measure of entanglement from the bound on the number of terms in the multipartite decomposition,

\begin{corollary}\label{thm:multipartite_schmidt_rank}
For any left-right $L-$partition, the $L-$partite Schmidt rank $\chi_L$ (and related entanglement measure $\log (\chi_L)$~\cite{Eisert_2001}) of an $M-$particle Bethe state is upper-bounded by $L^M$ (and $M \log (L) $), independent of $N$.
\end{corollary}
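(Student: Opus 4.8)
The plan is to read the bound straight off Theorem~\ref{thm:LeftRightMultipartite}. Recall that, for a fixed left-right $L$-partition of $\mathcal{L}$ into parts $\{A_l\}$, the $L$-partite Schmidt rank $\chi_L$ of a state is the minimal number $R$ of product terms needed to write the state as $\sum_{r=1}^{R}\ket{\psi^{(1)}_r}\cdots\ket{\psi^{(L)}_r}$ with each $\ket{\psi^{(l)}_r}$ supported on part $A_l$ (the associated entanglement measure being $\log\chi_L$, cf.~\cite{Eisert_2001}). Since the fractal multipartite decomposition of Eq.~\eqref{eq:multipartite_decomp} is precisely an expansion of this form --- a sum over tuples of choice vectors $(\vec{a}_1,\cdots,\vec{a}_L)$ of products $\ket{\vec{a}_1,\cdots,\vec{a}_L}$ of local Bethe wavefunctions, each factor $\ket{\vec{a}_l}$ living on a single part $A_l$ --- any count of its surviving terms yields an upper bound on $\chi_L$.

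First I would invoke part~2 of Theorem~\ref{thm:LeftRightMultipartite}, which states that the number of terms in Eq.~\eqref{eq:multipartite_decomp} that are not annihilated by the delta function $\delta_{\vec{a}_1\cup\cdots\cup\vec{a}_L,\vec{1}}$ (or suppressed because some part is too small to host all $M$ symbols) is at most $L^M$. Hence $\chi_L\le L^M$, and taking logarithms $\log\chi_L\le M\log L$. Because the right-hand side $L^M$ depends only on $M$ and $L$, it is manifestly independent of the total number $N$ of sites and of the individual sizes $\{N_l\}$ of the parts; this $N$-independence is simply inherited from Theorem~\ref{thm:LeftRightMultipartite}.

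Finally I would add a short remark that for $L=2$ this reproduces the bipartite bound $\chi\le 2^M$ already recorded as a corollary of Theorem~\ref{thm:LeftRight}, and that the bound is saturated, $\chi_L = L^M$, whenever the $L^M$ products of local Bethe wavefunctions appearing in Eq.~\eqref{eq:multipartite_decomp} are linearly independent across the partition. There is essentially no obstacle in this argument: the only point that needs a little care is matching the various inequivalent notions of "multipartite Schmidt rank'' in the literature to the one relevant here --- the tensor (CP) rank across the ordered $L$-partition --- and checking that Eq.~\eqref{eq:multipartite_decomp} genuinely realizes such an expansion, which it does by construction since every factor $\ket{\vec{a}_l}$ is supported on a single part.
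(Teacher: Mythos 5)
Your proposal is correct and follows exactly the paper's route: the corollary is read off directly from part~2 of Theorem~\ref{thm:LeftRightMultipartite}, since the fractal multipartite decomposition in Eq.~\eqref{eq:multipartite_decomp} is itself an $L$-partite product expansion with at most $L^M$ terms, which bounds the minimal such expansion. The extra remarks on the $L=2$ case and on saturation under linear independence are consistent with what the paper states elsewhere.
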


\subsection{Contiguous multipartite decomposition}

So far we have highlighted the multipartite decomposition of Bethe wavefunctions for a left-right partition of a 1d lattice, namely for a partition consisting of $L$ parts $A_1, A_2, \cdots, A_L$ each made of contiguous sites in a 1d chain \textit{with open boundaries}, where the sites in part $A_i$ are to the left of the sites in part $A_j$ when $i<j$. A natural question to ask is whether similar results hold when the $L$ parts are made of contiguous sites \textit{on a ring}, that is, when the last site of the lattice is considered to be nearest neighbor of the first site of the lattice and one of the parts, say part $A_1$, contains sites both at the beginning and at the end of the chain, see Fig.~\ref{fig:partition}. 
This is relevant since Bethe wavefunctions are often used to represent ground states of integrable 1 dimensional Hamiltonians on a ring (e.g. with periodic boundary conditions). On a ring, there is no natural ordering for the sites, so it is natural to consider also a partition where one of the parts includes both the left-most and right-most sites of the chain, see Fig.~\ref{fig:lr_cont}.   

\begin{figure}
    \centering
    \includegraphics[width = \columnwidth]{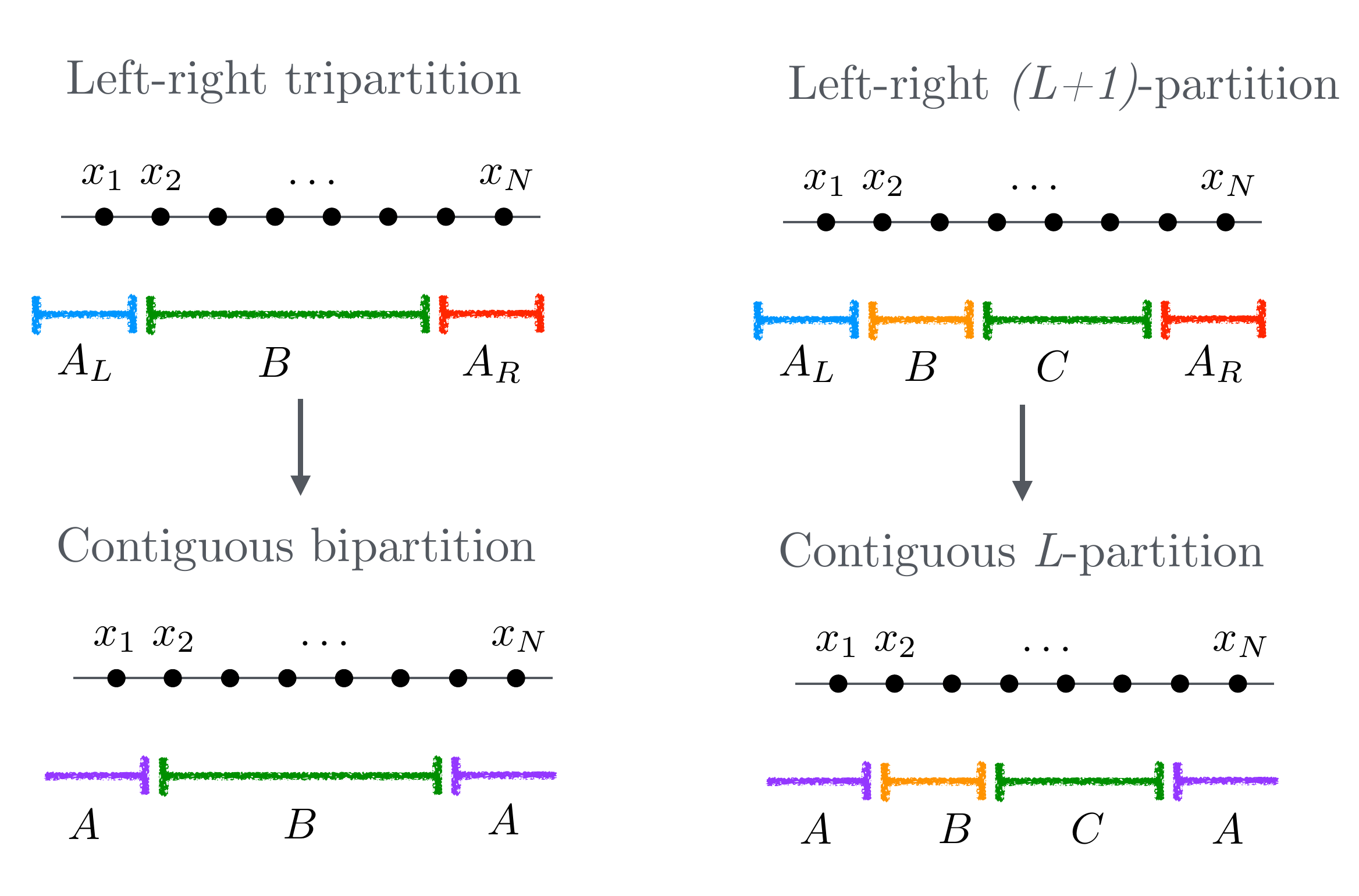}
    \caption{Converting a left-right $L+1$ partition to contiguous $L$ partition.}
    \label{fig:lr_cont}
\end{figure}

We now show that the a similar decomposition, as a sum of only $L^M$ terms (where each term is the product of $L$ local wavefunctions) is also possible in this case, thus extending to contiguous partitions on a ring the validity of corollary~\ref{thm:multipartite_schmidt_rank} that controls the Schmidt measure for multipartite decompositions. In this case, however, each term in the sum is the product of $L-1$ local Bethe wavefunctions and a local wavefunction that is not of that type.

We consider first a contiguous bipartition on the ring with two parts $A$ and $B$, where part $A = A_L \cup A_R$ further breaks into left and right subparts $A_L$ and $A_R$ containing left-most and right-most sites and $B$ contains the rest of sites in between. Our goal is to produce a bipartite decomposition for an $M$-particle Bethe wavefunction $\ket{\vec{1}}$ according to the partition $A,B$. Since the three parts $A_L,B,A_R$ define a left-right tripartition, we can use Theorem \ref{thm:LeftRightMultipartite} to decompose an $M$-particle Bethe wavefunction $\ket{\vec{1}}$ as
\begin{equation}
    \ket{\vec{1}} = \sum_{\vec{a}_L, \vec{b}, \vec{a}_R} \delta_{\vec{a}_L \cup \vec{b} \cup \vec{a}_R, \vec{1}} ~\Theta_3[\vec{a}_L, \vec{b}, \vec{a}_R] \ket{\vec{a}_L, \vec{b}, \vec{a}_R},
\end{equation}
where the delta $\delta_{\vec{a}_L \cup \vec{b} \cup \vec{a}_R, \vec{1}}$ ensures that only $3^M$ product terms $\ket{\vec{a}_L, \vec{b}, \vec{a}_R}$ appear in the decomposition, even though the sum is over $(2^M)^3 = 2^{3M}$ choices $\vec{a}_L, \vec{b}, \vec{a}_R$. Using that 
\begin{equation}
\delta_{\vec{a}_L \cup \vec{b} \cup \vec{a}_R, \vec{1}} = \sum_{\vec{a}}  \delta_{\vec{a} \cup \vec{b}, \vec{1}} ~ \delta_{\vec{a}_L\cup\vec{a}_R,\vec{a}},    
\end{equation}
we can readily re-organize this decomposition as
\begin{equation}
\ket{\vec{1}} = \sum_{\vec{a},\vec{b}} \delta_{\vec{a}\cup \vec{b},\vec{1}} \ket{\Psi_{\vec{a},\vec{b}}}_{A} \ket{\vec{b}}_{B}  \label{eq:contiguousAB}    
\end{equation}
where
\begin{equation} \label{eq:Psia}
\ket{\Psi_{\vec{a},\vec{b}}}_A \equiv \sum_{\vec{a}_L\vec{a}_R} \delta_{\vec{a}_L \cup \vec{a}_R, \vec{a}} ~\Theta_3[\vec{a}_L, \vec{b}, \vec{a}_R] \ket{\vec{a}_L, \vec{a}_R}.
\end{equation}
Decomposition \eqref{eq:contiguousAB} is a linear combination of at most $2^M$ product states $\ket{\Psi_{\vec{a},\vec{b}}}_A\ket{\vec{b}}_B$, namely one for each value of the choices $\vec{a},\vec{b}$ such that $\delta_{\vec{a}\cup \vec{b},\vec{1}}=1$, as we wanted to show. Notice that $\ket{\Psi(\vec{a})}$ is in general not a local Bethe wavefunction on part $A$.  

Similarly, for a partition into $L$ contiguous parts, we can obtain a decomposition with $L^M$ product terms. We explain it explicitly for $L=3$, with parts $A, B, C$ where part $A = A_L \cup A_R$ again decomposes into left and right subparts $A_L$ and $A_R$. Our starting point is now the left-right 4-partite decomposition 
\begin{equation}
\ket{\vec{1}} =\!\! \sum_{\vec{a}_L, \vec{b}, \vec{c}, \vec{a}_R}  \!\! \delta_{\vec{a}_L \cup \vec{b} \cup \vec{c} \cup \vec{a}_R, \vec{1}} ~\Theta_4[\vec{a}_L, \vec{b}, \vec{c}, \vec{a}_R] \ket{\vec{a}_L, \vec{b},  \vec{c}, \vec{a}_R},
\end{equation}
of the Bethe wavefunction $\ket{\vec{1}}$ according to the four parts $A_L,B,C,A_R$, which contains at most $4^M$ terms (instead of $(2^M)^4 = 2^{4M}$) due to the delta $\delta_{\vec{a}_L \cup \vec{b} \cup \vec{c} \cup \vec{a}_R, \vec{1}}$. Using that 
\begin{equation}
\delta_{\vec{a}_L \cup \vec{b} \cup \vec{c} \cup\vec{a}_R, \vec{1}} = \sum_{\vec{a}}  \delta_{\vec{a} \cup \vec{b}\cup \vec{c}, \vec{1}} ~ \delta_{\vec{a}_L\cup\vec{a}_R,\vec{a}},    
\end{equation}
we rearrange this expression as
\begin{equation}
    \ket{\vec{1}} = \sum_{\vec{a},\vec{b},\vec{c}} \delta_{\vec{a}  \cup \vec{b} \cup \vec{c}, \vec{1}} ~ \ket{\Psi_{\vec{a},\vec{b},\vec{c}}}_{A} \ket{\vec{b},\vec{c}}_{BC}  \label{eq:contiguousABC}
\end{equation}
where
\begin{equation} \label{eq:Psia}
\ket{\Psi_{\vec{a},\vec{b}, \vec{c}}}_A \equiv \sum_{\vec{a}_L\vec{a}_R} \delta_{\vec{a}_L \cup \vec{a}_R, \vec{a}} ~\Theta_4[\vec{a}_L, \vec{b}, \vec{c}, \vec{a}_R] \ket{\vec{a}_L, \vec{a}_R}.
\end{equation}
Decomposition \eqref{eq:contiguousABC} has indeed $3^L$ terms (instead of $(2^{M})^3 = 2^{3M}$ terms) because of the delta $\delta_{\vec{a}  \cup \vec{b} \cup \vec{c}, \vec{1}}$. 

The above argument can be generalized to $L$ parts. We arrive at the result,
\begin{theorem}[\textbf{Contiguous multipartite decomposition of a Bethe wavefunction}]\label{thm:ContMultipartite}
Given a contiguous multipartition of lattice $\mathcal{L}$ into $L$ parts $\{A_l\}$, $l = 1, \cdots, L$, an $M$-particle Bethe wavefunction $\ket{\vec{1}}$ on $\mathcal{L}$ can be expressed as a sum of tensor products of local wavefunctions supported on each part. More concretely,

\begin{enumerate}
    \item The multipartite decomposition is,
\begin{equation}
    \ket{\vec{1}} = \!\!\! \sum_{\vec{a}_1,\cdots,\vec{a}_L}\!\!\!  \delta_{\vec{a}_1 \cup \cdots \cup \vec{a}_L, \vec{1}} \ket{\Psi_{\{\vec{a}_l\}}}_{A_{1}} \!\ket{\vec{a}_2,\cdots,\vec{a}_L}_{\!A_2\cdots A_L},~~~
\end{equation}

where $\{\ket{\vec{a}_l}\}$ are local Bethe wavefunctions defined on parts $\{A_l\}$ and the state $\ket{\Psi_{\{\vec{a}_l\}}}$ is a local wavefunction defined on part $A_1 = A_{1L} \cup A_{1R}$, which is not a Bethe wavefunction but can be built from local Bethe wavefunctions $\ket{\vec{a}_{1L}, \vec{a}_{1R}}$ and scattering phases as
\begin{eqnarray}
  &&\ket{\Psi_{\{\vec{a}_l\}}} \equiv \sum_{\vec{a}_{1L}, \vec{a}_{1R}} \delta_{\vec{a}_{1L} \cup \vec{a}_{1R}, \vec{a}_1} \times \nn \\
  &&~~~~~~~~\Theta_{L+1}[\vec{a}_{1L}, \vec{a}_2, \cdots, \vec{a}_L, \vec{a}_{1R}] \ket{\vec{a}_{1L}, \vec{a}_{1R}}.  
\end{eqnarray}

\item The number of terms in the multipartite decomposition is at most $L^M$, which is independent of the number $N$ of sites in the lattice $\mathcal{L}$ or the numbers $\{N_l\}$ of sites in each of its parts $\{A_l\}$.
\end{enumerate}
\end{theorem}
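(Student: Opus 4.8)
The plan is to derive the contiguous $L$-partite decomposition directly from the left-right $(L+1)$-partite decomposition of Theorem~\ref{thm:LeftRightMultipartite}, exactly as was already done above for the cases $L=2$ (Eq.~\eqref{eq:contiguousAB}) and $L=3$ (Eq.~\eqref{eq:contiguousABC}). First I would split the ``wrap-around'' part $A_1$ into its left-most piece $A_{1L}$ and right-most piece $A_{1R}$, so that $A_{1L}, A_2, \cdots, A_L, A_{1R}$ is an honest left-right partition of $\mathcal{L}$ into $L+1$ ordered parts. Applying Theorem~\ref{thm:LeftRightMultipartite} to this partition gives
\begin{equation}
\ket{\vec{1}} = \!\!\!\sum_{\vec{a}_{1L},\vec{a}_2,\cdots,\vec{a}_L,\vec{a}_{1R}}\!\!\! \delta_{\vec{a}_{1L}\cup\vec{a}_2\cup\cdots\cup\vec{a}_L\cup\vec{a}_{1R},\vec{1}} ~\Theta_{L+1}[\vec{a}_{1L},\vec{a}_2,\cdots,\vec{a}_L,\vec{a}_{1R}]~\ket{\vec{a}_{1L},\vec{a}_2,\cdots,\vec{a}_L,\vec{a}_{1R}},
\end{equation}
a linear combination of products of $L+1$ local Bethe wavefunctions with at most $(L+1)^M$ terms.

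The key algebraic step is to introduce an auxiliary choice $\vec{a}_1$ for part $A_1$ through the elementary identity
\begin{equation}
\delta_{\vec{a}_{1L}\cup\vec{a}_2\cup\cdots\cup\vec{a}_L\cup\vec{a}_{1R},\vec{1}} = \sum_{\vec{a}_1} \delta_{\vec{a}_1\cup\vec{a}_2\cup\cdots\cup\vec{a}_L,\vec{1}}~\delta_{\vec{a}_{1L}\cup\vec{a}_{1R},\vec{a}_1},
\end{equation}
valid because $\vec{a}_{1L}\cup\vec{a}_{1R}$ is a choice precisely when $\vec{a}_{1L}$ and $\vec{a}_{1R}$ share no symbols, in which case it equals a unique $\vec{a}_1$. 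Substituting this and exchanging the order of summation, I would collect the two sums over $\vec{a}_{1L}$ and $\vec{a}_{1R}$, together with the inner delta $\delta_{\vec{a}_{1L}\cup\vec{a}_{1R},\vec{a}_1}$ and the amplitude $\Theta_{L+1}$ (which both depend only on these two indices, and on the fixed bulk choices, but not on lattice positions), into a single object supported on $A_1 = A_{1L}\cup A_{1R}$ — this is precisely the state $\ket{\Psi_{\{\vec{a}_l\}}}$ of the statement. What remains is then a sum over $\vec{a}_1,\cdots,\vec{a}_L$ constrained only by the outer delta $\delta_{\vec{a}_1\cup\cdots\cup\vec{a}_L,\vec{1}}$, with each term a product $\ket{\Psi_{\{\vec{a}_l\}}}_{A_1}\ket{\vec{a}_2,\cdots,\vec{a}_L}_{A_2\cdots A_L}$, as claimed. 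I would note that $\ket{\Psi_{\{\vec{a}_l\}}}$ is generically not a Bethe wavefunction on $A_1$: the factorized amplitude $\Theta_{L+1}$ entangles the $A_{1L}$ and $A_{1R}$ sectors through the cross factors $\Theta[\vec{a}_{1L},\vec{a}_{1R}]$ and $\Theta[\vec{a}_{1L},\vec{a}_j]$, $\Theta[\vec{a}_j,\vec{a}_{1R}]$, so the state is only a fixed linear combination of the local Bethe wavefunctions $\ket{\vec{a}_{1L},\vec{a}_{1R}}$.

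For item~2 I would count exactly as in the left-right case: the outer constraint $\vec{a}_1\cup\cdots\cup\vec{a}_L=\vec{1}$ forces a partition of the $M$ symbols into groups of sizes $M_1,\cdots,M_L$ with $\sum_l M_l=M$, and for fixed sizes there are $M!/(M_1!\cdots M_L!)$ compatible tuples $(\vec{a}_1,\cdots,\vec{a}_L)$; by the multinomial theorem $\sum_{M_1+\cdots+M_L=M} M!/(M_1!\cdots M_L!) = L^M$, which is independent of $N$ and of the $\{N_l\}$. If some part has fewer than $M$ sites it cannot host all $M$ symbols, which only deletes terms, yielding the stated upper bound; I would add the remark that Corollary~\ref{thm:multipartite_schmidt_rank} therefore extends verbatim to contiguous partitions.

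The only slightly delicate point — and the main thing to get right — is the regrouping into $\ket{\Psi_{\{\vec{a}_l\}}}$: one must verify that after inserting the delta identity and swapping sums, nothing that depends on the bulk positions $\vec{x}_{A_2},\cdots,\vec{x}_{A_L}$, nor anything distinguishing $\vec{x}_{A_{1L}}$ from $\vec{x}_{A_{1R}}$, leaks across the $A_1\,|\,A_2\cdots A_L$ cut in a non-product way. This is in fact automatic, because $\Theta_{L+1}$ is a pure scattering amplitude — a function of the choice vectors only, never of positions — so it can be moved freely inside the position sums defining $\ket{\Psi_{\{\vec{a}_l\}}}$, and the product structure across the cut is then manifest.
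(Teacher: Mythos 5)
Your proposal is correct and follows essentially the same route as the paper: split the wrap-around part $A_1$ into $A_{1L}$ and $A_{1R}$, apply the left-right $(L+1)$-partite decomposition of Theorem~\ref{thm:LeftRightMultipartite}, insert the delta identity $\delta_{\vec{a}_{1L}\cup\vec{a}_2\cup\cdots\cup\vec{a}_L\cup\vec{a}_{1R},\vec{1}} = \sum_{\vec{a}_1}\delta_{\vec{a}_1\cup\cdots\cup\vec{a}_L,\vec{1}}\,\delta_{\vec{a}_{1L}\cup\vec{a}_{1R},\vec{a}_1}$, and regroup into $\ket{\Psi_{\{\vec{a}_l\}}}$, exactly as the paper does explicitly for $L=2$ and $L=3$ before asserting the general case. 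Your multinomial count for item~2 and your remark that the position-independence of $\Theta_{L+1}$ makes the regrouping safe are both consistent with the paper's argument.
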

The corollary on the multipartite Schmidt measure follows:
\begin{corollary}\label{thm:multipartite_schmidt_rank_general}
For any contiguous $L-$partition, the $L-$partite Schmidt rank $\chi_L$ (and related entanglement measure $\log (\chi_L)$~\cite{Eisert_2001}) of an $M-$particle Bethe state is upper-bounded by $L^M$ (and $M \log (L) $), independent of $N$. 
\end{corollary}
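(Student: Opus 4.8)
The plan is to obtain the contiguous $L$-partition decomposition as a corollary of the left-right multipartite decomposition (Theorem~\ref{thm:LeftRightMultipartite}), following exactly the strategy already sketched for $L=2$ and $L=3$ in Eqs.~\eqref{eq:contiguousAB}--\eqref{eq:contiguousABC}. First I would pick the point on the ring that splits the distinguished part $A_1$ into its left-most block $A_{1L}$ and right-most block $A_{1R}$, so that $A_{1L}, A_2, A_3, \ldots, A_L, A_{1R}$ forms an ordinary left-right $(L+1)$-partition of the corresponding open chain. Applying Theorem~\ref{thm:LeftRightMultipartite} to this $(L+1)$-partition yields
\begin{equation}
\ket{\vec 1} = \!\!\sum_{\vec a_{1L},\vec a_2,\ldots,\vec a_L,\vec a_{1R}}\!\! \delta_{\vec a_{1L}\cup\vec a_2\cup\cdots\cup\vec a_L\cup\vec a_{1R},\,\vec 1}\; \Theta_{L+1}[\vec a_{1L},\vec a_2,\ldots,\vec a_L,\vec a_{1R}]\; \ket{\vec a_{1L},\vec a_2,\ldots,\vec a_L,\vec a_{1R}},
\end{equation}
a sum with at most $(L+1)^M$ product terms.

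Next I would merge the two end blocks into the single region $A_1$ by inserting the resolution $\delta_{\vec a_{1L}\cup\vec a_2\cup\cdots\cup\vec a_L\cup\vec a_{1R},\,\vec 1} = \sum_{\vec a_1}\delta_{\vec a_1\cup\vec a_2\cup\cdots\cup\vec a_L,\,\vec 1}\,\delta_{\vec a_{1L}\cup\vec a_{1R},\,\vec a_1}$, interchanging the order of summation, and collecting the inner sum over $\vec a_{1L},\vec a_{1R}$ (subject to $\vec a_{1L}\cup\vec a_{1R}=\vec a_1$) into the definition of the state $\ket{\Psi_{\{\vec a_l\}}}$ on $A_1$ given in the theorem statement. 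What remains is precisely the advertised decomposition $\ket{\vec 1} = \sum_{\vec a_1,\ldots,\vec a_L}\delta_{\vec a_1\cup\cdots\cup\vec a_L,\,\vec 1}\,\ket{\Psi_{\{\vec a_l\}}}_{A_1}\ket{\vec a_2,\ldots,\vec a_L}_{A_2\cdots A_L}$. Since $\ket{\vec a_{1L},\vec a_{1R}}$ is a product of two local Bethe wavefunctions on $A_{1L}$ and $A_{1R}$, the partial sum $\ket{\Psi_{\{\vec a_l\}}}$ is a well-defined vector in $\mathcal H^{A_1}_{M_1}$ with $M_1 = |\vec a_1|$; but because $\Theta_{L+1}$ contains scattering phases coupling the symbols in $\vec a_{1L}$ and $\vec a_{1R}$ to those in the other blocks $\vec a_2,\ldots,\vec a_L$, the relative weights of $\ket{\vec a_{1L},\vec a_{1R}}$ inside $\ket{\Psi_{\{\vec a_l\}}}$ depend on $\vec a_2,\ldots,\vec a_L$, and $\ket{\Psi_{\{\vec a_l\}}}$ is in general not itself a Bethe wavefunction. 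This is the one qualitative difference from the left-right case, and it is exactly what forces the weaker statement ``local wavefunction'' rather than ``local Bethe wavefunction'' for part $A_1$.

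For the term count (part 2) and the corollary, I would argue as in the proof of Theorem~\ref{thm:LeftRightMultipartite}: the surviving tuples $(\vec a_1,\ldots,\vec a_L)$ are those with $\vec a_1\cup\cdots\cup\vec a_L=\vec 1$, and grouping by the particle numbers $M_l = |\vec a_l|$ with $\sum_l M_l = M$ gives $\sum_{M_1+\cdots+M_L=M}\tfrac{M!}{M_1!\cdots M_L!}=L^M$ terms, with strictly fewer whenever some $N_l < M$ excludes an allocation; the bound on the $L$-partite Schmidt rank then follows because the Schmidt rank is by definition the minimal number of product terms over all $L$-partite decompositions and we have exhibited one with at most $L^M$ terms.

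I expect the only real obstacle to be bookkeeping rather than anything conceptual. One must check that splitting $A_1$ into $A_{1L}$ and $A_{1R}$ is compatible with the ordered-position conventions underlying local Bethe wavefunctions (each part's domain $\mathcal D^{A}_{M_A}$ presupposes a linear order on its sites), and that the final decomposition does not depend on exactly where inside $A_1$ the fictitious cut was placed. Both are immediate once one notes that $A_{1L}$ and $A_{1R}$ are themselves contiguous open segments -- so Theorem~\ref{thm:LeftRightMultipartite} applies to the $(L+1)$-partition verbatim -- and that the cut enters only through the internal sum defining $\ket{\Psi_{\{\vec a_l\}}}$, which is manifestly a state supported on $A_1$. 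A minor consistency check is that the $\Theta_{L+1}$ appearing in $\ket{\Psi_{\{\vec a_l\}}}$ is the one with the two end blocks in the first and last argument slots, matching the ordering used in the initial $(L+1)$-partite decomposition.
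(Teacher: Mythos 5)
Your proposal is correct and follows essentially the same route as the paper: apply the left-right $(L+1)$-partite decomposition to $A_{1L},A_2,\ldots,A_L,A_{1R}$, resolve the delta via $\delta_{\vec a_{1L}\cup\cdots\cup\vec a_{1R},\vec 1}=\sum_{\vec a_1}\delta_{\vec a_1\cup\cdots\cup\vec a_L,\vec 1}\,\delta_{\vec a_{1L}\cup\vec a_{1R},\vec a_1}$, absorb the inner sum into a (generally non-Bethe) local wavefunction on $A_1$, and count $\sum_{M_1+\cdots+M_L=M}M!/(M_1!\cdots M_L!)=L^M$ terms. The paper carries out exactly this argument for $L=2$ and $L=3$ and asserts the general case; your write-up supplies the general $L$ explicitly but adds no new idea.
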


\section{Tensor Network representation}\label{sec:tensor_network}

\begin{figure*}
    \centering
    \includegraphics[width = \textwidth]{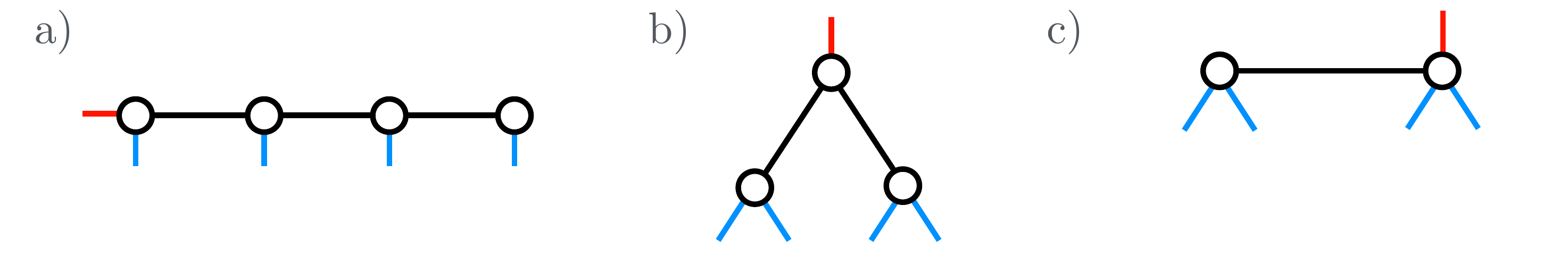}
    \caption{Three planar trees: 
    (a) chain, leading to an MPS tensor network, see Fig. \ref{fig:mps}; 
    (b) regular binary tree, leading to a regular binary tree tensor network (regular binary TTN), see Fig. \ref{fig:ttn};
    and (c) example of a more general planar tree, leading to a more general planar tree tensor network (planar TTN), see Fig. \ref{fig:planar_ttn}. Given a Bethe wavefunction on a 1d lattice $\mathcal{L}$ and a nested partition of $\mathcal{L}$ according to one such planar tree, in this section we explain how to produce a tensor network representation of the Bethe wavefunction where the network geometry is given by the planar tree.}
    \label{fig:threePlanarTrees}
\end{figure*}

In this section we show that an $M-$particle Bethe wavefunction in a 1d lattice $\mathcal{L}$ made of $N$ sites can be expressed as an exact tensor network state with bond dimension at most $2^M$. We consider explicitly a matrix product state (MPS) and a tree tensor network (TTN) on a regular binary tree as particularly important examples. However, such an exact tensor network representation with upper bounded bond dimension can be obtained also for any network geometry corresponding to a planar tree with one root and $L$ leaves (for $L \leq N$) representing $L$ local parts, see Fig.~\ref{fig:threePlanarTrees}. 
As mentioned in Sect.~\ref{sec:Intro}, for an MPS we recover a representation equivalent to that in Ref.~\cite{Ruiz_2024}. This will be further explained below, see homogeneous MPS representation in Sect.~\ref{sec:tensor_network_translation_inv}.


Our explicit construction of exact tensor network representations for a Bethe wavefunction is based on two types of tensors, denoted  $\mathbb{T}$ and $\mathbb{S}$, which we describe next. Their origin will become clear when we describe the MPS and regular binary TTN representations.

\subsection{Tensor $\mathbb{T}$}

Tensor $\mathbb{T}$ has one incoming index and two out-going indices, with each index labelled by a choice vector. It is of the form
\begin{align}\label{eq:tensor_t}
    \mathbb{T}^{\vec{c}}_{\vec{a}, \vec{b}} &\equiv \delta_{\vec{a}\cup \vec{b}, \vec{c}} ~ \Theta[\vec{a}, \vec{b}] = \nn \\ & \includegraphics[width = 0.35\columnwidth]{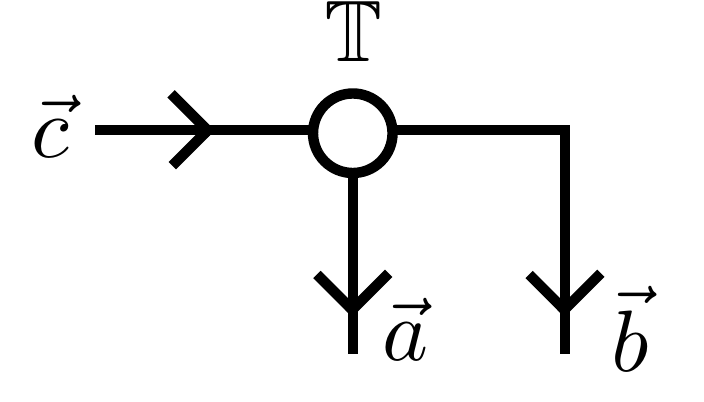}
\end{align}
namely the product of a delta $\delta_{\vec{a}\cup \vec{b}, \vec{c}}$, that enforces that choice $\vec{c}$ in the incoming index is equal to the sorted union $\vec{a}\cup \vec{b}$ of the choices $\vec{a}$ and $\vec{b}$ for the two outgoing indices, see Eq. \eqref{eq:union}, and the scattering amplitude $\Theta[\vec{a},\vec{b}]$ in Eq. \eqref{eq:Theta_def} involving the choices $\vec{a}$ and $\vec{b}$ of the two outgoing indices. Notice that tensor $\mathbb{T}$ simply encodes the amplitudes of the bipartite decomposition in Eq. \eqref{eq:BW_general_bipartite}. Since each choice index can take up to $2^M$ values, this tensor has $(2^M)^3$ components. We emphasize, however, that $\mathbb{T}$ is a decorated delta function and thus a sparse tensor. 

Sparsity of tensor $\mathbb{T}$ goes well beyond that of a particle preserving tensor. To illustrate this, let us assume all three indices $\vec{a}, \vec{b}, \vec{c}$ run over the $2^M$ choices compatible with the global Bethe data. Notice that there are ${M \choose m}$ incoming choices $\vec{c}$ with $m$ particles, that is with $|\vec{c}|=m$ for $0 \leq m \leq M$. For any of those choices $\vec{c}$, there are ${m \choose m_L}$ compatible choices $\vec{a}$ with $m_L$ particles, that is with $|\vec{a}|=m_L$, for $0 \leq m_L \leq m$. For each pair $(\vec{c}, \vec{a})$ of eligible choices, the delta $\delta_{\vec{a}\cup \vec{b}, \vec{c}}$ indicates that there is only one choice $\vec{b}$ leading to a non-zero coefficient $\mathbb{T}^{\vec{c}}_{\vec{a},\vec{b}}$ in $\mathbb{T}$. Therefore the number of non-zero coefficient in tensor $\mathbb{T}$ is 
\begin{equation} \label{eq:non-zeroT}
    \sum_{m=0}^{M} {M \choose m} \sum_{m_L=0}^{m} {m \choose m_L} = \sum_{m=0}^{M} {M \choose m} 2^m
\end{equation}
In contrast, if only particle conservation was enforced, then non-zero coefficients would correspond to the cases where $|\vec{c}|=m$, $|\vec{a}|=m_L$ and $|\vec{b}|=m_R$ with $m=m_L+m_R$, in which case there would be ${M\choose m}{M \choose m_L}{M \choose m_R}$ non-zero elements. Adding over all possible values of $m$ and $m_L$ (with $0\leq m_L \leq m \leq M$ and $m_R=m-m_L$) leads to
\begin{equation}
     \sum_{m=0}^{M} {M \choose m} \sum_{m_L=0}^{m} {M \choose m_L}{M \choose M-m_L},
\end{equation}
which can be a substantially larger number of non-zero coefficients than we found in Eq. \eqref{eq:non-zeroT} for tensor $\mathbb{T}$. It is also easy to check that $\mathbb{T}$ satisfies the normalization
\begin{align}
    \sum_{\vec{a}, \vec{b}} \mathbb{T}^{\vec{c}}_{\vec{a}, \vec{b}}\mathbb{T}^{\vec{c}^{\prime}*}_{\vec{a}, \vec{b}} = \delta_{\vec{c}, \vec{c}^{\prime}} 2^{|\vec{c}|}.
\end{align}

\subsubsection*{Examples}

For illustrative purposes, next we describe the explicit form of $\mathbb{T}^{\vec{c}}_{ \vec{a}, \vec{b}}$ in Eq.~\eqref{eq:tensor_t} for a system with $M=0,1$ and $2$ particles. Recall from previous sections that in a system with $M$ particle, where $\emptyset$ refers to the null choice of quasi-momenta, and $\vec{1} = (1, 2, 3, \cdots, M)$ refers to the choice of all $M$ Bethe quasi-momenta, a choice index $\vec{\mu}$ typically runs over all $2^M$ possible choices compatible with the global Bethe data. Below, these $2^M$ choices are ordered by increasing number of particles $m=|\vec{\mu}|$, according to the convention used in Sec.~\ref{sec:choice}, which we restate here
\begin{align}\label{eq:choice_convention}
\{\vec{\mu}\} \equiv & \biggl\{\underbrace{\emptyset}_{m = 0}, \nn \\
& \underbrace{(1), (2), \cdots, (
M)}_{m = 1}, \nn\\
& \underbrace{(1,2), (1,3), \cdots (M-1,M)}_{m = 2}, \nn \\
& \cdots, \nn \\
& \underbrace{\vec{1}}_{m = M}\biggr\}. 
\end{align}
We present the tensor $\mathbb{T}^{\vec{c}}_{\vec{a},\vec{b}}$ for $M=0,1,2$ particles as a set of $2^M\times 2^M$ matrices after fixing the top index $\vec{c}$. The basis elements are indicated next to the matrices, according to the convention in~\eqref{eq:choice_convention}.
For $M=0$ particles, with $\vec{1}=\emptyset$, we simply have
\begin{align}\label{eq:tensor_t_M0}
    \mathbb{T}^{\emptyset} &= ~\begin{blockarray}{rr}
         \emptyset \\
        \begin{block}{(r)r}
          1 & ~\emptyset \\
        \end{block}
        \end{blockarray}.
\end{align}
For $M=1$ particles, with $\vec{1} = (1)$, tensor $\mathbb{T}$ has components
\begin{align}\label{eq:tensor_t_M1}
    \mathbb{T}^{\emptyset} &= ~\begin{blockarray}{rrr}
         \emptyset & (1) \\
        \begin{block}{(rr)c}
          1 & 0 & ~\emptyset \\
          0 & 0 & ~(1)\\
        \end{block}
        \end{blockarray}, \nn \\
    \mathbb{T}^{(1)} &= ~\begin{blockarray}{rrr}
         \emptyset & (1) \\
        \begin{block}{(rr)c}
          0 & 1 & ~\emptyset \\
          1 & 0 & ~(1)\\
        \end{block}
        \end{blockarray}.
\end{align}
Finally, for $M=2$ particles, with $\vec{1} = (1,2)$ and scattering angle $\theta_{21}$, tensor $\mathbb{T}$ has components
\begin{align}\label{eq:tensor_t_M2}
    \mathbb{T}^{\emptyset} &= ~\begin{blockarray}{rrrrr}
         \emptyset & (1) & (2) & (1,2)  \\
        \begin{block}{(rrrr)c}
          1 & 0 & 0 & 0 &~\emptyset \\
          0 & 0 & 0 & 0 &~(1)\\
          0 & 0 & 0 & 0 &~(2)\\
          0 & 0 & 0 & 0 &~(1,2)\\
        \end{block}
        \end{blockarray} \nn \\
    \mathbb{T}^{(1)} &= ~\begin{blockarray}{rrrrr}
         \emptyset & (1) & (2) & (1,2)  \\
        \begin{block}{(rrrr)c}
          0 & 1 & 0 & 0 &~\emptyset \\
          1 & 0 & 0 & 0 &~(1)\\
          0 & 0 & 0 & 0 &~(2)\\
          0 & 0 & 0 & 0 &~(1,2)\\
        \end{block}
        \end{blockarray} \nn \\
    \mathbb{T}^{(2)} &= ~\begin{blockarray}{rrrrr}
         \emptyset & (1) & (2) & (1,2)  \\
        \begin{block}{(rrrr)c}
          0 & 0 & 1 & 0 &~\emptyset \\
          0 & 0 & 0 & 0 &~(1)\\
          1 & 0 & 0 & 0 &~(2)\\
          0 & 0 & 0 & 0 &~(1,2)\\
        \end{block}
        \end{blockarray} \nn \\
    \mathbb{T}^{(1,2)} &= ~\begin{blockarray}{rrrrr}
         \emptyset & (1) & (2) & (1,2)  \\
        \begin{block}{(rrrr)c}
          0 & 0 & 0 & 1 &~\emptyset \\
          0 & 0 & 1 & 0 &~(1)\\
          0 & -e^{i\theta_{21}} & 0 & 0 &~(2)\\
          1 & 0 & 0 & 0 &~(1,2)\\
        \end{block}
        \end{blockarray} 
\end{align}

As these examples make manifest, tensor $\mathbb{T}$ only depends on the two-particle scattering angles $\boldsymbol{\theta}$, and not on the quasi-momenta $\vec{k}$. Notice also that the tensor for $M-1$ particles can be obtained from that for $M$ particles by simply restricting our attention to a subset of its entries (namely those entries labelled by choices $\vec{a},\vec{b},\vec{c}$ containing only the first $M-1$ symbols).

\subsection{Tensor $\mathbb{S}$}

Tensor $\mathbb{S}$ implements a (non-unitary) change of basis that expresses a local Bethe wavefunction $\ket{\vec{a}}$, specified by choice $\vec{a}$, in terms of the occupation basis $\ket{\vec{\sigma}}$ (see discussion before Eq.~\eqref{eq:bethe_hilbert_space}). Let $A$ denote the part of the 1d lattice $\mathcal{L}$ on which $\ket{\vec{\sigma}}$ is supported, and let $N_A$ be the number of site in $A$. Then $\vec{\sigma}$ is a bitstring of length $N_A$. Recall that the local Bethe wavefunction $\ket{\vec{a}}$ is given by
\begin{equation}\label{eq:vec_a_i}
    \ket{\vec{a}} \equiv  \sum_{R \in S_m} \Theta^{\vec{a}}[R]  \sum_{\vec{x}\in \mathcal{D}_{m}^{A}}   
    e^{i (\vec{k}^{a})^R \cdot \vec{x}}\ket{\vec{x}},
\end{equation}
Then tensor $\mathbb{S}$ has components
\begin{eqnarray}\label{eq:tensor_s}
\mathbb{S}^{\vec{a}}_{\vec{\sigma}} = \braket{\vec{\sigma}}{\vec{a}} = \sum_{R \in S_m}  \Theta^{\vec{a}}[R]  \sum_{\vec{x}\in \mathcal{D}_{m}^{A}} e^{i (\vec{k}^{a})^R \cdot \vec{x}}\braket{\vec{\sigma}}{\vec{x}}. ~~~~
\end{eqnarray}
Notice that index $\vec{a}$ runs over $2^M$ choices (when all possible choices are included, as is the case when both part $A$ and its complement $A'$ in $\mathcal{L}$ can host $M$ particles), whereas index $\vec{\sigma}$ runs over $2^{N_A}$ bitstrings. Accordingly, tensor $\mathbb{S}$ has $2^{M}\times 2^{N_A}$ components. Tensor $\mathbb{S}$ is also sparse, due to particle number preservation: $\mathbb{S}^{\vec{a}}_{\vec{\sigma}}$ is non-zero only if the particle number $|\vec{a}|$ associated with choice $\vec{a}$ matches the number of ones in $\vec{\sigma}$. Two local Bethe wavefunctions $\ket{\vec{a}}$, $\ket{\vec{a}'}$ are orthogonal if they have a different number of particles $|\vec{a}|$ and $|\vec{a}'|$, that is $\braket{\vec{a}'}{\vec{a}} = \delta_{|\vec{a}|,|\vec{a}'|} \braket{\vec{a}'}{\vec{a}}$, and generically not orthogonal, i.e. $\braket{\vec{a}'}{\vec{a}} \not= \delta_{\vec{a}',\vec{a}}$, if their number of particles is the same. We thus have, 
\begin{equation}
    \sum_{\vec{\sigma}} \mathbb{S}_{\vec{\sigma}}^{\vec{a}}\mathbb{S}_{\vec{\sigma}}^{\vec{a}'*} = \sum_{\vec{\sigma}} \braket{\vec{a}'}{\vec{\sigma}}\braket{\vec{\sigma}}{\vec{a}} = \delta_{|\vec{a}|,|\vec{a}'|} \braket{\vec{a}'}{\vec{a}}.
\end{equation}

\subsubsection*{Examples}

Let us use Eq.~\eqref{eq:tensor_s} to write down the explicit form of $\mathbb{S}^{\vec{a}}_{ \vec{\sigma}_{A}}$ when the choice $\vec{a}$ corresponds to $0,1,$ or $2$ particles in a system with $M$ particles. We use the notation $\mathbb{S}[i]$ to indicate that this tensor explicitly depends on the part $A_i$ of the 1d lattice $\mathcal{L}$ in which it is defined. Below $\vec{\sigma}_{A_i}$ denotes a bitstring of length $N_{A_i}$, which is the number of sites in part $A_i$.

For $0$ particles, that is $\vec{a}=\emptyset$, we obtain 
\begin{equation}\label{eq:tensor_s_m0}
   \mathbb{S}[i]^{\emptyset}_{\vec{\sigma}_{A_i}} = \delta_{\vec{\sigma}_{A_i},(0,0,\cdots,0)},
\end{equation}
whereas in the one-particle sector, namely $\vec{a}=(a)$ for $a=1,\cdots,M$, we find
\begin{equation}\label{eq:tensor_s_m1}
   \mathbb{S}[i]^{(a)}_{\vec{\sigma}_{A_i}} = \sum_{x\in A_i} e^{ik_a x}~\delta_{\vec{\sigma}_{A_i},(0,0,\cdots, 1_{x}, \cdots, 0)}.
\end{equation}
Finally, in the two-particle sector $\vec{a}=(a_1,a_2)$ (for $1\leq a_1 < a_2 \leq M$) we have
\begin{eqnarray}\label{eq:tensor_s_m2}
   \mathbb{S}[i]^{(a_1,a_2)}_{\vec{\sigma}_{A_i}} \!\!&=& \!\!\!\sum_{\vec{x}\in \mathcal{D}^{A_i}_2}\!\!\! \left(e^{i\left(k_{a_1} x_1+ k_{a_2} x_2\right)} - e^{i\theta_{a_2a_1}}e^{i\left(k_{a_2} x_1+ k_{a_1} x_2\right)}\right)\nn \\
   &&\times ~ \delta_{\vec{\sigma}_{A_i},(0,0,\cdots, 1_{x_1}, \cdots,1_{x_2},\cdots, 0)},
\end{eqnarray}
and so on.

Notice that in contrast with tensor $\mathbb{T}$, which only depends on the two-particle scattering angles $\boldsymbol{\theta}$, tensor $\mathbb{S}[i]$ depends on both the quasi-momenta $\vec{k}$ and the two-particle scattering angles $\boldsymbol{\theta}$. 

\subsection{Combining tensors $\mathbb{T}$ and $\mathbb{S}$ into a tree tensor network (TTN) representation}

Our next step is to explain how these two types of tensors $\mathbb{T}$ and $\mathbb{S}$ can be customized and put together into a tensor network representation of an $M-$particle Bethe wavefunction $\ket{\vec{1}}$, for a planar tree tensor network. 

Given a left-right partition of the 1d lattice $\mathcal{L}$ into $L$ parts $A_1,A_2,\cdots,A_L$ (where part $A_i$ is to the left of part $A_{i+1}$), we choose a tensor network geometry consisting of a planar tree with one root and $L$ leaves, labelled by $i=1,2,\cdots,L$, see Fig. \ref{fig:threePlanarTrees}. Here leaf $i$ is associated to part $A_i$ of lattice $\mathcal{L}$. We then build a tensor network representation for the Bethe wavefunction $\ket{\vec{1}}$ in two steps: first (i) we use tensors $\mathbb{T}$ to build a tensor network representation of the amplitudes $\delta_{\vec{a}_1 \cup \cdots \cup \vec{a}_L, \vec{1}} ~ \Theta_L[\vec{a}_1, \cdots, \vec{a}_L]$ of the multipartite decomposition \eqref{eq:multipartite_decomp}; then (ii) we dress each leaf with a tensor $\mathbb{S}[i]$ that expresses the local Bethe wavefunction $\ket{\vec{a}_i}$ in Eq. \eqref{eq:multipartite_decomp} in terms of the local occupation basis $\ket{\vec{\sigma}_{A_i}}$. 

For clarity, below we explain these two steps in more detail for two simple, yet particularly important cases of planar trees, namely for a linear chain (leading to an MPS) and a regular binary tree (producing a regular binary TTN), before briefly returning to the general case.

\subsection{Matrix product state (MPS)}\label{sec:mps}
 
We consider first the case where the planar tree is a chain of $L$ nodes, labelled by $i=1,2,\cdots, L$, see Figs. \ref{fig:threePlanarTrees}(a) and \ref{fig:mps}. For $1<i<L$, node $i$ has three edges: two \textit{bond} edges connecting it to the nodes $i-1$ and $i+1$ and a \textit{physical} edge whose other vertex is leaf $i$ of the tree, which is assigned to part $A_i$ of the lattice $\mathcal{L}$. The left-most node, node $i=1$, has a left edge connecting it to the root of the tree, in addition to a right bond edge and a physical edge connecting it to leaf $i=1$. The right-most node, node $i=L$, has a right edge that we added for notational consistency in the final tensor network (but which can be otherwise omitted) as well as a left bond edge and a physical edge connecting it to leaft $i=L$.

Our first step is to build an MPS representation of the amplitudes $\delta_{\vec{a}_1 \cup \cdots \cup \vec{a}_L, \vec{1}} ~ \Theta_L[\vec{a}_1, \cdots, \vec{a}_L]$ in Eq. \eqref{eq:multipartite_decomp} using tensors $\mathbb{T}$. We proceed sequentially from left to right as follows. First we consider a bipartition of the lattice into parts $A_1$ and $M_1 \equiv A_2 \cup \cdots \cup A_L$, and the corresponding bipartite decomposition \eqref{eq:BW_decomposition_new} in terms of local Bethe wavefunctions $\ket{\vec{a}_1}$ and $\ket{\vec{\mu}_1}$ for $A_1$ and $M_1$, which we express as
\begin{align}
    \includegraphics[width = 0.68\columnwidth]{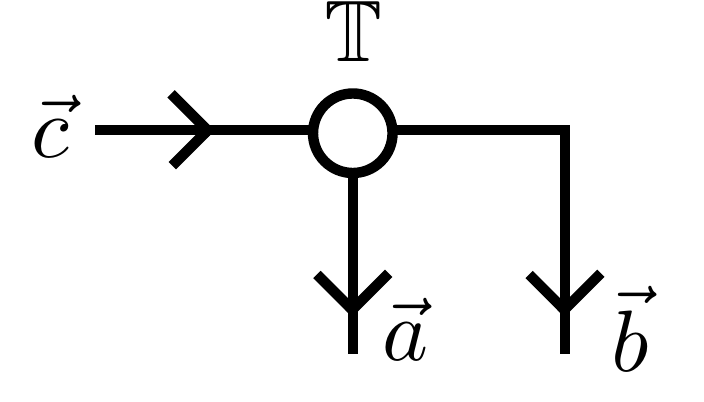}
\end{align}
Next, we bipartite decompose the state $\ket{\vec{\mu}_1}$ into local Bethe wavefunctions $\ket{\vec{a}_2}$ and $\ket{\vec{\mu}_2}$ for parts $A_2$ and $M_2 \equiv A_3 \cup \cdots \cup A_L$ respectively, which we also express using a tensor $\mathbb{T}$, and obtain
\begin{align}
    \includegraphics[width = \columnwidth]{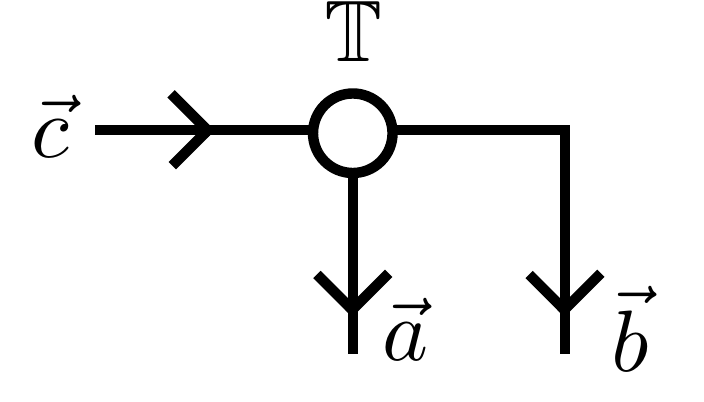}
\end{align}
After iterating bipartite decompositions $L-1$ times, we end up with the following decomposition, 
\begin{align}\label{eq:multipartite_mps}
    \includegraphics[width = 0.85\textwidth]{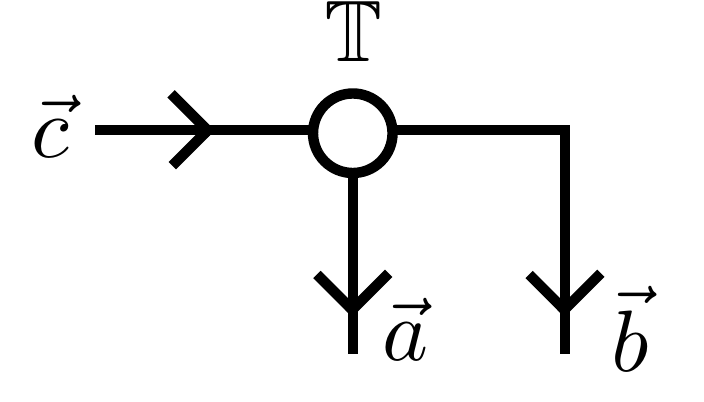}
\end{align}
Here, the label $\vec{a}_L$ at the right-most end of the chain is equivalent to $\vec{\mu}_{L-1}$, as enforced by the right-most tensor $\mathbb{T}$, with components $\mathbb{T}^{\vec{\mu}_{L-1}}_{\vec{a}_L, \emptyset} = \delta_{\vec{\mu}_{L-1}, \vec{a}_L\cup \emptyset}~ \Theta[\vec{a}_L, \emptyset]$, with $\Theta[\vec{a}_L, \emptyset] = 1$, which is introduced for notational consistency. This completes our construction of an explicit MPS for the Bethe wavefunction $\ket{\vec{1}}$ when expressed in terms of products of $L$ local Bethe wavefunctions $\ket{\vec{a}_1,\vec{a}_2, \cdots, \vec{a}_L}$, which we can also express as
\begin{eqnarray} \label{eq:mps_T}
&&\delta_{\vec{a}_1 \cup \cdots \cup \vec{a}_L, \vec{1}}~\Theta_{L}[\vec{a}_1, \cdots, \vec{a}_L] = ~~~~~~~~~~~~~~~~~~~~~~~~~ \nn\\
&&~~~~~~~~~~~~~\sum_{\{\vec{\mu}_i\}}\mathbb{T}^{\vec{1}}_{\vec{a}_1, \vec{\mu}_{1}}\left(\prod_{i = 2}^{L-1}\mathbb{T}^{\vec{\mu}_{i-1}}_{\vec{a}_i, \vec{\mu}_{i}}\right) \mathbb{T}^{ \vec{\mu}_{L-1}}_{\vec{a}_L, \emptyset}.~~~~
\end{eqnarray}

\begin{figure}
\includegraphics[width = 0.8\columnwidth]{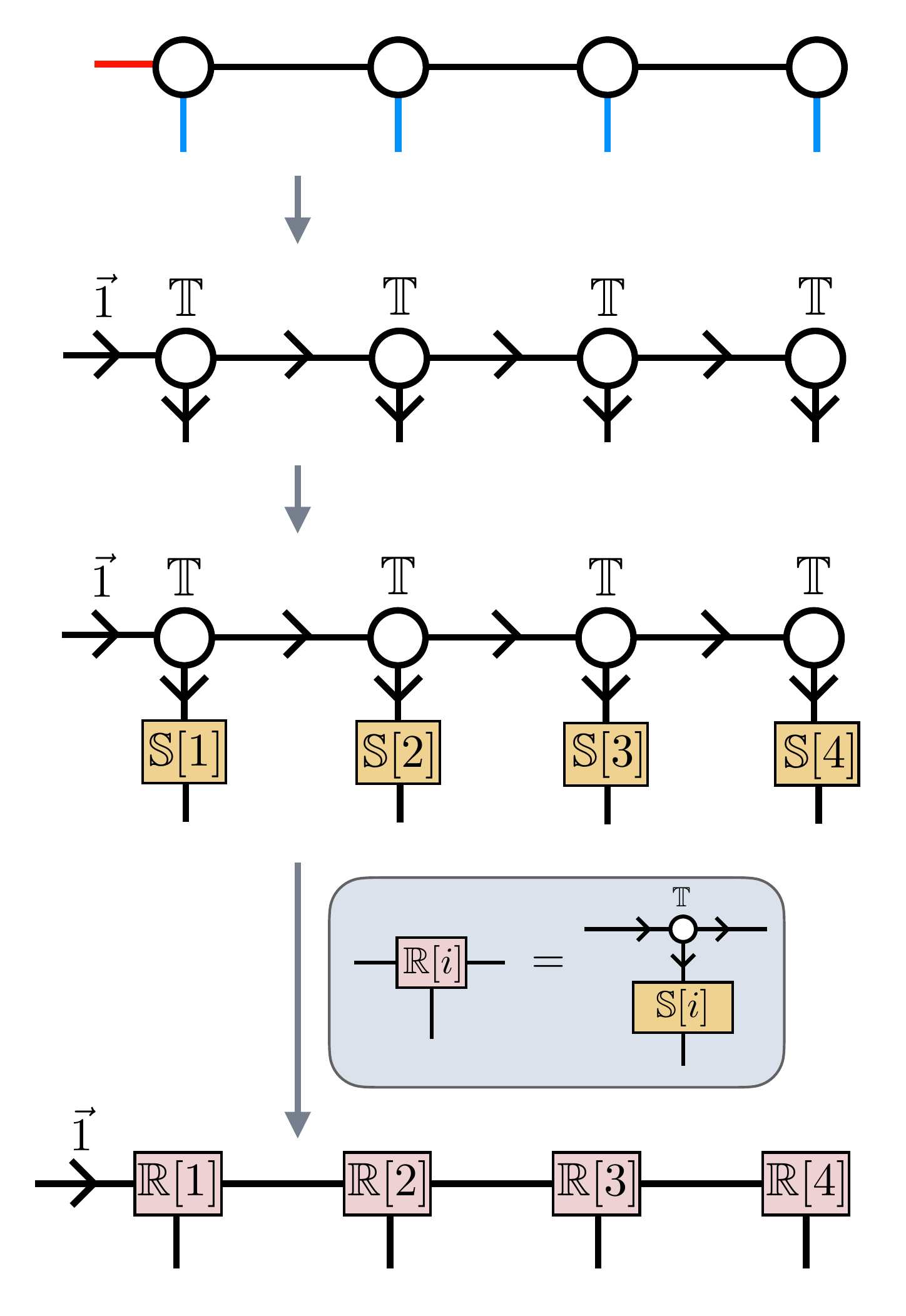}
\caption{Starting from the chain in Fig. \ref{fig:threePlanarTrees}a, we build a MPS representation of a Bethe wavefunction in three steps. First we replace each node of the chain with a tensor $\mathbb{T}$ in Eq. \eqref{eq:tensor_t}, which already produces an MPS representation of the Bethe wavefunction when expressed in a local basis made of local Bethe wavefunctions, see Eq.~\eqref{eq:mps_T}. Our second step is to introduce change-of-basis tensors $\mathbb{S}[i]$ in Eq.~\eqref{eq:tensor_s}, which leads to a tensor network description of the Bethe wavefunction in the local occupation basis. Finally, the MPS is naturally expressed in terms of the tensors $\mathbb{R}[i]$ in Eq.~\eqref{eq:tensor_r}, which result from multiplying tensors $\mathbb{T}$ and $\mathbb{S}[i]$ together.}
\label{fig:mps}
\end{figure}

Our second step is to transform the above MPS representation of $\ket{\vec{1}}$ in the local Bethe wavefunction basis $\{\ket{\vec{a}_1,\vec{a}_2, \cdots, \vec{a}_L}\}$ into an MPS representation in the original, local occupation basis $\ket{\vec{\sigma}} = \ket{\vec{\sigma}_{A_1}, \vec{\sigma}_{A_2}, \cdots, \vec{\sigma}_{A_L}}$. To do so, we simply append a change-of-basis tensor $\mathbb{S}[i]$ to the outgoing downward index $\vec{a}_i$ of each tensor $\mathbb{T}$ in Eq.~\eqref{eq:multipartite_mps}. In doing so, we produce $L$ new tensors $\mathbb{R}[i]$ of the form
\begin{align}\label{eq:tensor_r}
    \mathbb{R}[i]_{\vec{\sigma}_{A_{i}}}^{\vec{\mu}_{L}, \vec{\mu}_{R}} \equiv \sum_{\vec{a}_i}\mathbb{T}^{\vec{\mu}_L}_{\vec{a}_i, \vec{\mu}_{R}}\mathbb{S}[i]^{\vec{a}_i}_{\vec{\sigma}_{A_i}}.
\end{align}
Then the MPS representation for the Bethe wavefunction
\begin{equation}
    |\vec{1}\rangle = \sum_{\vec{\sigma}_{A_1}, \cdots, \vec{\sigma}_{A_L}} \Psi_{L}[\vec{\sigma}_{A_1}, \cdots, \vec{\sigma}_{A_L}]\ket{\vec{\sigma}_{A_1}, \cdots, \vec{\sigma}_{A_L}}
\end{equation}
reads
\begin{eqnarray} \label{eq:mps_final}
\Psi_{L}[\vec{\sigma}_{A_1}, \cdots, \vec{\sigma}_{A_L}]   = ~~~~~~~~~~~~~~~~~~~~~~~~~~~~~~~~~~~~~~~~~~ \\
    ~~~~~~~~~~~\sum_{\{\vec{\mu}_i\}}\mathbb{R}[1]_{\vec{\sigma}_{A_1}}^{\vec{1}, \vec{\mu}_{1}}\left(\prod_{i = 2}^{L-1}\mathbb{R}[i]_{\vec{\sigma}_{A_i}}^{\vec{\mu}_{i-1}, \vec{\mu}_{i}}\right)\mathbb{R}[L]_{\vec{\sigma}_{A_L}}^{\vec{\mu}_{L-1}, \emptyset}. \nn
\end{eqnarray}

Note, the bond indices in this MPS are labeled by a choice vector $\vec{\mu}_i$. Given a global Bethe wavefunction of $M$ particles on $N$ sites, the number of independent choice vectors is at most $2^M$; thus the bond dimension is upper-bounded by $2^M$, independent of $N$. Let us summarise the result in the form of a theorem,

\begin{theorem}[\textbf{MPS representation of Bethe wavefunctions}]\label{thm:mps} Given an $M-$particle Bethe wavefunction with Bethe data $(\vec{k}, \boldsymbol{\theta})$ on a 1d lattice $\mathcal{L}$ made of $N$ sites, and a left-right partition of lattice $\mathcal{L}$ into $L$ parts $\{A_i\}$, one can explicitly construct the exact matrix product state representation in Eq.~\eqref{eq:mps_final}, made of $L$ tensors $\mathbb{R}$, which has bond dimension upper-bounded by $2^M$, independent of $N$.
\end{theorem}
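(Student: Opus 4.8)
The plan is to assemble the MPS directly from ingredients already in hand: the fractal multipartite decomposition of Theorem~\ref{thm:LeftRightMultipartite}, the three-index tensor $\mathbb{T}$ that encodes bipartite scattering amplitudes, and the change-of-basis tensors $\mathbb{S}[i]$. The construction proceeds in two stages, exactly as sketched around Eqs.~\eqref{eq:multipartite_mps}--\eqref{eq:mps_final}. In the first stage I build an MPS for the coefficient tensor $\delta_{\vec{a}_1 \cup \cdots \cup \vec{a}_L, \vec{1}}\,\Theta_L[\vec{a}_1,\cdots,\vec{a}_L]$ of the multipartite decomposition, working in the local Bethe-wavefunction basis $\{\ket{\vec{a}_1,\cdots,\vec{a}_L}\}$. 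I would do this by iterating the bipartite decomposition \eqref{eq:BW_general_bipartite} from left to right: first split $\mathcal{L}$ into $A_1$ and the remainder $M_1 = A_2\cup\cdots\cup A_L$, which introduces a tensor $\mathbb{T}^{\vec{1}}_{\vec{a}_1,\vec{\mu}_1}$ and a local Bethe wavefunction $\ket{\vec{\mu}_1}$ on $M_1$; then decompose $\ket{\vec{\mu}_1}$ across $A_2$ versus $M_2 = A_3\cup\cdots\cup A_L$, introducing $\mathbb{T}^{\vec{\mu}_1}_{\vec{a}_2,\vec{\mu}_2}$; and so on for $L-1$ steps, arriving at Eq.~\eqref{eq:mps_T}.

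The one thing that genuinely needs checking in this first stage is that contracting the bond indices $\{\vec{\mu}_i\}$ of this chain of $\mathbb{T}$'s reproduces precisely $\delta_{\vec{a}_1\cup\cdots\cup\vec{a}_L,\vec{1}}\,\Theta_L[\vec{a}_1,\cdots,\vec{a}_L]$ with $\Theta_L$ factorizing into bipartite amplitudes as in Eq.~\eqref{eq:multipartite_amplitude}, i.e. that the telescoping of $\mathbb{T}$'s neither over- nor under-counts scattering phases. But this is exactly the content of the inductive step in the proof of Theorem~\ref{thm:LeftRightMultipartite}: each newly inserted $\mathbb{T}^{\vec{\mu}_{i-1}}_{\vec{a}_i,\vec{\mu}_i}$ carries the delta $\delta_{\vec{a}_i\cup\vec{\mu}_i,\vec{\mu}_{i-1}}$ together with the factor $\Theta[\vec{a}_i,\vec{\mu}_i]$, and the key sub-factorization property $\Theta[\vec{a}_k,\vec{a}_i\cup\vec{\mu}_i] = \Theta[\vec{a}_k,\vec{a}_i]\,\Theta[\vec{a}_k,\vec{\mu}_i]$ of Eq.~\eqref{eq:Theta_subfactorization} lets one peel off the bipartite factor $\Theta[\vec{a}_k,\vec{a}_i]$ for each already-separated part $A_k$. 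Hence the proof here really just identifies the coefficient tensor of the decomposition \eqref{eq:multipartite_decomp} with the contraction of $\mathbb{T}$'s, rather than re-deriving anything.

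In the second stage I would convert from the local Bethe basis to the occupation basis. For this I append the tensor $\mathbb{S}[i]$, defined in Eq.~\eqref{eq:tensor_s} by $\mathbb{S}[i]^{\vec{a}_i}_{\vec{\sigma}_{A_i}} = \braket{\vec{\sigma}_{A_i}}{\vec{a}_i}$, to the downward (physical) leg $\vec{a}_i$ of each tensor $\mathbb{T}$, and absorb it into the new tensor $\mathbb{R}[i]$ of Eq.~\eqref{eq:tensor_r}. Since $\mathbb{S}[i]$ merely re-expands the local Bethe wavefunction $\ket{\vec{a}_i}$ in the occupation basis on $A_i$, performing the sum over $\vec{a}_i$ after contraction turns each product $\ket{\vec{a}_1,\cdots,\vec{a}_L}$ into the corresponding $\ket{\vec{\sigma}_{A_1},\cdots,\vec{\sigma}_{A_L}}$ and produces the amplitude $\Psi_L[\vec{\sigma}_{A_1},\cdots,\vec{\sigma}_{A_L}]$ of Eq.~\eqref{eq:mps_final}; this is a routine rewriting that requires no further identities.

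Finally, for the bond-dimension bound I note that each bond index $\vec{\mu}_i$ ranges over choice vectors $\vec{c}$ with $\emptyset \leq \vec{c}\leq \vec{1}$, of which there are $\sum_{m=0}^{M}{M\choose m} = 2^M$ (Section~\ref{sec:choice}); this count is manifestly independent of $N$ and of the sizes $N_{A_i}$ of the parts, so $\chi\leq 2^M$. If some part $A_i$ is too small to host all $M$ particles, the adjacent bond index runs over only a proper subset of choices, so $\chi$ only decreases, and the bound still holds. I expect no real obstacle beyond the bookkeeping of the first stage, and even that is discharged by appealing to Theorem~\ref{thm:LeftRightMultipartite}; the remainder is assembling and renaming tensors.
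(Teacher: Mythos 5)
Your proposal is correct and follows essentially the same route as the paper: the paper's proof of Theorem~\ref{thm:mps} is precisely the left-to-right iteration of the bipartite decomposition \eqref{eq:BW_general_bipartite} producing the chain of $\mathbb{T}$ tensors in Eq.~\eqref{eq:mps_T} (with the telescoping of scattering phases justified by the inductive step of Theorem~\ref{thm:LeftRightMultipartite} via Eq.~\eqref{eq:Theta_subfactorization}), followed by absorbing the change-of-basis tensors $\mathbb{S}[i]$ into $\mathbb{R}[i]$ as in Eq.~\eqref{eq:tensor_r}, and the bond-dimension bound from counting the $2^M$ choice vectors. No gaps.
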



For a proof, see how we constructed the exact MPS representation~\eqref{eq:mps_final}, written in terms of MPS tensors $\mathbb{R}[i]$, by reviewing the definitions of the tensors $\mathbb{R}, \mathbb{S}, \mathbb{T}$ and the bipartite scattering amplitudes $\Theta$, Eqs.~\eqref{eq:Theta_def}, ~\eqref{eq:tensor_t},~\eqref{eq:tensor_s}, and~\eqref{eq:tensor_r} (see also Fig.~\ref{fig:mps}). 

Two remarks are in order. First, here we chose to present the explicit MPS representation in terms of a single tensor $\mathbb{R}[i]$ for each site, as this is how a MPS is usually expressed. However, in practical applications one can also directly use the pair of tensors $(\mathbb{R}, \mathbb{S}[i])$ instead of $\mathbb{R}[i]$, which may have advantages in terms of storage and computational costs due to their sparsity. Second, in an MPS representation one often chooses $L=N$, so that each part $A_i$ corresponds to a single site of the 1d lattice $\mathcal{L}$. Our MPS representation is more general. For instance, for an integer $p$ such that $N/p$ is also an integer, we could set $L=N/p$ and have each part $A_i$ correspond to $p$ contiguous sites of the lattice, instead of just one site. 

\subsubsection*{Examples}

Let us build explicit MPS tensors $\mathbb{R}[i]$ for Bethe wavefunctions with $M=0,1,2,3$ particles, for the special case where each part $A_i$ is a single site of the 1d lattice $\mathcal{L}$, corresponding to local occupation basis $\{\ket{0}, \ket{1}\}$, for an arbitrary number $N$ of sites or, equivalently, for an arbitrary number $L$ of parts (with $L=N$), where we can use the site label $x=1,2,\cdots, N$ instead of the part label $i=1,2,\cdots, L$. We interpret tensor $\mathbb{R}[x]_{\sigma}^{\vec{\mu}_L, \vec{\mu}_R}$ as a collection of two $2^M\times 2^M$ matrices $\mathbb{R}[x]_{\sigma}$, one matrix for each value $\sigma = 0,1$ of the on-site physical index. Each entry of these matrices is labeled by $\vec{\mu}_L, \vec{\mu}_R$, where the basis is organized according to convention~\eqref{eq:choice_convention}. We then have
\begin{eqnarray} \label{eq:R0_qubit}
&&\mathbb{R}[x]^{\vec{\mu}\vec{\mu}'}_0 = \delta_{\vec{\mu},\vec{\mu}'},\\
&&\mathbb{R}[x]^{\vec{\mu}\vec{\mu}'}_1 = \sum_{j=1}^M \delta_{\vec{\mu}, \vec{\mu}' \cup (j)} ~\Theta[(j),\vec{\mu}'] ~e^{ik_jx}, \label{eq:R1_qubit}
\end{eqnarray}
and the number of non-zero coefficients is obtained from Eq. \eqref{eq:non-zeroT} by properly restricting the range of the second sum:
\begin{equation} \label{eq:non-zeroR_qubit}
    \sum_{m=0}^{M} {M \choose m} \sum_{m_L=0}^{1} {m \choose m_L} = \sum_{m=0}^{M} {M \choose m} (m+1).
\end{equation}
More specifically, for $M = 0$ particles, the matrices are,
\begin{equation}
     \mathbb{R}[x]_{0} = \left(\begin{array}{cc}
        1  
     \end{array}\right),~~~ \mathbb{R}[x]_{1} = \left(\begin{array}{cc}
        0 
     \end{array}\right).
\end{equation}
This corresponds to an MPS with bond dimension $\chi=1$, representing the product state $\ket{\emptyset} = \ket{00 \cdots 0}$ in Eq. \eqref{eq:vacuum_and_planewave}.

For $M = 1$ particles, the matrices are 
\begin{equation}
     \mathbb{R}[x]_{0} = \left(\begin{array}{cc}
        1  &  0\\
        0  & 1
     \end{array}\right),~~~ \mathbb{R}[x]_{1} = \left(\begin{array}{cc}
        0  &  0\\
        e^{ikx} & 0
     \end{array}\right).
\end{equation}
This corresponds to an MPS with bond dimension $\chi=2$, which can be seen to indeed represent a plane wave with quasi-momenta $k$ in Eq. \eqref{eq:vacuum_and_planewave}.

For an $M=2$ particles with quasi-momenta $k_1, k_2$ and scattering angle $\theta_{21}$, the matrices  $\mathbb{R}[x]^{0}$ and $\mathbb{R}[x]^{1}$ are of the form $\mathbb{R}[x]_{0}= \mathbb{I}_4$,
\begin{eqnarray}
\mathbb{R}[x]_{1} = \left(\begin{array}{cccc}
        0  & 0 & 0 & 0\\
        e^{ik_1x} & 0 & 0 & 0\\
        e^{ik_2x} & 0 & 0 & 0 \\
        0 & -e^{i\theta_{21}} e^{ik_2x} & e^{ik_1x} & 0
     \end{array}\right).
\end{eqnarray}
This corresponds to an MPS with bond dimension $\chi=4$, which can again be checked to represent the two-particle Bethe wavefunction in Eq. \eqref{eq:BW_M2}.

For $M=3$ particles with quasi-momenta $k_1, k_2, k_2$ and scattering angle $\theta_{21}, \theta_{31}, \theta_{32}$, we have $8\times 8$ matrices $\mathbb{R}[x]_0 = \mathbb{I}_8$ and
\begin{equation}
\mathbb{R}[x]_1 = \left(\begin{array}{cccccccc}
0~~  & 0~~ & 0~~ & 0~~ & 0~~ & 0~~ & 0~~ & 0\\
e^{ik_1x} & 0~~ & 0~~ & 0~~ & 0~~ & 0~~ &0~~&0\\
e^{ik_2x} & 0~~ & 0~~ & 0~~ & 0~~ & 0~~ & 0~~ & 0\\        
e^{ik_3x} & 0~~ & 0~~ & 0~~ & 0~~ & 0~~ & 0~~ & 0\\ 
0~~ & -e^{i\theta_{21}}e^{ik_2x} & e^{ik_1x} &0~~&0~~ & 0~~ & 0~~ & 0\\
0~~ & -e^{i\theta_{31}}e^{ik_3x} & 0~~ & e^{ik_1x} & 0~~ & 0~~ & 0~~& 0\\
0~~ & 0~~ & -e^{i\theta_{32}}e^{ik_3x} & e^{ik_2x} & 0~~ & 0~~ & 0~~& 0\\
0~~ & 0~~ & 0~~ & 0~~ & e^{i\theta_{31}}e^{i\theta_{32}}e^{ik_3x} & -e^{i\theta_{21}}e^{ik_2x} & e^{ik_1x} & 0\\
     \end{array}\right). 
\end{equation}
This corresponds to an MPS with bond dimension $\chi=8$ that can be seen to represent the three-particle Bethe wavefunction in Eq. \eqref{eq:BW_M3}.

Notice that the matrix elements $\mathbb{R}[x]_0^{\vec{\mu}_L,\vec{\mu}_R}$ and $\mathbb{R}[x]_1^{\vec{\mu}_L, \vec{\mu}_R}$ for $M-1$ particles can be obtained from those for $M$ particles by restricting our attention to a proper subset of choices $\vec{\mu}_L,\vec{\mu}_R$. 
 
\subsection{Tree tensor network (TTN): regular binary tree}\label{sec:ttn}

We now consider a tensor network where the network geometry is that of a regular binary tree, see Figs. \ref{fig:threePlanarTrees}b and \ref{fig:ttn}. As in the MPS case discussed above, the tree has $L$ leaves labelled by integer $i$, with $i=1,2, \cdots, L$, where now we assume that $L=2^Z$ for some integer $Z \geq 1$, and the graph is organized in $Z$ layers of nodes, each labelled by a pair of integers ($z,i$), where $z =0,1,\cdots, Z-1$ denotes the layer or \textit{scale} of the node and $i=1,2,\cdots, 2^z$ denotes the position of the node within that layer. Each node has three edges. For $z=1, 2, \cdots, Z-2$, the three edges of node ($z,i$) consist of a \textit{bond} edge connecting it to node $(z-1,(i+1) \mbox{~div~} 2)$ of the layer $z-1$ above (here $i \mbox{~div~} 2$ denotes integer division of $i$ by 2) and two \textit{bond} edges connecting it to nodes $(z+1,2i-1)$ and $(z+1,2i)$ of the layer $z+1$ below. At the top layer $z=0$, node $(z=0,i=1)$ has an edge representing the \textit{root} of the tree and two bond edges connecting it to nodes $(z=1,i=1)$ and $(z=1,i=2)$, see Fig. \ref{fig:threePlanarTrees}b. At the bottom layer $z=Z-1$, we have $2^{Z-1}$ nodes. Each such node $(z=Z-1, i)$ has one bond edge connecting it to node ($z=Z-2, (i+1) \mbox{~div~} 2$) of layer $z=Z-2$ and two \textit{physical} edges connecting them to two leaves $2i-1$ and $2i$.

As we did for the MPS case earlier on, our first step is to build a TTN representation of the amplitudes $\delta_{\vec{a}_1 \cup \cdots \cup \vec{a}_L \vec{1}} ~ \Theta_L[\vec{a}_1, \cdots, \vec{a}_L]$ in Eq. \eqref{eq:multipartite_decomp} using tensors $\mathbb{T}$, but now for the above regular binary tree. We proceed by first bipartite decomposing the Bethe wavefunction $|\vec{1}\rangle$ into two local Bethe wavefunctions $|\vec{\mu}_{[1,1]}\rangle$ and  $|\vec{\mu}_{[1,2]}\rangle$ supported on parts $M_{[1,1]} \equiv \bigcup_{i = 1}^{L/2}A_i$ and $M_{[1,2]} \equiv \bigcup_{i = L/2+1}^{L}A_i$ respectively, 
\begin{align}
    \includegraphics[width = 0.95\columnwidth]{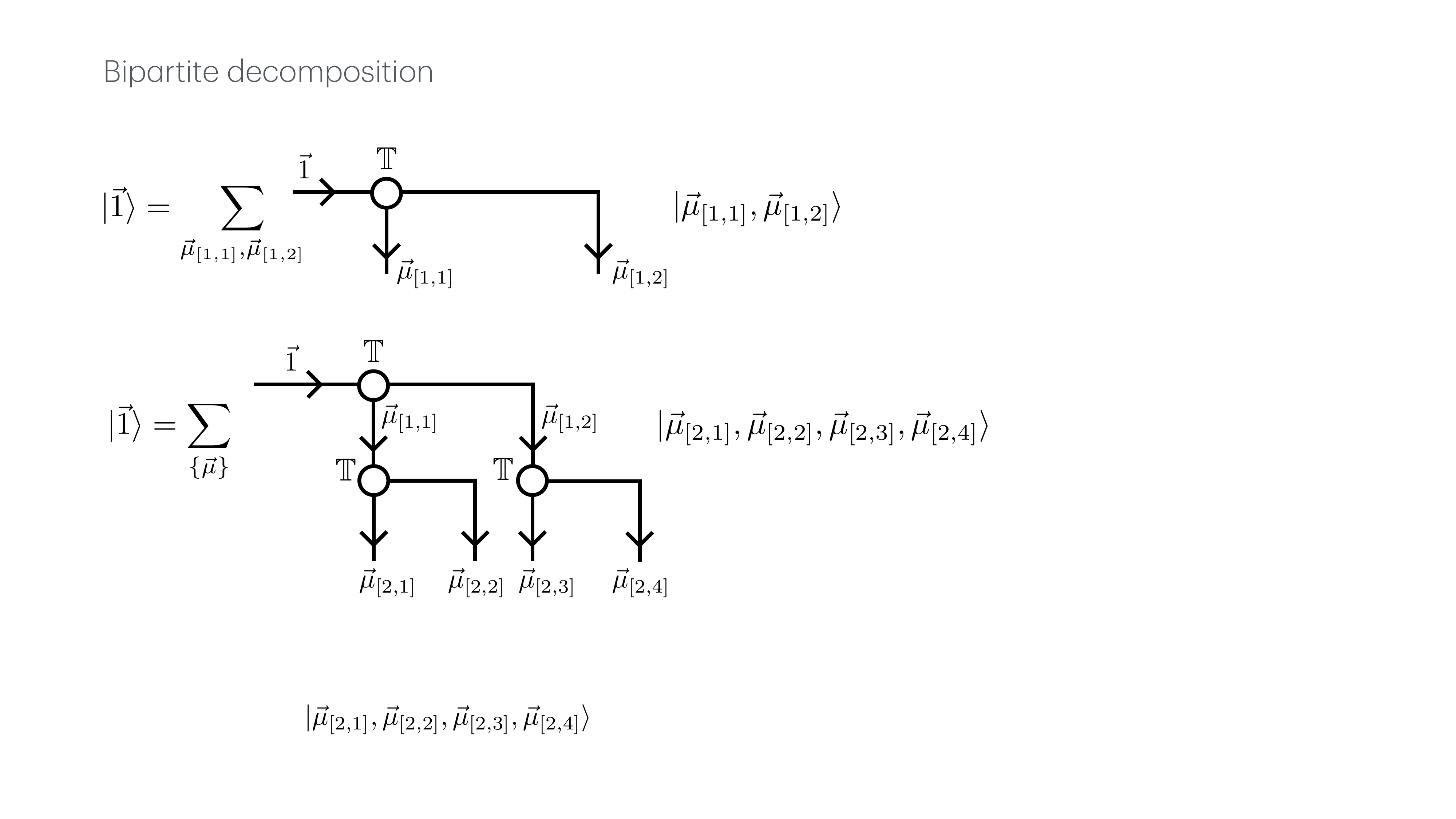}
\end{align}
Here we have introduced the choice vectors $\vec{\mu}_{[z,i]}$ with $z=1$ and $i = 1,2$ that label Bethe wavefunctions $\ket{\vec{\mu}_{[z,i]}}$ supported on the two parts. Next, we bipartite decompose each $\ket{\vec{\mu}_{[1,i]}}$ further as,
\begin{align}
    \includegraphics[width = \columnwidth]{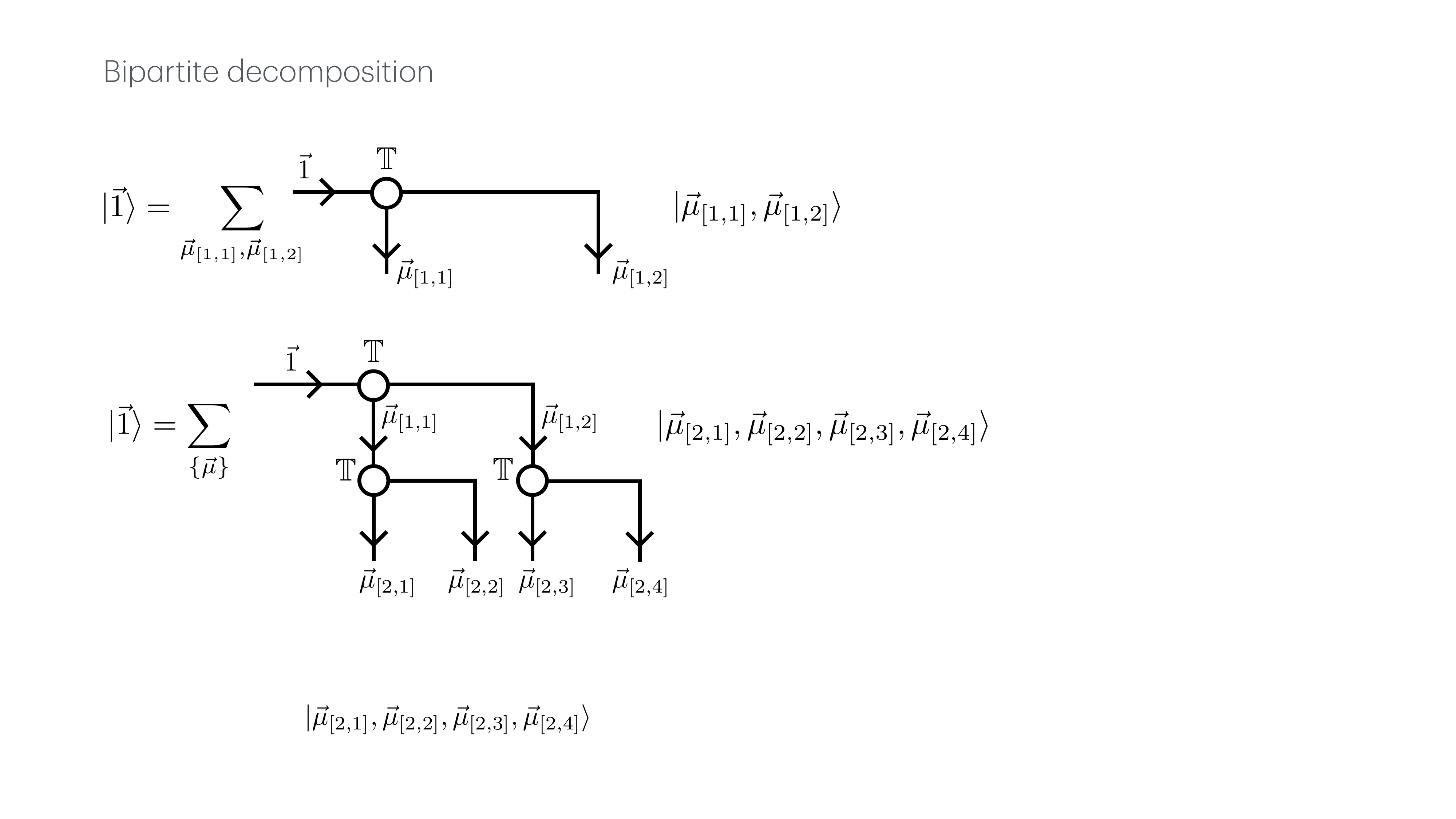}
\end{align}
in terms of the next set of local Bethe wavefunctions $\ket{\vec{\mu}_{[2,2i-1]}}$ and $\ket{\vec{\mu}_{[2,2i]}}$. Here we have divided lattice $\mathcal{L}$ into four subregions $M_{[2,i]}$ for $i=1,2,3,4$, with $M_{[2,1]} \equiv \bigcup_{i = 1}^{L/4}A_i$,  $M_{[2,2]} \equiv \bigcup_{i = L/4+1}^{L/2}A_i$, etc, and  $\sum_{\{\vec{\mu}\}}$ refers to sums over all choice vectors $\vec{\mu}_{z,i}$ in the diagram.

This process of hierarchical bipartite decompositions can be iterated $Z$ times, until the Bethe wavefunction $\ket{\vec{1}}$ is expressed in terms of $L = 2^Z$ local Bethe wavefunctions suppported on the original $L$ parties $\{A_i\}$. For notational consistency, we label the lowest layer choice vectors as $\vec{a}_i$, and the diagrammatic representation of the regular binary TTN decomposition of coefficients $\delta_{\vec{a}_1 \cup \cdots \cup \vec{a}_L \vec{1}} ~ \Theta_L[\vec{a}_1, \cdots, \vec{a}_L]$ in Eq. \eqref{eq:multipartite_decomp} using tensors $\mathbb{T}$ is

\begin{align}\label{eq:multipartite_ttn}
    &\includegraphics[width = 0.8\columnwidth]{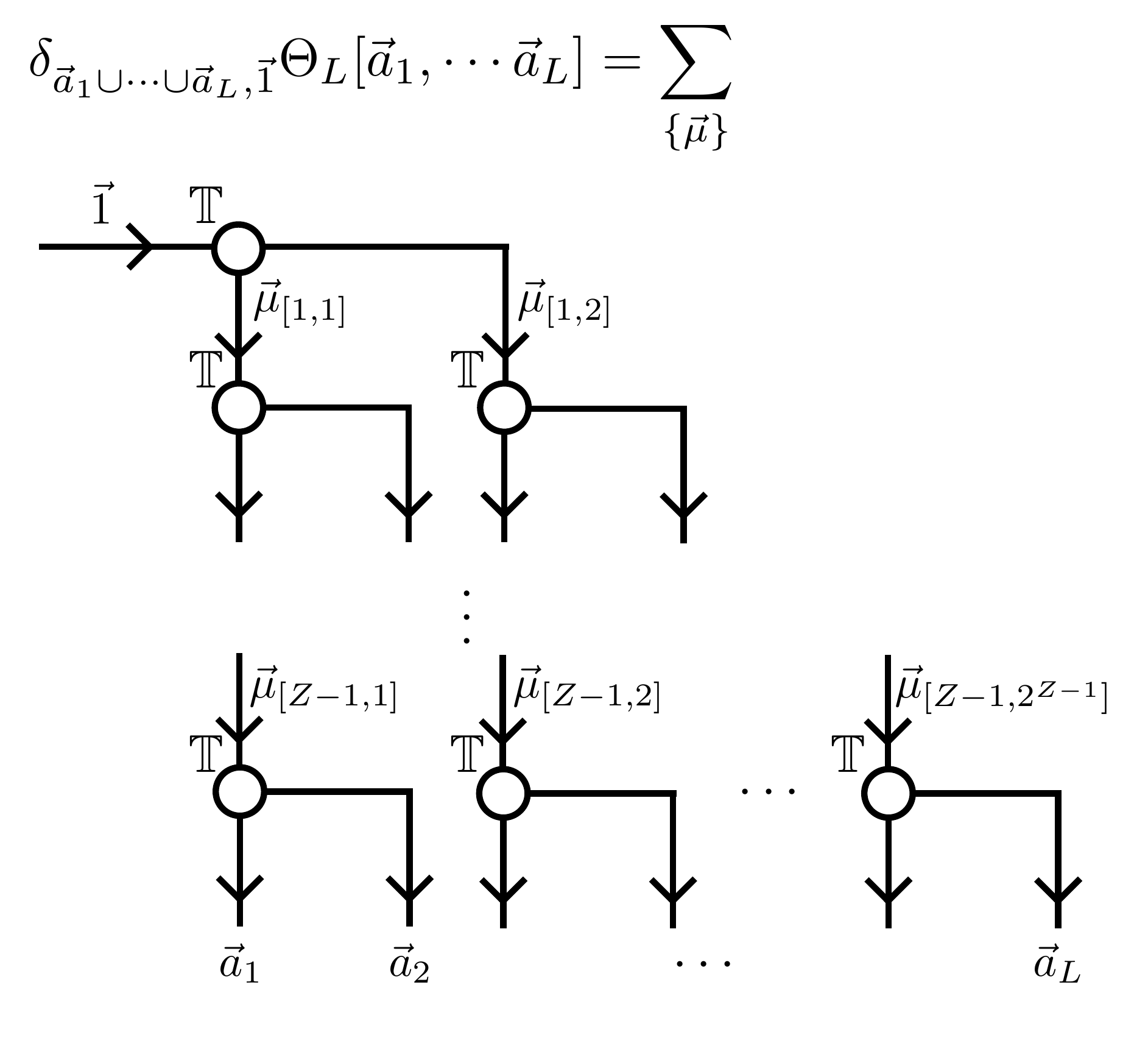}
\end{align}

The above TTN representation has depth $Z = \log_{2}L$, that is, we used $Z$ layers of tensors, with a total of $1+2+\cdots+2^{L-1} = L-1$ tensors $\mathbb{T}$. Its bond dimension for an $M-$particle Bethe wavefunction is upper-bounded by $2^M$. As with the MPS, this tensor network can be converted into a TTN representation for $\ket{\vec{1}}$ in the local occupation basis by the local basis transformation using the matrices $\mathbb{S}$ as defined in Eq.~\eqref{eq:tensor_s}, see Fig.~\ref{fig:ttn}, where tensor $\mathbb{S}[i]$ acts on leaf $i$ and depends on the position vector $\vec{x}_{A_i} \in A_i$, see Fig. \ref{fig:ttn}.

\begin{figure}
    \centering
    \includegraphics[width = 0.65\columnwidth]{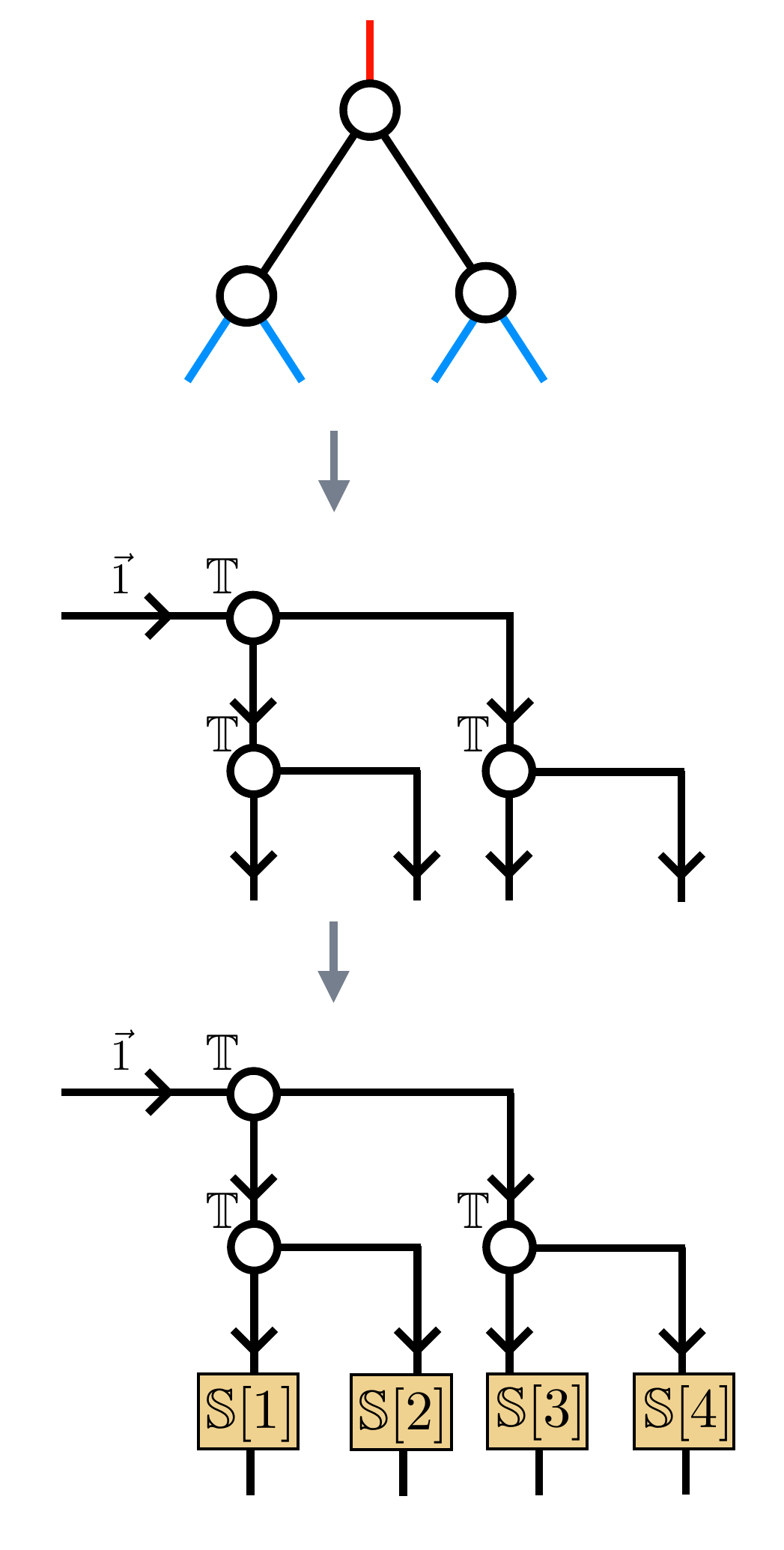}
    \caption{TTN representation of Bethe wavefunction in local occupation basis.
    Starting from the regular binary tree in Fig. \ref{fig:threePlanarTrees}b, we build a TTN representation of a Bethe wavefunction in two steps. First we replace each node of the tree with a tensor $\mathbb{T}$ in Eq. \eqref{eq:tensor_t}, which already produces a regular binary TTN representation of the Bethe wavefunction when expressed in a local basis made of local Bethe wavefunctions, see Eq.~\eqref{eq:multipartite_ttn}. Our second step is to introduce change-of-basis tensors $\mathbb{S}[i]$ in Eq.~\eqref{eq:tensor_s}, which leads to our final regular binary TTN representation of the Bethe wavefunction in the local occupation basis.
    }
    \label{fig:ttn}
\end{figure}

We summarise the above result in a theorem, 
\begin{theorem}[\textbf{Regular binary TTN representation of Bethe wavefunctions}]\label{thm:ttn} Given an $M-$particle Bethe wavefunction with Bethe data $(\vec{k},\boldsymbol{\theta})$ on a 1d lattice $\mathcal{L}$ made of $N$ sites, and a left-right partition of lattice $\mathcal{L}$ into $L$ parts $\{A_i\}$, one can explicitly construct the exact regular binary TTN representation in Fig. \eqref{fig:ttn}, made of $L-1$ tensors $\mathbb{T}$ and $L$ tensors $\mathbb{S}[i]$, which have bond dimension upper-bounded by $2^M$, independent of $N$.
\end{theorem}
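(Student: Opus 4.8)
The plan is to mirror the construction of the MPS representation in Theorem~\ref{thm:mps}, replacing the caterpillar sequence of bipartitions by the balanced hierarchy of bipartitions dictated by the regular binary tree. First I would apply the fractal bipartite decomposition of Theorem~\ref{thm:LeftRight}, in the compact form \eqref{eq:BW_decomposition_new}, to the global Bethe wavefunction $\ket{\vec{1}}$ for the coarsest bipartition $\mathcal{L} = M_{[1,1]}\cup M_{[1,2]}$ into the left and right halves, producing a root tensor $\mathbb{T}^{\vec{1}}_{\vec{\mu}_{[1,1]},\vec{\mu}_{[1,2]}}$ contracted with two local Bethe wavefunctions. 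The crucial point, already established in Theorem~\ref{thm:LeftRight}, is that each of $\ket{\vec{\mu}_{[1,1]}}$ and $\ket{\vec{\mu}_{[1,2]}}$ is again a (local) Bethe wavefunction, so the general bipartite decomposition \eqref{eq:BW_general_bipartite} applies verbatim to each. Iterating this $Z=\log_2 L$ times, at step $z$ replacing each $\ket{\vec{\mu}_{[z,i]}}$ by a tensor $\mathbb{T}^{\vec{\mu}_{[z,i]}}_{\vec{\mu}_{[z+1,2i-1]},\vec{\mu}_{[z+1,2i]}}$ and two finer local Bethe wavefunctions, yields the decomposition depicted in \eqref{eq:multipartite_ttn}: a tree of $1+2+\cdots+2^{Z-1}=L-1$ copies of $\mathbb{T}$ whose $L$ open leaf indices $\vec{a}_1,\ldots,\vec{a}_L$ label local Bethe wavefunctions on the parts $A_1,\ldots,A_L$.

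Second, I would verify that contracting this tree of $\mathbb{T}$ tensors reproduces exactly the coefficient $\delta_{\vec{a}_1\cup\cdots\cup\vec{a}_L,\vec{1}}\,\Theta_L[\vec{a}_1,\ldots,\vec{a}_L]$ of the fractal multipartite decomposition \eqref{eq:multipartite_decomp}. The delta functions compose trivially: the internal deltas force $\vec{\mu}_{[z,i]}=\vec{\mu}_{[z+1,2i-1]}\cup\vec{\mu}_{[z+1,2i]}$, and telescoping up the tree gives the single global constraint $\vec{a}_1\cup\cdots\cup\vec{a}_L=\vec{1}$. For the scattering amplitudes, the tensor at node $(z,i)$ contributes $\Theta[\vec{\mu}_{[z+1,2i-1]},\vec{\mu}_{[z+1,2i]}]$, i.e. $\Theta$ of the union of leaf choices in its left subtree against the union of leaf choices in its right subtree. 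Repeatedly applying the sub-factorization identity \eqref{eq:Theta_subfactorization} together with its mirror $\Theta[\vec{a}\cup\vec{b},\vec{c}]=\Theta[\vec{a},\vec{c}]\,\Theta[\vec{b},\vec{c}]$ (both immediate from the definition \eqref{eq:Theta_def}) expands each such factor into a product of elementary amplitudes $\Theta[\vec{a}_{l_1},\vec{a}_{l_2}]$ over pairs of leaves $l_1$ in the left subtree and $l_2$ in the right subtree. Because the tree is planar, with its leaves ordered left to right consistently with the partition, and binary, every ordered pair $l_1<l_2$ has a unique least common ancestor, at which $l_1$ lies in the left subtree and $l_2$ in the right; hence each factor $\Theta[\vec{a}_{l_1},\vec{a}_{l_2}]$ with $l_1<l_2$ appears exactly once, and the product over all $L-1$ nodes telescopes precisely to $\Theta_L[\vec{a}_1,\ldots,\vec{a}_L]$ as given by \eqref{eq:multipartite_amplitude}. (Alternatively, this follows by specializing the inductive recursion \eqref{eq:Lpartite_amplitude_diag} of Theorem~\ref{thm:LeftRightMultipartite} to a balanced splitting order.)

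Third, exactly as in the MPS case, I would dress each leaf $i$ with the change-of-basis tensor $\mathbb{S}[i]^{\vec{a}_i}_{\vec{\sigma}_{A_i}}$ of \eqref{eq:tensor_s}, which by construction re-expresses $\ket{\vec{a}_i}$ in the local occupation basis $\ket{\vec{\sigma}_{A_i}}$; summing over $\vec{a}_i$ converts the decomposition in the local Bethe basis into the regular binary TTN for $\ket{\vec{1}}$ in the occupation basis shown in Fig.~\ref{fig:ttn}. Counting, the network has exactly $L-1$ tensors $\mathbb{T}$ (one per internal tree node across the $Z$ layers) and $L$ tensors $\mathbb{S}[i]$ (one per leaf). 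Finally, every internal (bond) edge carries a choice-vector index $\vec{\mu}_{[z,i]}$ ranging over the at most $2^M$ choices compatible with the global Bethe data $(\vec{k},\boldsymbol{\theta})$, so the bond dimension is bounded by $2^M$, independent of $N$; if a part cannot host all $M$ particles the relevant bond carries only a subset of these choices and the bound is even smaller. I expect the only genuinely non-routine step to be the telescoping argument for the amplitudes in the second paragraph — showing that the product of bipartite $\Theta$'s along the binary tree collapses to the all-pairs product $\Theta_L$ — while the remaining steps are bookkeeping that parallels the chain construction underlying Theorem~\ref{thm:mps}.
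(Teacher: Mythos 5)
Your proposal is correct and follows essentially the same route as the paper: iterate the fractal bipartite decomposition of Theorem \ref{thm:LeftRight} along the balanced hierarchy of the regular binary tree to obtain the $L-1$ tensors $\mathbb{T}$ of Eq.~\eqref{eq:multipartite_ttn}, then dress each leaf with $\mathbb{S}[i]$ and read off the $2^M$ bound on the bond indices. Your least-common-ancestor telescoping argument showing that the product of bipartite amplitudes collapses to $\Theta_L[\vec{a}_1,\ldots,\vec{a}_L]$ is a valid and welcome explicit justification of a step the paper leaves implicit (it follows there from the exactness of each hierarchical bipartite decomposition together with Eq.~\eqref{eq:Theta_subfactorization}), but it is the same underlying mechanism, not a different proof.
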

For a proof, see preceding discussions.

\subsubsection*{Examples}
Consider a Bethe wavefunction of $M =2$ particles on a 1d lattice $\mathcal{L}$ made of $N$ sites (for $N=4p$, for an integer $p>0$), left-right partitioned into $L=4$ parts $A_1,A_2,A_3,A_4$, each made of $p=N/4$ contiguous sites. In this case, we obtain a regular binary TTN representation made of $2$ layers of $\mathcal{T}$ tensors (with a total of $1+2=3$ tensors $\mathbb{T}$) and one layer of tensors $\mathbb{S}[i]$ (with a total of $4$ tensors $\mathbb{S}[i]$). Let the Bethe data be the two quasi-momenta $k_1, k_2$ and the scattering angle $\theta_{21}$. Then the TTN is given by Fig. \ref{fig:ttn}, where each tensor $\mathbb{T}$, with $M=2$, is given in Eq.~\eqref{eq:tensor_t_M2}. The relevant tensors $\mathbb{S}$  are given in Eqs.~\eqref{eq:tensor_s_m0},~\eqref{eq:tensor_s_m1} and~\eqref{eq:tensor_s_m2}. 

\subsection{Tree tensor network (TTN): generic planar tree}

\begin{figure}
    \centering
    \includegraphics[width = 0.65\columnwidth]{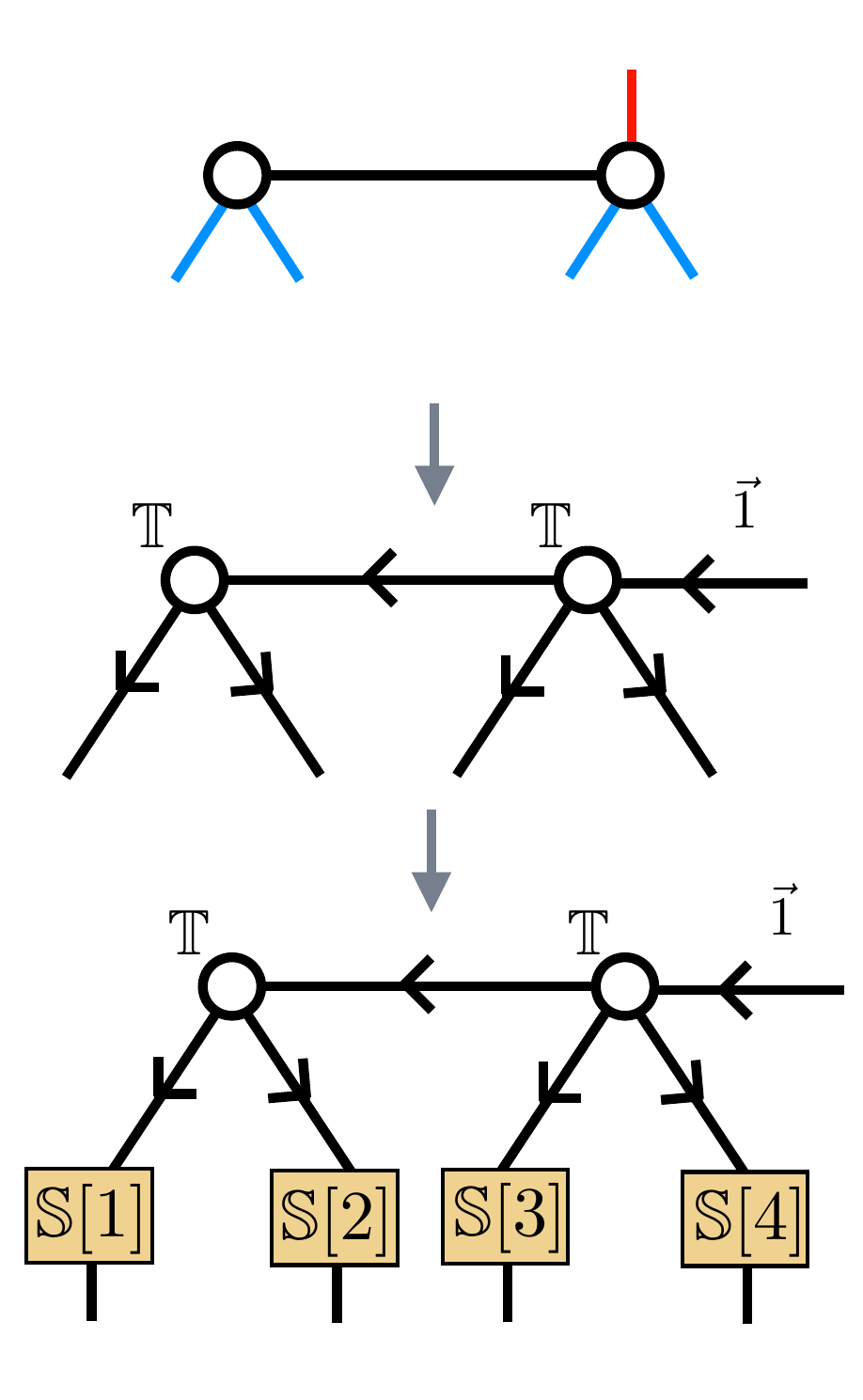}
    \caption{Given a planar tree with $K$ nodes, a root (red leg) and $L$ leaves (blue legs), we build a corresponding planar TTN representation for a Bethe wavefunction on an $N$-site 1d lattice $\mathcal{L}$ partitioned into $L$ parts $\{A_i\}$ in two steps. First we define a flow from the root to the leaves, which gives each edge a direction, and assign a tensor $\mathbb{T}$ in Eq.~\eqref{eq:tensor_tq} to each node. This already results in a TTN for the Bethe wavefunction when expressed in terms of products of $L$ local Bethe wavefunctions. Then we assign a change-of-basis tensor $\mathbb{S}[i]$ in Eq.~\eqref{eq:tensor_s} to each leaf of the tree, resulting in the final TTN representation of the Bethe wavefunction in the local occupation basis.}
    \label{fig:gen_ttn}
\end{figure}

Finally, let us consider a generic planar tree made of some number $K$ of nodes interconnected by bond edges, $L$ physical edges and a root edge. For instance, Figs.~\ref{fig:threePlanarTrees}c and \ref{fig:gen_ttn} show an example with $K=2$ nodes and $L=4$ physical edges). We unambiguously assign a direction to each edge of the planar tree graph by defining a flow from the root to any of the leafs. As a result, each node has one in-coming index and a number of out-going indices. We assign to each of the $K$ nodes a (generalized) tensor $\mathbb{T}$ with one incoming index $\vec{\mu}$ and the corresponding number, say $q$, of out-going indices $\vec{\nu}_1, \vec{\nu}_2, \cdots, \vec{\nu}_q$, defined by
\begin{equation} \label{eq:tensor_tq}
    \mathbb{T}^{\vec{\mu}}_{\vec{\nu}_1, \vec{\nu}_2, \cdots, \vec{\nu}_q} = \delta_{\vec{\nu}_1\cup \vec{\nu}_2\cup \cdots \cup \vec{\nu}_q, \vec{\mu}}~ \Theta_q[\vec{\nu}_1,\vec{\nu}_2, \cdots, \vec{\nu}_q].
\end{equation}
Then a TTN representation of the amplitudes $\delta_{\vec{a}_1 \cup \cdots \cup \vec{a}_L \vec{1}} ~ \Theta_L[\vec{a}_1, \cdots, \vec{a}_L]$ in Eq. \eqref{eq:multipartite_decomp} for a Bethe wavefunction $\ket{\vec{1}}$ is obtained by setting to choice $\vec{1}$ the index corresponding to the root of the tree. Furthermore, by dressing each of the $L$ leaves with a corresponding tensor $\mathbb{S}[i]$, we obtain a TTN decomposition of the same Bethe wavefunction $\ket{\vec{1}}$ expressed in the occupation basis.

\begin{theorem}[\textbf{General planar TTN representation of Bethe wavefunctions}]\label{thm:gen_ttn} 
Given (1) an $M$-particle Bethe wavefunction $\ket{\vec{1}}$ with the Bethe data $(\vec{k},\boldsymbol{\theta})$ on a 1d lattice $\mathcal{L}$ made of $N$ sites, (2) a left-right partition of lattice $\mathcal{L}$ into $L$ parts $\{A_i\}$, and (3) a planar tree with $K$ nodes, one root edge and $L$ leaf edges (see text above for further description), one can explicitly construct an exact TTN representation of $\ket{\vec{1}}$ made of $K$ tensors $\mathbb{T}$ (of appropriate degree, see Eq.~\eqref{eq:tensor_tq}) and $L$ tensors $\mathbb{S}[i]$ in Eq.~\eqref{eq:tensor_s}, with bond dimension upper-bounded by $2^M$, independent of $N$.
\end{theorem}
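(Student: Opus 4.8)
The plan is to reduce the general planar tree case to a recursive application of the fractal multipartite decomposition (Theorem~\ref{thm:LeftRightMultipartite}), now performed \emph{locally} at each node of the tree, and then to insert the change-of-basis tensors $\mathbb{S}[i]$ at the leaves. First I would read off from the planar tree a nested partition of $\mathcal{L}$: to each node $v$ assign the region $R_v\subseteq\mathcal{L}$ obtained as the union of the parts $A_i$ sitting at the leaves of the subtree rooted at $v$. Because the tree is planar and its leaves are ordered $1,\dots,L$ from left to right, each $R_v$ is a contiguous block of sites, the children of $v$ induce a left-right partition of $R_v$, the root is assigned the whole lattice $\mathcal{L}$, and the leaves are assigned $A_1,\dots,A_L$.

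The key ingredient is a local version of Theorem~\ref{thm:LeftRightMultipartite}: if a contiguous region $R$ is partitioned left-to-right into $q$ sub-regions $R_1,\dots,R_q$, then for any choice $\vec{\mu}\leq\vec{1}$ the local Bethe wavefunction $|\vec{\mu}\rangle_R$, viewing $R$ as a 1d lattice in its own right (cf. the discussion around Eqs.~\eqref{eq:local_BW_A}--\eqref{eq:BW_general_bipartite}), decomposes as $|\vec{\mu}\rangle_R = \sum_{\vec{\nu}_1,\dots,\vec{\nu}_q}\delta_{\vec{\nu}_1\cup\cdots\cup\vec{\nu}_q,\vec{\mu}}\,\Theta_q[\vec{\nu}_1,\dots,\vec{\nu}_q]\,|\vec{\nu}_1\rangle_{R_1}\cdots|\vec{\nu}_q\rangle_{R_q}$. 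This is exactly Theorem~\ref{thm:LeftRightMultipartite} with $\vec{\mu}$ playing the role of $\vec{1}$, already covered by the induction in its proof (built from iterated bipartite decompositions~\eqref{eq:BW_general_bipartite}). Comparing with~\eqref{eq:tensor_tq}, these coefficients are precisely the components $\mathbb{T}^{\vec{\mu}}_{\vec{\nu}_1,\dots,\vec{\nu}_q}$ of the tensor assigned to a node with $q$ outgoing edges.

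With this in hand I would prove the theorem by induction on the height of the tree: starting from the root with index fixed to $\vec{1}$, apply the local decomposition at the root, then recursively at each child, grandchild, and so on down to the leaves. Since every step is an exact equality of states, the composition is an exact equality between $|\vec{1}\rangle$ and a sum of products $|\vec{a}_1,\dots,\vec{a}_L\rangle$ of local Bethe wavefunctions on the leaf parts, whose coefficients are, by construction, the contraction over the tree of one tensor $\mathbb{T}$ per node with the root index set to $\vec{1}$. Dressing each leaf $i$ with the change-of-basis tensor $\mathbb{S}[i]$ of~\eqref{eq:tensor_s} then rewrites each $|\vec{a}_i\rangle$ in the occupation basis $|\vec{\sigma}_{A_i}\rangle$, giving the claimed TTN with $K$ tensors $\mathbb{T}$ and $L$ tensors $\mathbb{S}[i]$. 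The bond-dimension bound is immediate: every bond index is a choice vector $\vec{c}$ with $\emptyset\leq\vec{c}\leq\vec{1}$, and there are $\sum_{m=0}^{M}\binom{M}{m}=2^M$ of these regardless of $N$ (fewer still if some $R_v$ cannot host all $M$ particles), so $\chi\leq 2^M$.

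I expect the only delicate point to be the bookkeeping that the nested decompositions compose consistently over an \emph{arbitrary} planar tree, i.e. that the shape of the nesting does not affect the final state or the set of coefficients. This is guaranteed because each decomposition step is a genuine identity of states and because the amplitude obeys the sub-factorization property~\eqref{eq:Theta_subfactorization}, $\Theta[\vec{a},\vec{b}\cup\vec{c}]=\Theta[\vec{a},\vec{b}]\,\Theta[\vec{a},\vec{c}]$, which is exactly what makes refining one part into sub-parts commute with the rest of the tree; one can also verify a posteriori that the contracted coefficients equal $\delta_{\vec{a}_1\cup\cdots\cup\vec{a}_L,\vec{1}}\,\Theta_L[\vec{a}_1,\dots,\vec{a}_L]$ of~\eqref{eq:multipartite_decomp}, which recovers the MPS and regular binary TTN of Theorems~\ref{thm:mps} and~\ref{thm:ttn} as the chain and balanced-binary special cases. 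Everything else is assembling pieces already established in Sections~\ref{sec:LeftRightBipartition}--\ref{sec:tensor_network}.
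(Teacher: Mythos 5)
Your proposal is correct and follows essentially the same route as the paper: the paper's own justification (the discussion preceding the theorem, building on the iterated bipartite/multipartite decompositions of Secs.~\ref{sec:LeftRightBipartition}--\ref{sec:LeftRightMultipartition} and the $q$-ary tensor of Eq.~\eqref{eq:tensor_tq}) is precisely the recursive node-by-node application of the fractal decomposition that you describe, with $\mathbb{S}[i]$ appended at the leaves and the $2^M$ bound coming from counting choice vectors. Your explicit appeal to planarity for contiguity of the subtree regions and to the sub-factorization property~\eqref{eq:Theta_subfactorization} for consistency of the nesting makes the argument somewhat more self-contained than the paper's terse "see preceding discussion," but it is the same proof.
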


For a proof, see preceding discussion.

Note that this theorem includes both Theorem~\ref{thm:mps} for MPS and Theorem~\ref{thm:ttn} for a regular binary TTN as special cases.

\subsection{Computation of norms and overlaps}\label{sec:contraction}

\begin{figure}
    \centering
    \includegraphics[width = \columnwidth]{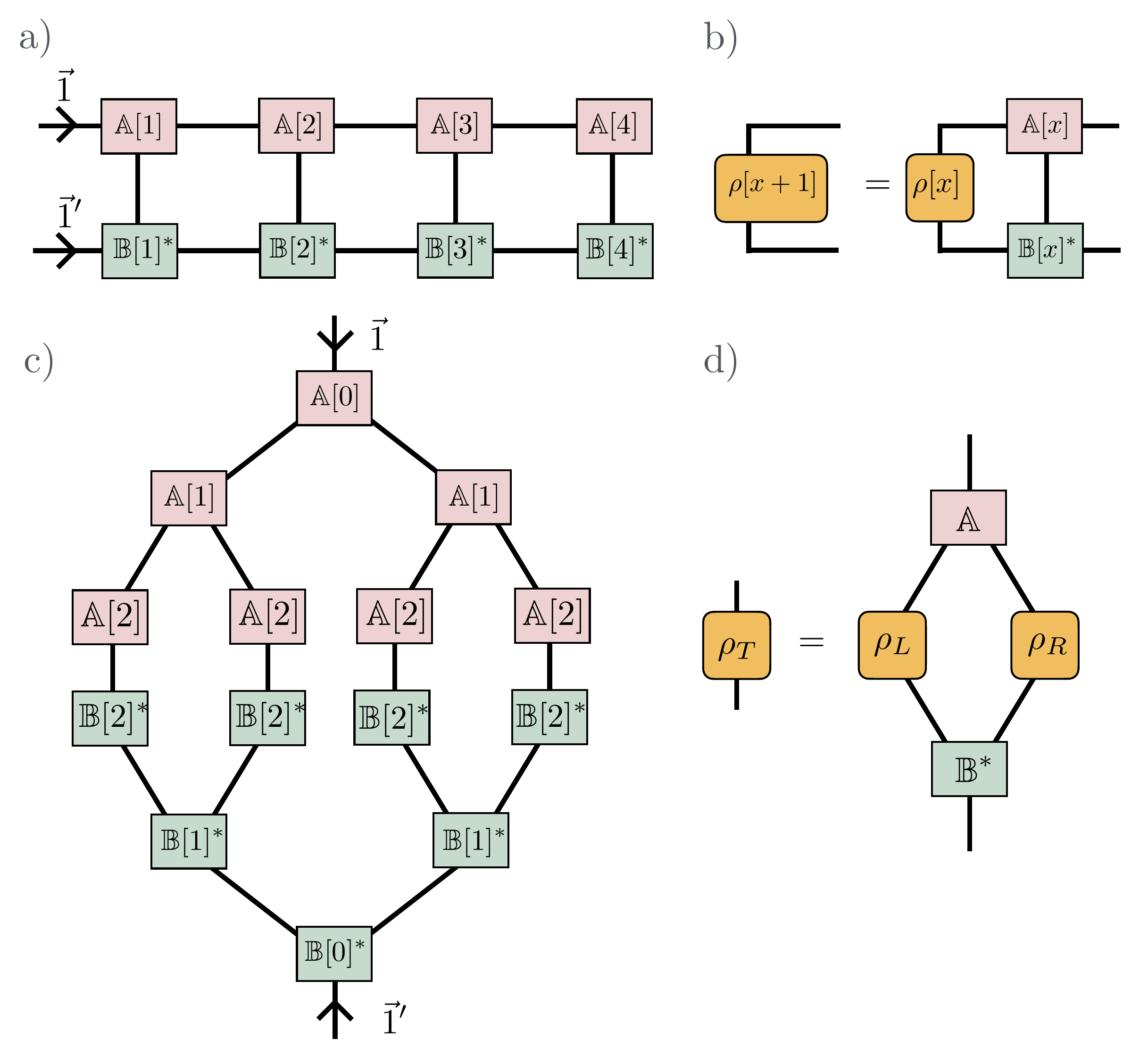}
    \caption{ (a) Tensor network representing the overlap between two wavefunctions expressed as a MPS, see Eq. \eqref{eq:MPS_contract}. 
    (b) Typical tensor contraction required in order to contract the MPS overlap tensor network.
    (c) Tensor network representing the overlap between two wavefunctions expressed as a regular binary TTN.
    (d) Typical tensor contraction required in order to contract the regular binary TTN overlap tensor network, see Eq. \eqref{eq:TTN_contract}.}
    \label{fig:planar_ttn}
\end{figure}

Given a tensor network representation of Bethe wavefunctions, as the ones provided earlier on in this section, an important question is how to extract quantities of interest. Two examples of quantities we would like to compute from the tensor network representation are: 

(i) The overlap $\braket{\Phi_M^{\vec{k}', \boldsymbol{\theta}' }}{\Phi_M^{ \vec{k},\boldsymbol{\theta}}}$ between two Bethe wavefunctions $\ket{\Phi_M^{\vec{k}, \boldsymbol{\theta}}}$ and $\ket{\Phi_M^{ \vec{k}',\boldsymbol{\theta}'}}$; this includes the squared norm $\braket{\Phi_M^{\vec{k}, \boldsymbol{\theta} }}$ of a single Bethe wavefunction $\ket{\Phi_M^{\vec{k}, \boldsymbol{\theta} }}$. Notice that so far all decompositions and tensor network representations presented in this work were for a Bethe wavefunction that had not been normalized.

(ii) The expectation value 
\begin{equation}
    \bra{\Phi_M^{\vec{k}, \boldsymbol{\theta}}} o_1(x_1) o_2(x_2) \cdots o_k(x_k) \ket{\Phi_M^{\vec{k}, \boldsymbol{\theta}}} \nonumber
\end{equation} 
of a product of local operators $o_1(x_1), o_2(x_2), \cdots, o_k(x_q)$. 

The second computation can be seen to be closely related to the first one, which is the only one that we will analyze here. To compute the overlap $\braket{\Phi_M^{\vec{k}', \boldsymbol{\theta}' }}{\Phi_M^{ \vec{k},\boldsymbol{\theta}}}$, we first build a single tensor network by merging together the tensor networks for $\bra{\Phi_M^{\vec{k}', \boldsymbol{\theta}' }}$ and $\ket{\Phi_M^{ \vec{k} ,\boldsymbol{\theta} }}$ through their physical indices, see Figs. \ref{fig:planar_ttn}a and \ref{fig:planar_ttn}c, then contract all the indices using a standard contraction order specific to that tensor network. Bellow we outline the computational cost involved in such calculations for the specific cases of an MPS and a regular binary TTN. We assume that both Bethe wavefunctions have the same particle number $M$ (otherwise the overlap vanishes) but possibly different Bethe data.

\subsubsection*{MPS}

Consider an MPS representation of $\ket{\Phi_M^{\vec{k}, \boldsymbol{\theta} }}$ and $\ket{\Phi_M^{ \vec{k}',\boldsymbol{\theta}'}}$ where, for simplicity, we assume that each physical index corresponds to a single site of the 1d lattice $\mathcal{L}$ (that is, the number $L$ of parts in the multipartition coincides with the number $N$ of sites in lattice $\mathcal{L}$, $L=N$, as discussed in a previous example). Let the tensors of the two MPS representations (which are of the form of the tensor $\mathbb{R}[x]$ defined in Eq. ~\eqref{eq:tensor_r}) be denoted by $\mathbb{A}[x]$ and $\mathbb{B}[x]$, where $x=1,2, \cdots, N$ labels lattice sites. To compute the overlap $\braket{\Phi_M^{\vec{k}', \boldsymbol{\theta}' }}{\Phi_M^{ \vec{k},\boldsymbol{\theta}}}$, we build a one-dimensional tensor network by joining the MPS representation of $\ket{\Phi_M^{\vec{k}, \boldsymbol{\theta} }}$ and of $\bra{\Phi_M^{ \vec{k}',\boldsymbol{\theta}'}}$ through their physical indices as in the example of Fig. \ref{fig:planar_ttn}a.  A key step in the computation of this overlap by following the standard (left-to-right) contraction order is the product of three tensors, see Fig.\ref{fig:planar_ttn}:
\begin{equation} \label{eq:MPS_contract}
    \rho[x+1]^{\vec{\mu}'}_{\vec{\nu}'} = \sum_{\vec{\mu}, \vec{\nu}, i} \rho[x]^{\vec{\mu}}_{\vec{\nu}} ~\mathbb{A}[x]^{\vec{\mu},\vec{\mu}'}_{i} ~\mathbb{B}[x]^{*\vec{\nu}, \vec{\nu}'}_i
\end{equation}
where $\rho[x]^{\vec{\mu}}_{\vec{\nu}}$ represents the result of having previously contracted all tensors to the left of $\mathbb{A}[x]$ and $\mathbb{B}[x]$ and has the form
\begin{equation}
    \rho[x]^{\vec{\mu}}_{\vec{\nu}} = \delta_{|\vec{\mu}|,|\vec{\nu}|} ~ \rho[x]^{\vec{\mu}}_{\vec{\nu}}
\end{equation}
namely it vanishes unless the particle number is conserved, that is unless $|\vec{\mu}|=|\vec{\nu}|$, and where from Eqs. \eqref{eq:R0_qubit}-\eqref{eq:R1_qubit} we have
\begin{eqnarray}
&&\mathbb{A}[x]^{\vec{\mu}\vec{\mu}'}_0 = \delta_{\vec{\mu},\vec{\mu}'},\\
&&\mathbb{A}[x]^{\vec{\mu}\vec{\mu}'}_1 = \sum_{j=1}^M \delta_{\vec{\mu}, \vec{\mu}' \cup (j)} ~\Theta[(j),\vec{\mu}'] ~e^{ik_jx},\\
&&\mathbb{B}[x]^{*\vec{\nu}\vec{\nu}'}_0 = \delta_{\vec{\nu},\vec{\nu}'},\\
&&\mathbb{B}[x]^{*\vec{\nu}\vec{\nu}'}_1 = \sum_{j=1}^M \delta_{\vec{\nu}, \vec{\nu}' \cup(j)} ~\Theta'[(j),\vec{\nu}']~ e^{-i\tilde{k}_jx}.
\end{eqnarray}
Therefore $\rho$ is block-diagonal in particle number $m=|\vec{\mu}|$, with blocks of size ${M \choose m} \times {M \choose m}$ and thus contains 
\begin{equation}
    \sum_{m=0}^{M} {M \choose m}^2 
\end{equation}
non-zero coefficients, whereas each of $\mathbb{A}$ and $\mathbb{B}$ contains 
\begin{equation}
    \sum_{m=0}^{M} {M \choose m}\times (m +1) 
\end{equation}
non-zero coefficients, see Eq. \eqref{eq:non-zeroR_qubit}. The product $\rho$ times $\mathbb{A}$ (namely the sum over index $\vec{\mu}$ in Eq. \eqref{eq:MPS_contract}) involves a sum over particle number $m=|\vec{\mu}|$ of the product of a ${M \choose m }\times {M \choose m }$ block of $\rho$ with a ${M \choose m }\times (m+1)$ block of $\mathbb{A}$ and thus requires a number of operations that scales as
\begin{equation}
    \sum_{m=0}^M {M \choose m}^2\times (m+1).
\end{equation}
(And so does the product of the resulting tensor with $\mathbb{B}^{*}$, namely the sum over index $\vec{\nu}$ in Eq. \eqref{eq:MPS_contract}). This is less than $\sum_{m=0}^{M} {M\choose m}^3$ -- which is what would follow from particle number conservation alone. Computing the overlap between two Bethe wavefunctions on $N$ sites requires $O(N)$ contractions of the form \eqref{eq:MPS_contract} and thus scales as
\begin{equation} \label{eq:cost_MPS}
    O\left(N\sum_{m=0}^M {M \choose m}^2\times (m+1)\right),
\end{equation}
that is, the computational cost is proportional to the lattice size $N$. In contrast, the overlap between two generic $M-$particle wavefunctions scales as ${N \choose M} \approx N^M/M!$, and thus as a higher power of the system size for $M>1$ particles.
Similar expressions can be obtained for the expectation value of local observables. 

\subsubsection*{TTN}

Fig. \ref{fig:planar_ttn}c shows an example of tensor network describing the overlap $\braket{\Phi_M^{\vec{k}', \boldsymbol{\theta}' }}{\Phi_M^{ \vec{k},\boldsymbol{\theta}}}$ between two Bethe wavefunctions $\ket{\Phi_M^{\vec{k}, \boldsymbol{\theta} }}$ and $\ket{\Phi_M^{ \vec{k}',\boldsymbol{\theta}'}}$ expressed as regular binary TTNs, where tensors $\mathbb{A}$ are used to represent either tensors $\mathbb{T}$ or change-of-basis tensors $\mathbb{S}$ for the first wavefunction and tensors $\mathbb{B}$ do the same for the second wavefunction. A typical computation encountered in contracting this overlap tensor network is of the form (see Fig. \ref{fig:planar_ttn}d)
\begin{equation} \label{eq:TTN_contract}
    \rho^{\vec{\mu}}_{\vec{\nu}} = \sum_{\vec{\mu}_L,\vec{\mu}_R, \vec{\nu}_L, \vec{\nu}_R} \mathbb{A}^{ \vec{\mu}}_{\vec{\mu}_L\vec{\mu_R}} ~(\rho_L)^{\vec{\mu}_L}_{\vec{\nu}_L}
    (\rho_R)^{\vec{\mu}_R}_{\vec{\nu}_R}~ \mathbb{B}^{* \vec{\nu}}_{\vec{\nu}_L \vec{\nu_R}}
\end{equation}
where now $\mathbb{A}$ and $\mathbb{B}$ are tensors $\mathbb{T}$ in Eq. \eqref{eq:tensor_t}, which have a number of non-zero coefficients given by Eq. \eqref{eq:non-zeroT}. However, $\rho_{L}$ and $\rho_{R}$ are only particle-number conserving, meaning that they mix different choices $\vec{\mu}$ and $\vec{\nu}$ with the same particle number, that is such that $|\vec{\mu}| = |\vec{\nu}|$. To obtain an upper bound for the cost of performing the above sum, we will only assume that the tensors are particle-number conserving, which leads to a computational cost $ \sum_{m=0}^M 2^{3m} \approx 8^M$. Computing the overlap involves $L/2+L/4+\cdots+2+1 = L-1$ contractions of the form \eqref{eq:TTN_contract} (namely $L/2$ for the lowest layer of tensors, $L/4$ for the next layer, etc) and thus the total computational cost is upper bounded by
\begin{equation} \label{eq:ttn_cost}
   O(L8^M). 
\end{equation}
Again, similar expressions can be obtained for the expectation value of local observables.

\subsection{Homogenous tensor networks}\label{sec:tensor_network_translation_inv}

In the above MPS and TTN representations, tensor $\mathbb{T}$ is the same in all locations (up to possible restrictions in the range of its indices), whereas tensors $\mathbb{S}[i]$ depend on the part $A_i$ of lattice $\mathcal{L}$ in which they act. In some cases, however, we can re-organize the content of the wavefunction into tensors in such a way as to be able to use copies of the same tensor throughout an entire MPS or throughout an entire layer of a regular TTN. These \textit{homogeneous} tensor networks can be stored using less memory and their manipulation has lower computation cost.  We note that being able to represent a wavefunction using a homogeneous tensor network does not imply that the wavefunction itself is translation invariant, as the shape of the tensor network (e.g. the existence of a left-most tensor and a right-most tensor in the MPS representation) breaks translation invariance explicitly. [One can still obtain a translation invariant Bethe wavefunction by fine-tuning the Bethe data with respect to the size $N$ of the 1d lattice $\mathcal{L}$, but this is independent of whether its tensor network representation is chosen to be homogeneous]. As mentioned below in more detail, our homogeneous MPS representation can be seen to coincide (up to normalization factors) with the MPS representation previously presented in Ref.~\cite{Ruiz_2024}.

We will next see that we are able to produce homogeneous MPS and TTN representations of a Bethe wavefunction when the following two conditions are fulfilled: 

(1) The underlying 1d lattice $\mathcal{L}$ is regular, i.e. the position $x$ of $n^{th}$ site of $\mathcal{L}$ (where $n=1,2,\cdots, N$ labels the $N$ sites of $\mathcal{L}$) is equal to $\kappa n$, where the constant $\kappa$ is the lattice spacing, which for simplicity we assume to be $\kappa=1$ from now on. 

(2) The partition $\{A_i\}$ of the lattice is also regular, meaning that the number of sites $N_{A_i}$ is the same for all parts $A_i$ for $i = 1, \cdots, L$, with $N_{A_i} = N/L$. (For a regular TTN, we also need that all the effective sites in a layer of the tree be equivalent in terms of the number of original sites of the lattice that they represent). In particular, condition (2) is satisfied when the number of partitions $L$ is the same as the number of sites $N$, which we used above in our example of MPS representation.

To get started, we will only assume that condition (1) is satisfied and use it to build an alternative bipartite decomposition of a Bethe wavefunction for a left-right bipartition $\mathcal{L} = A\cup B$ of the 1d lattice $\mathcal{L}$ made of $N$ sites, $\mathcal{L} =\{1,2,\cdots, N\}$, where the $N_A$ and $N_B$ sites in parts $A$ and $B$ are $A = \{1,2,\cdots, N_A\}$ and $B = \{N_A+1, N_A+2, \cdots, N_A+N_B=N\}$. In the previously discussed bipartite decomposition \eqref{eq:BW_decomposition_new}, the local Bethe wavefunction $|\vec{b}\rangle$ on part $B$ implicitly depends on the embedding of the $N_B$ sites of part $B$ within the lattice through the position of its particles. Concretely, $|\vec{b}\rangle$ for a choice vector $\vec{b}$ with corresponding to $M_B = |\vec{b}|$ particles is given by (see Eq.~\eqref{eq:local_BW_B}),
\begin{align}
    |\vec{b}\rangle = \sum_{\vec{x} \in \mathcal{D}^{B}_{M_B}}\sum_{P \in S_{M_B}} \Theta[P] e^{i\vec{k}^{\vec{b}}\cdot \vec{x}}\ket{\vec{x}},
\end{align}
where each coordinate $x_j$ in $\vec{x} = (x_1,x_2,\cdots, x_{M_B})$ knows about the absolute position of the site in $B$ as embedded in $\mathcal{L}$, in that each position $x_j$ fulfills $N_A+1 \leq x_j \leq N$. However, we can express the local Bethe wavefunction $\ket{\vec{b}}$ above as a product of a global phase $\Omega_A(\vec{b})$ and a \textit{shifted} local Bethe wavefunction $\ket{\tilde{\vec{b}}}$ that only depends on the \textit{relative} position $y \equiv x - N_A$ of the sites within $B$. Indeed, let us define the \textit{shifted} region $\tilde{B}$ as $\tilde{B} = \{1,2,\cdots,N_B\}$ and introduce the local Bethe wavefunction  $|\widetilde{\vec{b}}\rangle$ defined on $\tilde{B}$ as
\begin{align} \label{eq:tildeb}
    |\widetilde{\vec{b}}\rangle = \sum_{\vec{y} \in \mathcal{D}^{\tilde{B}}_{M_B}}\sum_{P \in S_{M_B}}\Theta[P] e^{i\vec{k}^{\vec{b}}\cdot \vec{y}}\ket{\vec{y}}.
\end{align}
Since by definition $\vec{y}\in \mathcal{D}_{M_B}^{\tilde{B}}$ and $\vec{x}\in \mathcal{D}_{M_B}^{B}$ satisfy $\vec{y} = \vec{x} - N_A(1,1,\cdots,1)$, it follows that  $|\widetilde{\vec{b}}\rangle$ and $|\vec{b}\rangle$ are related by an overall phase factor, 
\begin{equation} \label{eq:tildeb2}
  |\vec{b}\rangle = \Omega_{A}(\vec{b})|\widetilde{\vec{b}}\rangle,~~~~~
    \Omega_{A}(\vec{b}) \equiv \left(\prod_{j = 1}^{M_B}e^{i k_{b_j}}\right)^{N_A}.   
\end{equation}
We thus arrive to an alternative bipartite decomposition of a Bethe wavefunction, 
\begin{equation}
    |\vec{1}\rangle = \sum_{\vec{a},\vec{b}} \Theta[\vec{a},\vec{b}]~ \Omega_{A}(\vec{b}) ~\delta_{\vec{a}\cup\vec{b}, \vec{1}} \ket{\vec{a},\widetilde{\vec{b}}},
\end{equation}
which is also a sum of products of local Bethe wavefunctions as in decomposition~\eqref{eq:BW_decomposition_new} and where $\ket{\vec{a}}$ is the same local Bethe wavefunction on region $A$ as in decomposition~\eqref{eq:BW_decomposition_new}, but where now $| \widetilde{\vec{b}}\rangle$ is a local Bethe wavefunctions on the `shifted' region $\tilde{B}$ that replaces $|\vec{b}\rangle$, and the amplitude $\Theta[\vec{a},\vec{b}] ~\delta_{\vec{a}\cup\vec{b}, \vec{1}}$ in \eqref{eq:BW_decomposition_new} has been accordingly replaced with $\Theta[\vec{a},\vec{b}]~ \Omega_{A}(\vec{b}) ~\delta_{\vec{a}\cup\vec{b}, \vec{1}}$. 

We can use this alternative bipartite decomposition as a basis for an alternative multipartite decomposition
\begin{eqnarray}\label{eq:multipartite_decomp_shifted}
    |\vec{1}\rangle = \!\!\sum_{\vec{a}_1,\cdots,\vec{a}_L} \Theta_L[\vec{a}_1, \cdots, \vec{a}_L] ~ \Omega(\vec{a}_1, \cdots, \vec{a}_L) ~~~~~~~\nn \\
    ~~~\times ~\delta_{\vec{a}_1 \cup \cdots \cup \vec{a}_L, \vec{1}} ~ \ket{\tilde{\vec{a}}_1,  \cdots, \tilde{\vec{a}}_L}.~~
\end{eqnarray}
where the new complex phase $\Omega(\vec{a}_1, \cdots, \vec{a}_L) = \Omega_1(\vec{a}_1)\Omega_2(\vec{a}_2) \cdots \Omega_L(\vec{a}_L)$ is the product of $L$ complex phases $\Omega_i(\vec{a}_i) \equiv \left(\prod_{j = 1}^{M_{A_i}}e^{i k_{(\vec{a}_i)_j}} \right)^{ N_{A_1} + \cdots + N_{A_{i-1}}}$. More importantly for our current purposes, we also obtain alternative explicit tensor network representations. In particular, the new decomposition gives rise to new tensors $\tilde{\mathbb{T}}$ and $\tilde{\mathbb{S}}$, with components
\begin{align} \label{eq:tildeT}    
&\widetilde{\mathbb{T}}^{\vec{c}}_{\vec{a},\vec{b}} = \Theta[\vec{a},\vec{b}] ~\Omega_A(\vec{b}) ~ \delta_{\vec{a}\cup\vec{b}, \vec{c}}, \\
&\widetilde{\mathbb{S}}^{\vec{a}_i}_{\vec{\sigma}_{A_i}} \!\!= \braket{\vec{\sigma}_{A_i}}{\tilde{\vec{a}}_i} = \!\!\!\!\sum_{R \in S_{M_{A_i}}} \!\!\!\! \Theta^{\vec{a}_i}[R] \!\! \sum_{\vec{y}\in \mathcal{D}_{M_{A_i}}^{\tilde{A}_i}}
\!\! e^{i (\vec{k}^{a})^R \cdot \vec{y}}\braket{\vec{\sigma}_{A_i}}{\vec{y}}, \label{eq:tildeS}
\end{align}
where now both $\tilde{\mathbb{T}}$ and $\tilde{\mathbb{S}}$ depend on the quasi-momenta $\vec{k}$ \textit{and} the two-particle scattering angles $\boldsymbol{\theta}$. The motivation to introduce this alternative formulation becomes clear if condition (2) is also fulfilled, namely if the parts $A_i$ in a multipartition of lattice $\mathcal{L}$ are all of the same size (meaning that $N_{A_i}$ is independent of $i$). Then  for an MPS and a regular TTN, both tensor $\tilde{\mathbb{T}}$ and tensor $\tilde{\mathbb{S}}$ above are independent of the region(s) on which they act. Indeed, on the one hand $\Omega_{A}(\vec{b})$ in Eq. \eqref{eq:tildeb2} only depends on region $A$ through the number of sites $N_{A}$, so by condition (2) $\Omega_{A_i}(\vec{b})$ is the same for all choices of part $A_i$, since $N_{A_i}$ is constant. Then tensor $\tilde{\mathbb{T}}$ in Eq. \eqref{eq:tildeT}, where $\vec{a}$ and $\vec{b}$ refer to two consecutive parts $A=A_{i}$ and $B={A_{i+1}}$, is also independent of $A_i$. On the other hand, the shifted local Bethe wavefunction $\ket{\tilde{\vec{a}}_i}$ for a region $A_i$ again only depends on the choice $\vec{a}_i$ and the number of sites $N_{A_i}$ in part $A_{i}$. Since by condition (2) $N_{A_i}$ does not depend on $i$, the overlap $\braket{\vec{\sigma}_{A_i}}{\tilde{\vec{a}}_i}$ depends both on the bitstring $\vec{\sigma}_{A_i}$ and choice $\vec{a}_i$, but not on the part $A_i$, so tensor $\tilde{\mathbb{S}}$ is the same for all parts $A_i$.

\subsubsection*{MPS}

Let us first apply the above considerations to an MPS. Repeating the construction of an MPS in Sect. \ref{sec:mps} but using the alternative bipartite decomposition, one can see that if conditions (1) and (2) are met, then we can fully characterize the wavefunction in terms of a single pair of tensors $(\tilde{\mathbb{T}}, \tilde{\mathbb{S}})$ or, equivalently, in terms of a single MPS tensor $\tilde{\mathbb{R}}$ defined as
\begin{align}\label{eq:tensor_tilder}
    \tilde{\mathbb{R}}_{\vec{\sigma}_{A_{i}}}^{\vec{\mu}_{L}, \vec{\mu}_{R}} \equiv \sum_{\vec{a}_i}\tilde{\mathbb{T}}^{\vec{\mu}_L}_{\vec{a}_i, \vec{\mu}_{R}}\tilde{\mathbb{S}}^{\vec{a}_i}_{\vec{\sigma}_{A_i}},
\end{align}
also independent of part $A_i$. For example, for an $M-$particle Bethe wavefunction and a partition made of $L=N$ parts with each part $A_i$ consisting of a single site of the 1d lattice $\mathcal{L}$, instead of the site-dependent tensor $\mathbb{R}[x]$ of Eq.~\eqref{eq:R0_qubit} we obtain the site-independent tensor $\tilde{\mathbb{R}}$ with components
\begin{eqnarray} \label{eq:R0_qubit_homogeneous}
&&\tilde{\mathbb{R}}^{\vec{\mu}\vec{\mu}'}_0 = \delta_{\vec{\mu},\vec{\mu}'} \prod_{j \in \vec{\mu}}e^{i k_j},\\
&&\tilde{\mathbb{R}}^{\vec{\mu}\vec{\mu}'}_1 = \sum_{j=1}^M \delta_{\vec{\mu}, \vec{\mu}' \cup (j)} ~\Omega_{A_i}(\vec{\mu}')\Theta[(j),\vec{\mu}'] ~e^{ik_j}. \label{eq:R1_qubit_homogeneous}
\end{eqnarray}
Specifically, for $M=0$ particles it reads 
\begin{equation}
     \tilde{\mathbb{R}}_{0} = \left(\begin{array}{cc}
        1  
     \end{array}\right),~~~ \tilde{\mathbb{R}}_{1} = \left(\begin{array}{cc}
        0 
     \end{array}\right);
\end{equation}
for $M=1$ particles we have
\begin{equation}
     \tilde{\mathbb{R}}_{0} = \left(\begin{array}{cc}
        1  &  0\\
        0  & e^{ik}
     \end{array}\right),~~~ \tilde{\mathbb{R}}_{1} = \left(\begin{array}{cc}
        0  &  0\\
        e^{ik} & 0
     \end{array}\right);
\end{equation}
for $M=2$ particles we have
\begin{eqnarray} \label{eq:Rtilde0}
&&\tilde{\mathbb{R}}_{0} = \left(\begin{array}{cccc}
        1  & 0 & 0 & 0\\
        0  & e^{ik_1} & 0 & 0\\
        0 & 0 & e^{ik_2} & 0 \\
        0 & 0 & 0 & e^{ik_2} e^{ik_1}
     \end{array}\right), \\
&&\tilde{\mathbb{R}}_{1} = \left(\begin{array}{cccc}
        0  & 0 & 0 & 0\\
        e^{ik_1} & 0 & 0 & 0\\
        e^{ik_2} & 0 & 0 & 0 \\
        0 & -e^{i\theta_{21}} e^{ik_2}e^{ik_1} & e^{ik_1}e^{ik_2} & 0
     \end{array}\right),\label{eq:Rtilde1}
\end{eqnarray}
and so on. We reiterate that in this MPS representation, the $N$ tensors are all copies of the same tensor $\tilde{\mathbb{R}}$, but the first (left-most) copy of $\tilde{\mathbb{R}}$ has its left bond index $\vec{\mu}$ set to $\vec{1}$ and the last (right-most) copy of $\tilde{\mathbb{R}}$ has its right bond index $\vec{\mu}'$ set to $\emptyset$. Importantly, the overlap $\braket{\Phi_M^{\vec{k}, \boldsymbol{\theta} }}{\Phi_M^{ \vec{k}', \boldsymbol{ \theta}'}}$ of two Bethe wavefunctions each represented by a homogeneous MPS can now be computed by studying the eigenvalue decomposition of the mixed \textit{transfer matrix} $\sum_{i} \tilde{\mathbb{A}}^{\vec{\mu}\vec{\mu}'}_i \cdot \tilde{\mathbb{B}}_i^{*\vec{\nu}\vec{\nu}'}$ instead of having to iterate $N$ times the product in Eq. \eqref{eq:MPS_contract}. Particle number conservation implies that the above transfer matrix can be numerically brought into a block diagonal form (where each block is an upper-triangular Schur matrix). This leads to the evaluation of the scalar product with a cost that is independent of the number $L$ of parts or, equivalently, the number $N$ of sites of lattice $\mathcal{L}$. This is a significant reduction compared to the linear cost in Eq. \eqref{eq:cost_MPS} for non-homogeneous MPS. Similar manipulations are possible for the computation of expectation value of local operators.

We note that matrices $\tilde{\mathbb{R}}_{0}$ and $\tilde{\mathbb{R}}_{1}$ in Eqs.\eqref{eq:Rtilde0}-\eqref{eq:Rtilde1} above correspond, up to normalization and notational variations, to matrices $\Lambda^0$ and $\Lambda^1$ right after Eq. 116 and in Eq. 119 of Ref.~\cite{Ruiz_2024}. Thus our formalism for a generic planar TTN representation recovered, when applied to a homogeneous MPS, the MPS representation previously proposed in Ref.~\cite{Ruiz_2024}.

\begin{figure}
    \centering
    \includegraphics[width = \columnwidth]{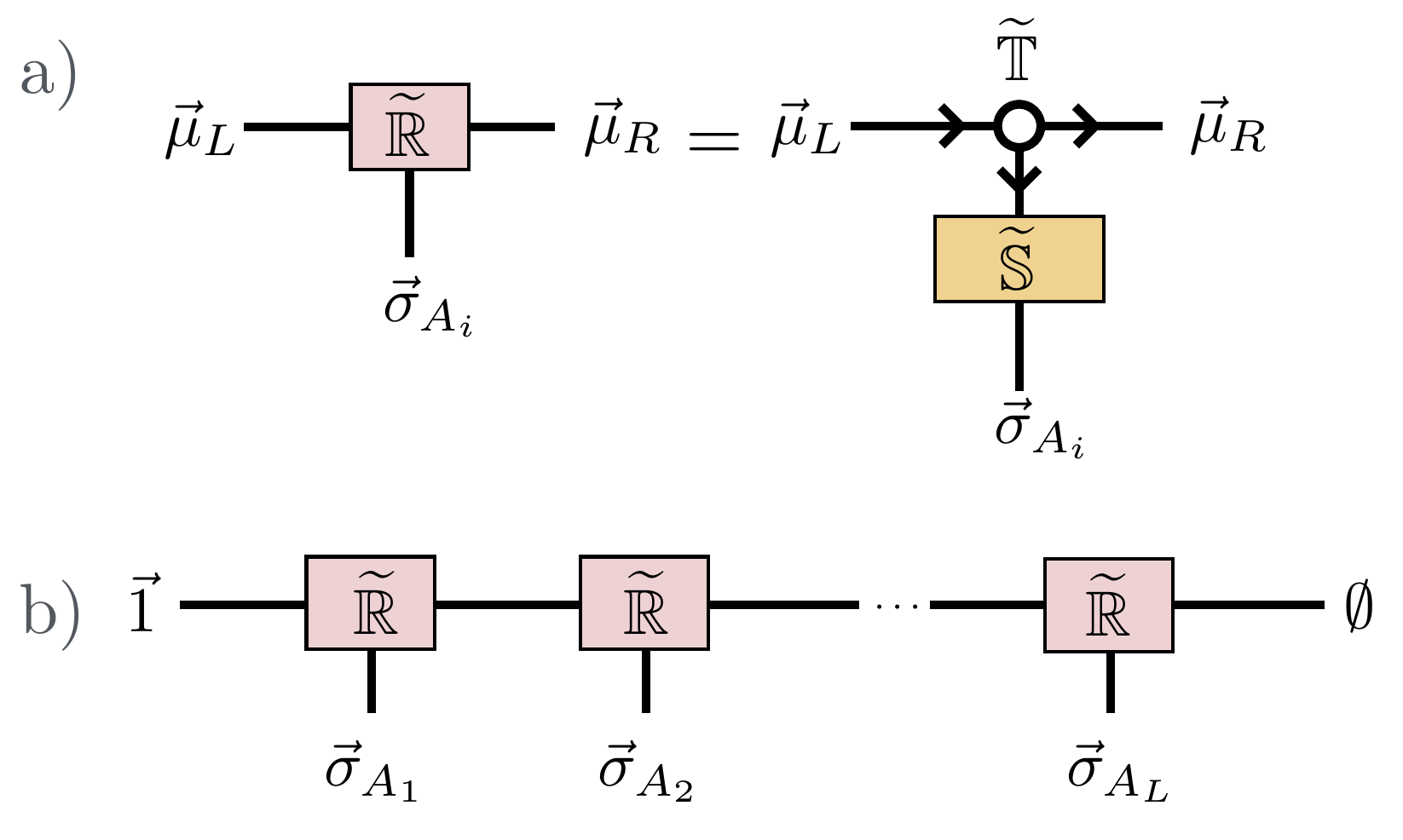}
    \caption{Homogenous MPS for Bethe wavefunctions}
    \label{fig:mps_ind}
\end{figure}

\subsubsection*{TTN}

\begin{figure}
    \centering
    \includegraphics[width = 0.9\columnwidth]{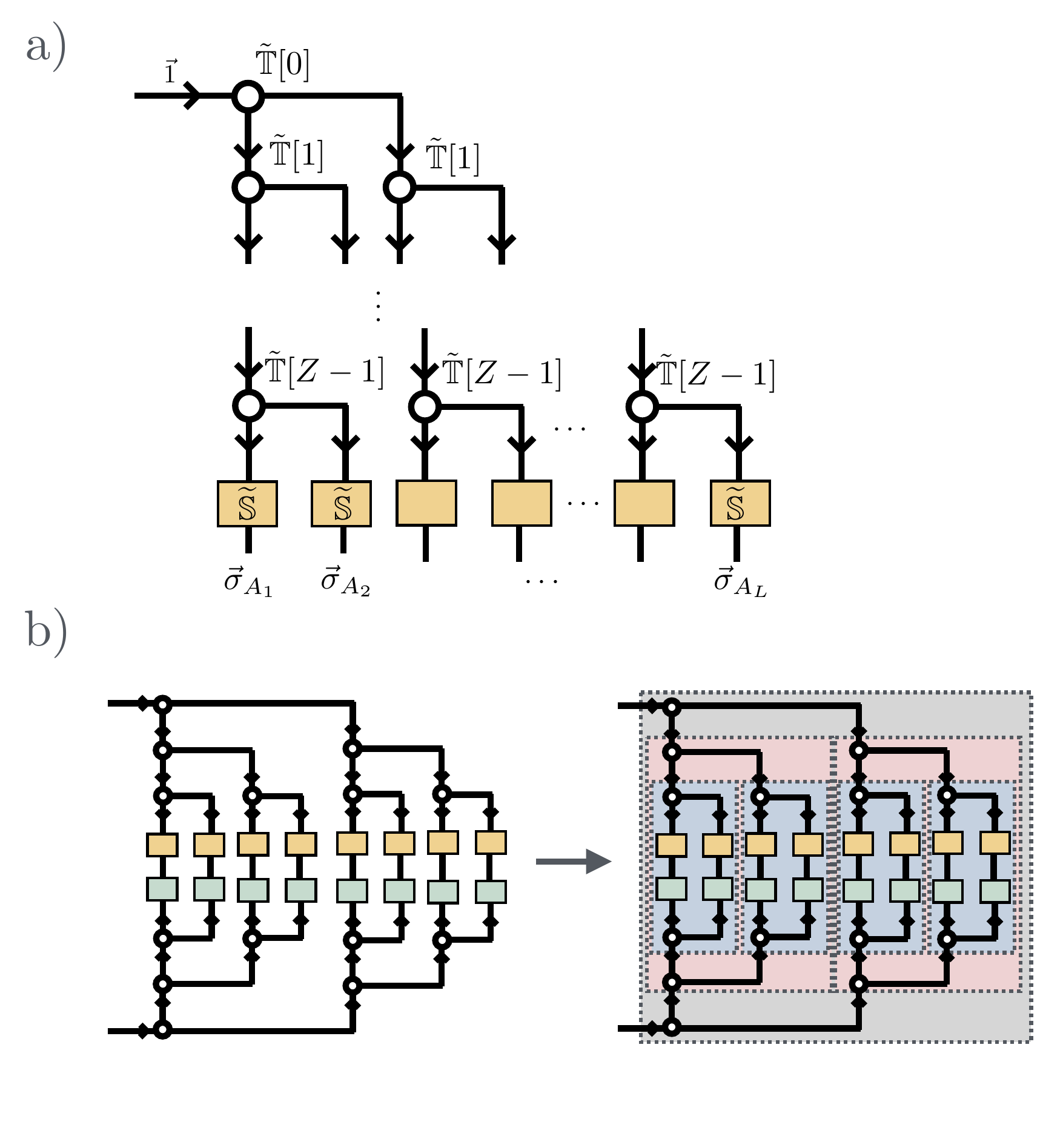}
    \caption{(a) Scale-dependent, homogeneous TTN for Bethe wavefunctions, made of $Z \sim \log(N)$ layers of tensors. Layer $z$ of this regular binary TTN is characterized by copies of the same tensor $\tilde{\mathbb{T}}[z]$ (for $z=0,1,\cdots,Z-1$) or change-of-basis tensor $\tilde{\mathbb{S}}$ (for $z=Z$ at the bottom).
    (b) The overlap of two homogeneous TTN representations can be computed hierarchically, starting at scale $z=Z$ and moving towards scale $z=Z-1, \cdots, 1,0$. The contraction of the sub-trees (colored rectangles) can be performed only once per layer due to the homogeneity of the TTN. This reduces the complexity of computing the overlaps to $O(\log N)$.}
    \label{fig:ttn_ind}
\end{figure}

Next we repeat the derivation of a regular binary TTN representation in Sect. \ref{sec:ttn} for a 1d lattice $\mathcal{L}$ divided into $L=2^Z$ parts, but now using the alternative bipartite decomposition. One can see that, if conditions (1) and (2) are met, then we end up with a TTN representation made of a set of scale dependent tensors $\tilde{\mathbb{T}}[z]$ 
\begin{align}
    \tilde{\mathbb{T}}[z]^{\vec{c}}_{\vec{a},\vec{b}} = \Theta(\vec{a},\vec{b}) \prod_{j = 1}^{|\vec{b}|}e^{i k_{b_j}N/2^z} \delta(\vec{c},\vec{a}\cup\vec{b}),
\end{align}
for $z=0,1,\cdots, L-1$ together with the change-of-basis tensor $\tilde{\mathbb{S}}$ in Eq. \eqref{eq:tildeS} at scale $z=Z$, all of which are independent of position, meaning that layer $z$ of the TTN consists of $2^z$ copies of the same tensor (either $\tilde{\mathbb{T}}[z]$ for $z=0,1,\cdots,Z-1$ or $\tilde{\mathbb{S}}$ for $z=Z$). That is, we obtain a TTN that is homogeneous within each layer, see Fig.~\ref{fig:ttn_ind}a. Note that tensor $\tilde{\mathbb{T}}[z]$ depends on the scale $z$ because the size (in terms of the number of sites in $\mathbb{L}$) of each part at scale $z$ is $N/2^z$, and thus not constant. As with the homogeneous MPS representation, we emphasize that this homogeneous TTN representation does not imply that the Bethe wavefunction it encodes is itself translation invariant.

\begin{figure*}
    \centering
\includegraphics[width = \textwidth]{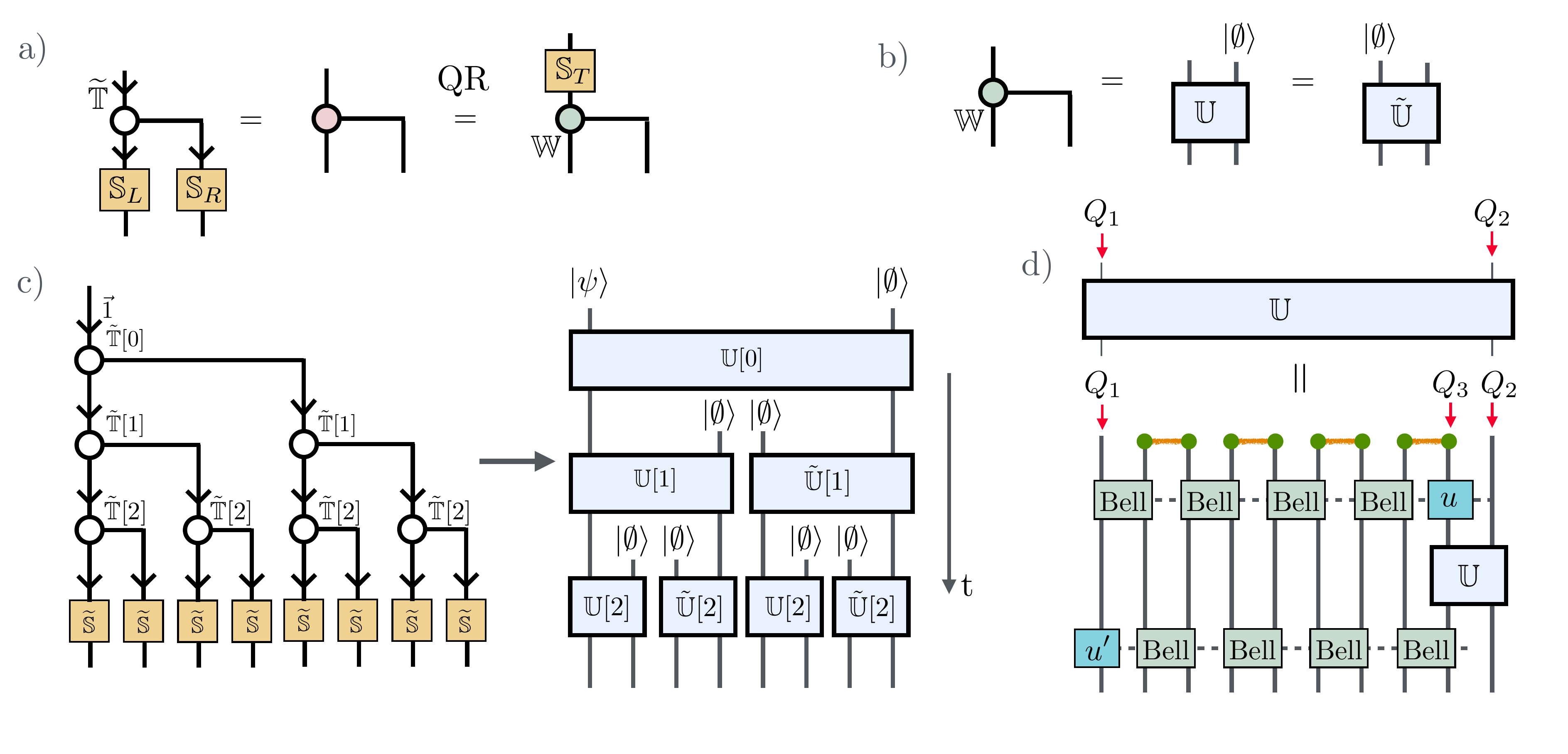}
\caption{
(a) Transformation used to bring a TTN into its canonical form using a QR decomposition. Using this transformation, we can bring a layer of the TTN into an isometric form. Notice that the resulting tensor $\mathbb{W}$ is isometric and matrix $\mathbb{S}_T$ is fed as either $\mathbb{S}_L$ or $\mathbb{S}_R$ for the next iteration of this transformation on the layer above. 
(b) The isometric tensor $\mathbb{W}$ can be regarded as a unitary tensor (either $\mathbb{U}$ or $\tilde{\mathbb{U}}$) with one incoming index set to the empty/vacuum state $\ket{\emptyset}$ of a qudit.
(c) After sequentially applying trasformations (a) and (b) above to each layer of the TTN starting from the bottom one and moving upwards, the original TTN is mapped into a quantum circuits with long-range gates.
(d) A long-range, two-qudit gate between qudits $Q1$ and $Q2$ can be implemented with a number of nearest-neighbor two-qudit gates that scales proportional to the distance between $Q1$ and $Q2$. This involves preparing pairs of nearest-neighbor qudit Bell pairs, applying qudit Bell measurements, and single-qudit unitaries that are conditioned on the measurement outcomes. The two circles connected by a wide line indicate a Bell pair, the `Bell' boxes refer to measurements in the 2 qudit Bell basis, and the dashed horizontal line indicates that the single-qudit unitary $u$ or $u^{\prime}$ must be conditioned on the measurement outcomes, which requires classical communication between distant parts of the circuit.}
    \label{fig:bethe_ckt}
\end{figure*}

The homogeneous TTN representation only requires storing $Z=\log_2 L$ tensors in memory (instead of $O(L)$ tensors in the non-homogeneous case). In addition, the cost of computing the overlap $\braket{\Phi_M^{\vec{k}, \boldsymbol{\theta} }}{\Phi_M^{ \vec{k}', \boldsymbol{\theta}'}}$ of two Bethe wavefunctions each represented by a homogeneous TTN  is significantly reduced to
\begin{equation}
    O\left(\log_2(L) \sum_{m=0}^M {M \choose m}^3 \right),
\end{equation}
see Fig.~\ref{fig:ttn_ind}b, in contrast to the space dependent TTN introduced earlier, with cost proportional to the number $L$ of parts, Eq. \eqref{eq:ttn_cost}. This reduction is possible because the overlap of the wavefunctions can be done hierarchically, starting from the lowest layer. Since tensors are homogenous at each scale, we need to perform only one computation to contract the sub-tree per layer (see Fig.~\ref{fig:ttn_ind}b). One can obtain similar reductions in the computation of expectation value of local observables.

\section{Bethe quantum circuits}\label{sec:bethe_circuit}

In the previous section we have discussed how to exactly represent an $M-$particle Bethe wavefunction on a 1d lattice $\mathcal{L}$ made of $N$ sites using a planar tree tensor network, including an MPS and a regular binary TTN as particular cases. For certain tensor networks, one can transform the tensor network representation into a quantum circuit that maps an initial unentangled state into the Bethe wavefunction, thus paving the way for the experimentally preparation of the Bethe wavefunction on a quantum computer~\cite{nepomechie2021bethe, Van_Dyke_2021, Van_Dyke_2022, Sopena_2022, raveh2024deterministic, Ruiz_2024}. 
Prior works have highlighted the usefulness of the MPS representation of such wavefunctions~\cite{Sopena_2022}, which naturally leads to a sequential unitary quantum circuit of depth $O(N)$, that is, depth proportional to the lattice size $N$~\cite{Schon_2005}. Other deterministic approaches to prepare a generic $M-$particle wavefunction on $N$ sites require depth $O({N \choose M})$~\cite{raveh2024deterministic}, which scales as $O(N^M)$, that is polynomially in the system size $N$.

In this section, we show that our TTN representation leads to a different quantum circuit with depth $O(\log(N))$, just logarithmic with the system size $N$, which may therefore be advantageous over previous circuits based on an MPS representation. For concreteness, we consider the scale-dependent, homogeneous TTN representation of a Bethe wavefunction described in Sec.~\ref{sec:tensor_network_translation_inv}, for a partition of lattice $\mathcal{L}$ into $L = N/M$ parts each made of $M$ contiguous sites. As before, we assume the number of parts to be $L = 2^Z$ (equivalently, the number of sites $N = M2^Z$). These conditions can be relaxed easily. 

From now on we refer ti a set of $M$ sites (or $M$ qubits) as a qudit of dimension $2^M$.
Our first goal is to transform the TTN representation in Fig. \ref{fig:ttn_ind}a, which is written in terms of tensors $\tilde{\mathbb{T}}[z]$ and $\tilde{\mathbb{S}}$, into its canonical form~\cite{Shi_2006}, which is written in terms of isometric tensors $\mathbb{W}[z]$, see Fig. \ref{fig:bethe_ckt}a. Then we will transform each isometric tensor $\mathbb{W}[z]$ into unitary gates $\mathbb{U}[z]$ and $\tilde{\mathbb{U}}[z]$), see Fig. \ref{fig:bethe_ckt}b. Fig. \ref{fig:bethe_ckt}c shows the complete transformation from the initial TTN to the final quantum circuit. 

\subsection{Canonical form of the TTN representation}

Recall that the QR decomposition of an $m\times n$ matrix $F$ (with $m\geq n$) is the product $F=WS$ of an isometric $m\times n$ matrix $W$ (meaning that $W$ satisfies $W^{\dagger}W=\mathbb{I}$) and an $n\times n$ upper-triangular matrix $S$. The computational cost is $O(mn^2)$. 

Consider now a tensor $\mathbb{F}:\mathbf{C}_{2^M} \to \mathbf{C}_{2^M} \otimes \mathbf{C}_{2^M}$ with components $\mathbb{F}^{\alpha}_{\beta\gamma}$, which maps states of one qudit into states of two qudits. We can apply the QR decomposition to tensor $\mathbb{F}$ by regarding it as a matrix $F$ with components $F_{(\beta\times\gamma), \alpha} = \mathbb{F}^{\alpha}_{\beta\gamma}$ and QR-decomposing it as $F_{(\beta\times\gamma), \alpha} = \sum_{\alpha'} W_{(\beta\times\gamma), \alpha'}S_{\alpha'\alpha}$. The final result is the decomposition
\begin{equation}
    \mathbb{F}^{\alpha}_{\beta\gamma} = \sum_{\alpha'}  \mathbb{W}^{\alpha'}_{\beta\gamma}~\mathbb{S}^{\alpha}_{\alpha'}, 
\end{equation}
where $\mathbb{W}:\mathbf{C}_{2^M} \to \mathbf{C}_{2^M} \otimes \mathbf{C}_{2^M}$ is an isometric tensor, meaning that $\sum_{\beta\gamma} \mathbb{W}^{\alpha}_{ \beta\gamma}( \mathbb{W}^{*})^{ \alpha'}_{\beta\gamma} = \delta_{\alpha,\alpha'}$, and $\mathbb{S}:\mathbf{C}_{2^M} \to \mathbf{C}_{2^M}$ is a one-qudit linear map. The cost of this QR decomposition scales as $O(2^{4M})$. 

To bring the TTN in Fig. \ref{fig:ttn_ind}a into its canonical form, we start at the bottom layer of the tree and apply the transformation outlined in Fig. \ref{fig:bethe_ckt}a, namely we first multiply tensor $\tilde{\mathbb{T}}[Z\!-\!1]$, with components $\tilde{\mathbb{T}}[Z\!-\!1]^{ \vec{\mu}}_{ \vec{a}_L \vec{a}_R}$, with two copies of the change-of-basis tensor $\tilde{\mathbb{S}}$ (denoted $\mathbb{S}_L$ and $\mathbb{S}_R$ in the figure), with components $\tilde{\mathbb{S}}^{\vec{a}_L}_{\vec{\sigma}_L}$ and $\tilde{\mathbb{S}}^{\vec{a}_R}_{\vec{\sigma}_R}$, by contracting the indices $\vec{a}_L$ and $\vec{a}_R$, which results in an intermediate tensor $\mathbb{F}[Z\!-\!1]$. We then QR decompose $\mathbb{F}[Z\!-\!1]$ into tensors $\mathbb{W}[Z\!-\!1]$ and $\mathbb{S}[Z\!-\!1]$, namely
\begin{eqnarray}
&&    \sum_{\vec{a}_L,\vec{a}_R} \tilde{\mathbb{S}}^{\vec{a}_L}_{\vec{\sigma}_L}~ ~\tilde{\mathbb{S}}^{\vec{a}_R}_{\vec{\sigma}_R}~\tilde{\mathbb{T}}[Z\!-\!1]^{\vec{\mu}}_{\vec{a}_L\vec{a}_R} = \mathbb{F}[Z\!-\!1]^{\vec{\mu}}_{\vec{\sigma}_L\vec{\sigma}_R}   ~~~ \\
&&~~~~~~~~~~    = \sum_{\alpha'} \mathbb{W}[Z\!-\!1]^{\alpha'}_{\beta\gamma}~\mathbb{S}[Z\!-\!1]^{\alpha}_{\alpha'}, 
\end{eqnarray}
or, using more compact notation,
\begin{equation}
    \tilde{\mathbb{S}}~ \tilde{\mathbb{S}}~\tilde{\mathbb{T}}[Z\!-\!1] = \mathbb{F}[Z\!-\!1]    = \mathbb{W}[Z\!-\!1]~\mathbb{S}[Z\!-\!1].
\end{equation}
Notice that since the TTN is homogeneous, we only need to apply this transformation once for the entire lowest layer of the TTN. For the next layer of the TTN, we again apply the transformation outlined in Fig. \ref{fig:bethe_ckt}a, namely we multiply two copies of $\mathbb{S}[Z-1]$ with a copy of $\tilde{\mathbb{T}}[Z-2]$, creating an intermediate tensor $\mathbb{F}[Z-2]$, which we then QR decompose into tensors $\mathbb{W}[Z-2]$ and $\mathbb{S}[Z-2]$, which we write in compact form as
\begin{eqnarray}
\mathbb{S}[Z\!-\!1]~ \mathbb{S}[Z\!-\!1]~\tilde{\mathbb{T}}[Z\!-\!2] &=& \mathbb{F}[Z\!-\!2]  \nn  \\
&=& \mathbb{W}[Z\!-\!2]~\mathbb{S}[Z\!-\!2].~~~
\end{eqnarray}
We represent this transformation in Fig. \ref{fig:bethe_ckt}a. After applying such transformation to each layer of the TTN from its bottom to its top, we are left with a new scale-dependent, homogeneous TTN representation in terms of isometric tensors $\mathbb{W}[z]$, and a top vector $\ket{\psi}$ that results from setting the incoming index of the top matrix $\mathbb{S}[z=0]$ to $\vec{1}$.

\subsection{Quantum circuit from the TTN representation}

Recall that an isometry $\mathbb{W}:\mathbf{C}_{2^M} \to \mathbf{C}_{2^M} \otimes \mathbf{C}_{2^M}$ from one qudit to two qudits can be augmented into a two-qudit unitary tensor $\mathbb{U}:\mathbf{C}_{2^M} \otimes \mathbf{C}_{2^M} \to \mathbf{C}_{2^M} \otimes \mathbf{C}_{2^M}$, with components $\mathbb{U}^{\alpha \delta}_{\beta \gamma}$ such that $\mathbb{U}^{\alpha \delta = 0}_{\beta \gamma} = \mathbb{W}^{\alpha}_{\beta \gamma}$. We can think of $\mathbb{W}$ as the result of multiplying the vacuum state $\ket{\emptyset} \equiv \ket{0}^{\otimes M} \in \mathbf{C}_{2^M}$ of a qudit (that is, of $M$ qubits) with the new incoming index of $\mathbb{U}$, namely $\mathbb{W}= \mathbb{U}(I_{2^M}\otimes \ket{\emptyset})$. Alternatively, we can build a unitary $\tilde{\mathbb{U}}:\mathbf{C}_{2^M} \otimes \mathbf{C}_{2^M} \to \mathbf{C}_{2^M} \otimes \mathbf{C}_{2^M}$ with components $\tilde{\mathbb{U}}^{\delta\alpha}_{\beta\gamma}$ such that $\tilde{\mathbb{U}}^{\delta = 0,\alpha}_{\beta \gamma} = \mathbb{W}^{ \alpha}_{\beta\gamma}$, with $\mathbb{W}= \tilde{\mathbb{U}}(\ket{\emptyset} \otimes I_{2^M}) $.

Accordingly, we replace each isometric tensor $\mathbb{W}[z]$ by a unitary tensor $\mathbb{U}[z]$ with state $\ket{\emptyset}$ acting on its incoming second index, or by a unitary tensor $\tilde{\mathbb{U}}[z]$ with state $\ket{\emptyset}$ acting on its first incoming index, see Fig. \ref{fig:bethe_ckt}b. The net result is the quantum circuit depicted in Fig. \ref{fig:bethe_ckt}c.

This quantum circuit, made of unitary gates acting on pairs of qudits, has depth $Z= \log_2 L = \log_2 (N/M)$, which is exponentially shorter than the depth of circuits based on an MPS. However, it involves long-range gates, which may be perceived as off-setting any gains due to the circuit's logarithmic depth. Fortunately, it is well-known how to use mid-circuit measurements and short-range entangled ancillary systems to perform these long-range gates in constant depth~\cite{Lu_2022}. This strategy is sketched in Fig.~\ref{fig:bethe_ckt}d and reviewed next.

Consider a two-qudit gate $\mathbb{U}{[z]}$ [a similar derivation applies to $\tilde{ \mathbb{U} }{ [z]}$] acting on qudits $Q_1$ (with input state $\ket{\psi}$) and $Q_2$ (with input state $\ket{\emptyset}$), which are $L/2^{z-1}-1$ qudits apart. To perform this gate in constant depth, we introduce a chain of $L/2^{z-1}$ qudits initialized in alternating nearest neighboring qudit Bell pairs (indicated as a bond in Fig.~\ref{fig:bethe_ckt}c). The quantum state on qudit $Q_1$ can be teleported to the rightmost qudit ($Q_3$) of the chain by alternating two-qudit measurements in the Bell basis, followed by a single qubit unitary $u$ conditioned on the measurement outcomes (indicated by dashed horizontal line in Fig.~\ref{fig:bethe_ckt}c). Next one performs the nearest neighbor two-qudit gate $\mathbb{U}[z]$ on qudits $Q3$ and $Q2$. The state of qudit $Q3$ then has to be teleported back into the original qudit $Q_1$ by using the same chain of qudits, which is now in a shifted alternating Bell pair state after the previous round of measurements. To teleport the state we again perform alternating two-qudit Bell measurements and a conditional single qudit unitary $u'$. 

Above we have provided a local circuit for $2^M$ dimensional qudits. For a qubit-based local circuit, we need to decompose each gate above into a subcircuit of two-qubit gates; however the depth of each gate decomposition does not scale with $N$. In particular, each two-qudit Bell measurement can be compiled as $M$ two-qubit Bell measurements along with the requisite swap gates. This protocol performs the long-range gate with a constant overhead, at the cost of non-local classical communication and adaptive quantum gates. [For simplicity, we applied the long-range gate protocol to $\mathbb{U}[z]$ in its standard form, which works for any input states. Notice that we could have saved the initial teleportation round by exploiting that one of the inputs of the non-local gate $\mathbb{U}[z]$ is the state $\ket{\emptyset}$, although this does not alter the scaling of the circuit's depth with the lattice size $N$.]

The TTN circuit is particularly useful for fixed $M$, when $N$ is large. Each scale $z$ requires $O(N)$ two-qubit gates, and therefore it takes $O(N \log(N))$ two-qubit gates to prepare such states, arranged in a circuit of depth $O(\log(N))$. This provides an exponential advantage in depth for constant $M$, compared to the previous strategies for exact preparation of such states using a circuit of depth $O(N)$. 

Note, the advantage of our circuit over MPS-based circuits is comparable to that in recent work~\cite{Malz_2024} showing that generic matrix product states can be approximately prepared using adaptive (measurement-controlled unitaries) protocols with very short depth circuits. That proposal essentially amounts to transforming the MPS representation into an approximate TTN representation.


\section{Generalized Bethe wavefunctions}
\label{sec:GBW}

In this section we introduce a generalization of the Bethe wavefunction, dubbed \textit{generalized Bethe wavefunction}, which we build such that it also accepts the multipartite decompositions, and tensor network and quantum circuit representations for Bethe wavefunctions that we have derived in this work. This generalization is possible by identifying the specific properties of a Bethe wavefunction that were essential in our preceding derivations. Our generalized Bethe wavefunction, which depends on a number of parameters that scales in $N$ as $O(MN)$, includes as a particular case the inhomogeneous Bethe wavefunctions (which depend on $M + N$ parameters) recently and independently proposed in Ref.~\cite{ruiz2024betheansatzquantumcircuits}, and can be seen to actually equate that proposal if we omit the integrability constraints imposed in Ref.~\cite{ruiz2024betheansatzquantumcircuits}.

Let us start by briefly recalling how an $M$-particle Bethe wavefunction is constructed, see Sec. \ref{sec:BW} for more details. It is built with two ingredients: 

\textit{(i)} a set of $M$ single-particle wavefunctions, namely $M$ single-particle plane waves with amplitudes $\{e^{ik_jx}\}$ for $1 \leq j \leq M$, as specified by $M$ quasi-momenta $\vec{k}=(k_1,k_2, \cdots, k_M)$, and 

\textit{(ii)} a scattering amplitude $\Theta[P]$ (where $P=(P_1, P_2, \cdots, P_M)$ is a permutation of the $M$ symbols representing the above $M$ single-particle plane waves) that factorizes as the product of two-symbol scattering amplitudes $-e^{i\theta_{j_2j_1}}$. The latter are specified in terms of $M(M-1)/2$ scattering angles $\boldsymbol{\theta}=\{\theta_{j_2j_1}\}$, with $\theta_{j_2j_1} \in  [0,2\pi)$  and $1\leq j_1 < j_2 \leq M$.

A possible recipe to build a Bethe wavefunction with these ingredients starts with the introduction, given the vector $\vec{k}=(k_1,k_2, \cdots, k_M)$ of quasi-momenta, of a \textit{spatially-ordered} $M-$particle plane-wave state, namely   
\begin{equation}
    \ket{I} \equiv \sum_{\vec{x}\in \mathcal{D}^{\mathcal{L}}_{M}} e^{ik_1x_1}~e^{ik_2x_2} \cdots e^{ik_Mx_M} \ket{x_1x_2\cdots x_M},
\end{equation}
which represents a system of $M$ distinct particles, labelled by $j=1,2,\cdots, M$. , Particle $j$, in position $x_j$, is constrained to be spatially located between particle $j-1$ (in position $x_{j-1}$) to its left and particle $j+1$ (in position $x_{j+1}$) to its right, and is in the $j$th plane-wave state (with amplitudes $e^{ik_jx_j}$). One can thus think of wavefunction $\ket{I}$ as corresponding to $M$ particles, properly ordered from left to right in lattice $\mathcal{L}$, that propagate as free particles except for the constraint that they cannot hop over each other and alter their relative order in $\mathcal{L}$ (e.g. particle $j=1$ is always to the left of the rest of particles). More generally, we introduce $M!$ different such spatially-ordered $M-$particle wavefunctions, each characterized by a permutation $P \in \mathcal{S}_M$,
\begin{equation}
    \ket{P} \equiv \sum_{\vec{x}\in \mathcal{D}^{\mathcal{L}}_{M}} e^{ik_{P_1}x_1}~e^{ik_{P_2}x_2} \cdots e^{ik_{P_M}x_M} \ket{x_1x_2\cdots x_M},
\end{equation}
which corresponds to now having changed the ordering of the $M$ distinct particles from $(1,2,\cdots,M)$ to $(P_1,P_2,\cdots, P_M)$ (e.g. particle $j=P_1$ is always to the left of the rest of particles).

Then the $M-$particle Bethe wavefunction is finally obtained as a linear combination of such spatially-ordered $M-$particle wavefunctions,
\begin{equation}
    \ket{\Phi_M^{\vec{k},\boldsymbol{\theta}}} = \sum_{P\in \mathcal{S}_M} \Theta[P]~\ket{P}.
\end{equation}
In this linear combination, the $M$ particles can now be thought of as appearing in any order in the lattice (e.g. the left-most particle can correspond to any value of $j=1,2,\cdots, M$). In addition, there is a scattering amplitude $\Theta[P]$ that only depends on the order of the $M$ particles and not on their specific positions. We could think that this phase occurs as the result of having two particles in nearest neighbor position exchange their locations. 

Our generalized Bethe wavefunctions consider also two ingredients, which generalize the ones above: 

\textit{(i)} a set of $M$ \textit{completely general} single-particle wavefunctions $\{\ket{\varphi_j}\}$ (as opposed to plane-wave states), where 
\begin{equation}
    \ket{\varphi_j} = \sum_{x \in \mathcal{L}} \varphi_{jx}\ket{x},~~~~~j=1,2,\cdots,M,
\end{equation}
as specified by an $M\times N$ matrix $\boldsymbol{\varphi}$ with complex coefficients $\varphi_{jx}$, and 

\textit{(ii)} a scattering amplitude $\Theta[P]$ as in Eq. \eqref{eq:BW_amp}, which factorizes as the product of two-symbol scattering amplitudes $-e^{i\theta_{j_2j_1}}$ depending on $M(M-1)/2$ generalized scattering angles $\boldsymbol{\theta} = {\theta_{j_2j_1}}$, with $1 \leq j_1 < j_2 \leq M$, where now the angle $\theta_{j_2j_1}$ can be complex, in such a way that the amplitude $-e^{i\theta_{j_2j_1}} \in \mathbb{C}$ is an arbitrary complex number (as opposed to being restricted to being a complex phase). 

We then build a generalized Bethe wavefunction from these ingredients in close analogy as how we built a regular Bethe wavefunction. We again introduce a \textit{spatially-ordered} $M-$particle wavefunction
\begin{equation}
    \ket{I} \equiv \sum_{\vec{x} \in \mathcal{D}_M^{\mathcal{L}}} \varphi_{1x_1} ~\varphi_{2x_2} \cdots \varphi_{Mx_M} \ket{x_1x_2\cdots x_M}
\end{equation}
which represent $M$ particles $j=1,2,\cdots,M$ spatially ordered from left to right in lattice $\mathcal{L}$, where particle $j$ is in the single-particle wavefunction $\ket{\varphi_j}$, in the sense that the amplitude for the position basis vector $\ket{\vec{x}}$ has a multiplicative factor $\varphi_{jx_j}$ when that particle is in position $x_j$, but particle $j$ still needs to be to the right of particle $j-1$ and to the left of particle $j+1$. We can again defined $M!$ spatially-ordered $M-$particle wavefunctions
\begin{equation}
    \ket{P} \equiv \sum_{\vec{x} \in \mathcal{D}_M^{\mathcal{L}}} \varphi_{P_1x_1} ~\varphi_{P_2x_2} \cdots \varphi_{P_Mx_M} \ket{x_1x_2\cdots x_M}
\end{equation}
where the order of the $M$ particles is determined by the permutation $P$, and finally define our generalized Bethe wavefunction as the linear combination
\begin{equation}\label{eq:sumP}
    \ket{\Phi_M^{\boldsymbol{\varphi},\boldsymbol{\theta}}} = \sum_{P\in \mathcal{S}_M} \Theta[P]~\ket{P},
\end{equation}
where particles are no longer restricted to appear in spatial order and $\Theta[P]$ can be interpreted as contributing a generalized scattering amplitude when the order of two neighboring particles is exchanged.

Our definition of this generalized class of wavefunctions is simply a rewrite of Eq. \eqref{eq:sumP}.

\begin{definition}{\textbf{Generalized Bethe wavefunctions:}}\label{def:GBW} Given an $M\times N$ matrix $\boldsymbol{\varphi}\equiv \varphi_{jx}$ representing $M$ single-particle wavefunctions on lattice $\mathcal{L}$ and $M(M-1)/2$ (complex) scattering angles $\boldsymbol{\theta} \equiv \{\theta_{j_2j_1}\}$ for $1 \leq j_1 < j_2 \leq M$, an $M-$particle generalized Bethe wavefunction on lattice $\mathcal{L}$ is defined as,
\begin{align}\label{eq:GBW_def}
    \ket{\Phi_{M}^{\boldsymbol{\varphi}, \boldsymbol{\theta}}} \equiv  \sum_{\vec{x}\in \mathcal{D}_{M}^{\mathcal{L}}} \sum_{P \in S_M}  \Theta[P] \left(\prod_{j = 1}^{M}\varphi_{P_{j}x_j} \right) \ket{\vec{x}},
\end{align}
where the complex coefficient $\Theta[P]$ are defined as in Eq.~\eqref{eq:BW_amp} for the Bethe wavefunctions (see Fig.~\ref{fig:GBW}).
\end{definition}

Note, the Bethe wavefunctions are a subclass of the above generalized Bethe wavefunctions, where the single particle states are plane-wave states and the scattering angles are real. These generalized Bethe wavefunctions can also be obtained from the inhomogeneous Bethe wavefunctions recently proposed Eq.~3.25 of Ref.~\cite{ruiz2024betheansatzquantumcircuits}, by omitting the parameterization in Eqs.~3.13 and 3.26 in that reference (which was used to ensure that the Yang-Baxter equation of integrability holds).

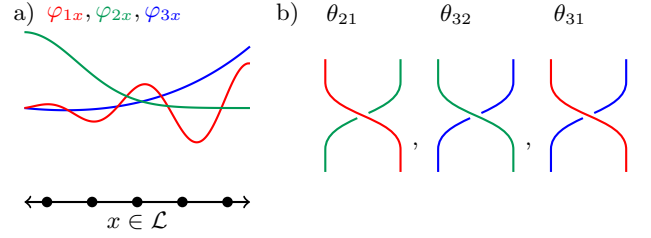
\begin{figure}
\begin{tikzpicture}
    \node at (0,1.25)[] {a)};
    
    \node at (1.2,1.25)[] {$\textcolor{red}{\varphi_{1x}},\textcolor{ForestGreen}{\varphi_{2x}},\textcolor{blue}{\varphi_{3x}}$};
    
    \draw[blue, thick, smooth] plot[domain=0:3, samples=100] 
        (\x, {-0.5*\x-1 + exp(0.4*\x)});
    
    \draw[red, thick, smooth] plot[domain=0:3, samples=100] 
        (\x, {0.2*sin(4.5*\x r + 45)*(\x)});
        
    \draw[ForestGreen, thick, smooth] plot[domain=0:3, samples=100] 
        (\x, {0.01*exp(-2*\x)+exp(-\x*\x)});
        
    \coordinate (A) at (0,-1.25);
    \coordinate (B) at (3,-1.25);
    
    \draw[<->, thick] (A) -- (B);
    \foreach \i in {0.1,0.3,0.5,0.7,0.9} {
        \fill ($(A)!\i!(B)$) circle (2pt);
    }
    
    \node at (1.5,-1.5)[] {$x \in \mathcal{L}$};
    
    \node at (3.5,1.25)[] {b)};
    \node at (5.2,1.25)[] {$~~~~~~~~~~\theta_{21}~~~~~~~~~~\theta_{32}~~~~~~~~~~\theta_{31}$};
    
    \pic at (4,0.65) [draw,braid/strand 1/.style={thick, red}, braid/strand 2/.style={thick, ForestGreen}] (coords) {braid={s_1}};
    
    \node at (5.2,-0.5)[] {,};
    
    \pic at (5.5,0.65) [draw,braid/strand 1/.style={thick, ForestGreen}, braid/strand 2/.style={thick, blue}] (coords) {braid={s_1}};
    
    \node at (6.7,-0.5)[] {,};
    
    \pic at (7,0.65) [draw,braid/strand 1/.style={thick, red}, braid/strand 2/.style={thick, blue}] (coords) {braid={s_1}};
    
    

\end{tikzpicture}
\caption{An $M-$particle generalized Bethe wavefunction on a 1d lattice $\mathcal{L}$ is described by $M$ single-particle wavefunctions $\{\ket{\varphi_{j}}\}$ (not necessarily plane waves) and $M(M-1)/2$  scattering angles $\theta_{j_2j_1}$ (possibly complex). In this example, $M=3$. This is to be compared with Fig. \ref{fig:BW} for a standard Bethe wavefunction.} \label{fig:GBW}
\end{figure}

As it can be readily verified by inspecting the corresponding proofs or adjacent derivations, most of the results for multipartite decompositions and tensor network and quantum circuit representations presented in this paper can be extended to a generalize Bethe wavefunction, including Theorems \ref{thm:LeftRight} and \ref{thm:LeftRightMultipartite} for the left-right bipartite and multipartite decomposition as a sum of $2^M$ and $L^M$ products of local Bethe wavefunctions, Theorem \ref{thm:ContMultipartite} for the contiguous multipartite decomposition, and Theorems \ref{thm:mps}, \ref{thm:ttn}, and \ref{thm:gen_ttn} for exact MPS, regular binary TTN and a generic (planar) TTN representations with bond dimension $2^M$. An important exception is that we can no longer produce a homogeneous MPS or TTN representation as we did in Sect. \ref{sec:tensor_network_translation_inv}, since general single-particle wavefunctions lead to change-of-basis tensors $\mathbb{S}[i]$ that necessarily depend on their position in the network].

We briefly exemplify these generalizations. Let us first consider the left-right bipartite decomposition in Theorem \ref{thm:LeftRight}. Eq. \eqref{eq:proof_BW_decomposition} becomes
\begin{eqnarray}
    &&\ket{\Phi_{M}^{\boldsymbol{\varphi}, \boldsymbol{\theta}}} \equiv  \sum_{\vec{x}\in \mathcal{D}_{M}^{\mathcal{L}}} \sum_{P \in S_M}  \Theta[P] \left( \prod_{j = 1}^{M}\varphi_{P_{j}x_j} \right) \ket{\vec{x}}\nn\\
    &&=\sum_{m = 0}^{M} ~~ \sum_{\vec{a} \in \mathcal{C}^M_m}\Theta[Q^{\vec{a},\vec{b}}]~~~ \times \nn\\
    &&
    \left(\sum_{\vec{x}_A \in \mathcal{D}^A_{M_A}} \sum_{R \in \mathcal{S}_A} \Theta^{\vec{a}}[R] \left( \prod_{j \in \vec{a}} \varphi_{P_{j}x_j} \!\right)\! \ket{\vec{x}_A}  \right)\times\nn\\
    &&
    \left(\sum_{\vec{x}_B \in \mathcal{D}^B_{M_B}}\sum_{S \in \mathcal{S}_B}   \Theta^{\vec{b}}[S] \left(\prod_{j \in \vec{b}}\varphi_{P_{j}x_j} \!\right)\! \ket{\vec{x}_B}  \right) \nn\\
    &&= \sum_{m=0}^{M} \sum_{\vec{a} \in \mathcal{C}^{M}_{m}} \Theta\!\left[Q^{\vec{a},\vec{b}}\right]
    \ket{\Phi_{m}^{\boldsymbol{\varphi}_{A}^{\vec{a}}, \boldsymbol{\theta}^{\vec{a}}}}_A
    \ket{\Phi_{m'}^{\boldsymbol{\varphi}_{B}^{\vec{b}}, \boldsymbol{\theta}^{\vec{b}}}}_B,
\end{eqnarray}
where the only changes are that we replaced the plane-wave amplitudes $e^{i(\vec{k}^{\vec{a}})^{\!R}\cdot\vec{x}_A}$ and $e^{i(\vec{k}^{\vec{b}})^{\!S}\cdot\vec{x}_B}$ with the more general amplitudes $\prod_{j \in \vec{a}}\varphi_{R_{j}x_j}$ and $\prod_{j \in \vec{b}}\varphi_{S_{j}x_j}$ and, accordingly, the generalized local Bethe wavefunctions in parts $A$ and $B$ read 
\begin{eqnarray} \label{eq:local_GBW}
&&\ket{\Phi_{m}^{\boldsymbol{\varphi}_{A}^{\vec{a}}, \boldsymbol{\theta}^{\vec{a}}}} \equiv  \sum_{\vec{x}_A \in \mathcal{D}^A_{m}} \sum_{R \in \mathcal{S}_A} \Theta^{\vec{a}}[R]\left(\prod_{j \in \vec{a}}\varphi_{R_{j}x_j} \!\right)\!\ket{\vec{x}_A}, \nn\\
&&\ket{\Phi_{m'}^{\boldsymbol{\varphi}_{B}^{\vec{b}}, \boldsymbol{\theta}^{\vec{b}}}} \equiv \sum_{\vec{x}_B \in \mathcal{D}^B_{m'}}\sum_{S \in \mathcal{S}_B}   \Theta^{\vec{b}}[S] \left(\prod_{j \in \vec{b}}\varphi_{S_{j}x_j}\!\right)\! \ket{\vec{x}_B}.~~~~~~ 
\end{eqnarray}
Here $\boldsymbol{\varphi}^{\vec{a}}_A$ is the $m\times N_A$ matrix (where $m$ is the number of symbols in choice $\vec{a}$ and $N_A$ is the number of sites in part $A$) that is obtained from the $M\times N$ matrix $\boldsymbol{\varphi}$ by keeping the $N_A$ first elements in the $m$ rows indicated by the components of $\vec{a}$, whereas $\boldsymbol{\varphi}^{\vec{b}}_B$ the $m'\times N_B$ matrix by keeping the $N_B$ last elements in the $m'$ rows indicated by the components of $\vec{b}$. 

For the construction of exact MPS and TTN representations, tensor $\mathbb{T}$ in Eq. \eqref{eq:tensor_t} remains unchanged except that it now depends on possibly complex scattering angles. On the other hand, Eqs. \eqref{eq:vec_a_i} and \eqref{eq:tensor_s} for a local Bethe wavefunction $\ket{\vec{a}_i}$ in part $A_i$ and the corresponding components of tensor $\mathbb{S}[i]$ are now simply replaced with
\begin{equation}\label{eq:vec_a_i_gen}
    \ket{\vec{a}_i} \equiv  \sum_{\vec{x}_{A_i}\in \mathcal{D}_{m}^{A_i}} \sum_{R \in S_m}  \Theta^{\vec{a}_i}[R]  \left(\prod_{j \in \vec{a}_i} \varphi_{R_{j}x_j} \!\right)\! \ket{\vec{x}_{A_i}},
\end{equation}
and 
\begin{eqnarray}\label{eq:tensor_s_gen}
\mathbb{S}[i]^{\vec{a}_i}_{\vec{\sigma}_{A_i}} =  \sum_{R \in S_m}  \Theta^{\vec{a}_i}[R]  \sum_{\vec{x}\in \mathcal{D}_{m}^{A_i}} \left(\prod_{j \in \vec{a}_i} \varphi_{R_{j}x_j} \!\right)\! \braket{\vec{\sigma}_{A_i}}{\vec{x}}. ~~~~
\end{eqnarray}

In summary, we have generalized the Bethe wavefunction by introducing generalizations of the two ingredients ($M$ distinct single-particle wavefunctions and $M(M-1)/2$ scattering angles) used to build them, and by then following essentially the same recipe to turn those ingredients into an $M$-particle wavefunction, namely gluing together parts of the single-particle wavefunctions using 
a scattering amplitude $\Theta[P]$ that factorizes into products of 2-particle scattering amplitudes. Since a number of derivations in this paper were based on that recipe, they are also valid for the resulting generalized Bethe wavefunction introduced in this section.

Identifying in which systems the generalized Bethe wavefunction may result useful as an exact or approximate ansatz is beyond the scope of this work. However, it is natural to speculate that it may find applications in the study of particle-preserving systems that are not translation invariant, where plane-wave states do not represent single-particle eigenstates.

\section{Concluding Remarks }\label{sec:conclusion}

The Bethe wavefunction was originally introduced almost a century ago by Hans Bethe in order to solve the spin half Heisenberg chain, and has since become key to solving a number of integrable one dimensional systems~\cite{Bethe_1931, LiebLiniger_1963, Yang_Yang_1966, Lieb_Wu_1968, Faddeev_aba, Korepin_Bogoliubov_Izergin_1993, Caux_2011, wouters2015quench, Franchini_2017}. Our work is based on applying the lens of quantum information and quantum computation ~\cite{Vidal_2003, Kitaev_2006, Levin_2006, Hastings_2007, fannes_finitely_1992, latorre2004groundstateentanglementquantum, Verstraete_2006, verstraete2004renormalizationalgorithmsquantummanybody, Shi_2006, Vidal_2007_mera, Vidal_2008_mera, Schon_2005,kim2017robustentanglementrenormalizationnoisy, Wei2022, FossFeig_2021, Malz_2024} to such wavefunctions and builds upon, and overlaps with, a number of other recent, related contributions~\cite{Alcaraz_2006, Katsura_2010, Murg_2012,nepomechie2021bethe, Van_Dyke_2021, Van_Dyke_2022, Sopena_2022, raveh2024deterministic, Ruiz_2024, ruiz2024betheansatzquantumcircuits, sewell2021preparingrenormalizationgroupfixed, Anand_2023, haghshenas2023probingcriticalstatesmatter}.

Here we have uncovered a remarkable property of a Bethe wavefunction, namely that it accepts a fractal decomposition that expresses it as a linear combination of products of local Bethe wavefunctions. For an $M$-particle Bethe wavefunction on a 1d lattice $\mathcal{L}$ made of $N$ sites, the fractal bipartite decomposition includes at most $2^M$ terms, leading to an $\chi=2^M$ upper bound on the bipartite Schmidt rank that is independent of the size $N$ of the lattice as well as the size of the two parts. More generally, when  lattice $\mathcal{L}$ is divided into $L$ parts, the fractal multipartite decomposition is also always possible and contains at most $L^M$ terms, leading to a $\chi=L^M$ upper bound on the multipartite Schmidt rank. This signals at a particularly restrictive entanglement structure, when compared to generic $M$-particle wavefunctions or even $M$-particle ground states of particle-preserving local Hamiltonians.

The existence of such fractal decompositions has non-trivial consequences for the tensor network representability of Bethe wavefunctions. We provided exact tensor network representations, with bond dimension $\chi=2^M$, for any planar TTN, which includes an MPS and a regular binary TTN as particular examples. From the regular binary TTN, we obtained a $\log(N)$ depth adaptive quantum circuit to prepare Bethe wavefunctions on a quantum circuit. This logarithmic depth is an exponential improvement over previous linear depth unitary preparations based on the MPS representations.

It is natural to ask how our results may contribute to solving many-body problems - the original purpose of Bethe wavefunctions. The following three lines of investigation deserve further attention. On the one hand, the tensor network representations provided here, including the homogeneous MPS and TTN representations, open the path to highly efficient ways of computing overlaps and norms, as well as long-range and high-order correlation functions - and we provided some partial analysis of such possibilities. Such exact tensor network representations can also be useful in order to benchmark the performance of novel algorithms to optimize a tensor network to represent a ground state or an excited state, since it provides exact solutions that the optimization algorithm should be able to find.
 
Secondly, the $O(\log(N))$-depth proposed quantum circuits may boost the interest in realizing Bethe wavefunctions on a quantum computer, given that they require significantly less time/gates than previous proposals and, as a result, may be significantly easier to implement in current and short-term NISQ devices, where they could be used to explore dynamical properties that are otherwise challenging to simulate classically. Notice that, additionally, the quantum circuit based on our TTN representation can be very significantly simplified (by only implementing the gates in the past causal cone of pre-selected lattice sites) in order to compute long-range or high-order correlation functions. 
 
Finally, the generalized Bethe wavefunction that we put forward, which also accepts exact tensor network and quantum circuit representations, may become useful as an ansatz for models that are not translation invariant, including those obtained by local deformations of translation invariant, integrable models.


\textbf{Acknowledgements:}

We thank Roberto Ruiz, Esparanza L\'opez, and Germ\'an Sierra for a careful reading of the manuscript and thoughtful suggestions. G.V. is a CIFAR associate fellow in the Quantum Information Science Program, a Distinguished Invited Professor at the Institute of Photonic Sciences (ICFO), and a Distinguished Visiting Research Chair at Perimeter Institute.
Research at Perimeter Institute is supported in part by the Government of Canada through the Department of Innovation, Science and Economic Development Canada and by the Province of Ontario through the Ministry of Colleges and Universities.

\bibliography{SciPost_Example_BiBTeX_File.bib}

\appendix
\section*{Acknowledgements}
Acknowledgements should follow immediately after the conclusion.

\paragraph{Author contributions}
This is optional. If desired, contributions should be succinctly described in a single short paragraph, using author initials.

\paragraph{Funding information}
Authors are required to provide funding information, including relevant agencies and grant numbers with linked author's initials. Correctly-provided data will be linked to funders listed in the \href{https://www.crossref.org/services/funder-registry/}{\sf Fundref registry}.







\end{document}